\documentclass[a4paper,12pt]{article}

\usepackage{graphicx,xspace,amsmath,enumitem,verbatim}
\usepackage{wrapfig,subfigure}
\usepackage{amsfonts,amssymb,bm,amsthm,fullpage,charter}
\usepackage{caption}

\input{psfig.sty}

%% use concrete math fonts (including CMSS demibold condensed as bfseries)
\usepackage[boldsans]{concmath} %
\usepackage{euler} %
\usepackage{rsfso}

\newtheorem{lemma}[equation]{Lemma}
\newtheorem{corollary}[equation]{Corollary}
\newtheorem{proposition}[equation]{Proposition}
\newtheorem{theorem}[equation]{Theorem}

\newtheorem{observation}[equation]{Observation}

\newtheorem{problem}{Problem}

\def\Case#1{\medskip\ni{\bf Case\ }#1{\bf.}}
\def\Fact{\medskip\ni{\bf Fact.\ }}

% *************************************************************

\def\ITEMMACRO #1 ??? #2 ???{\smallskip\par\noindent
%% % Vorsicht mit Spaces in diesem Macro
\hangindent=#2em\setbox0\hbox{#1 \kern5pt}%
\ifdim\wd0<\hangindent\setbox0\hbox to\hangindent{\hss#1\quad}\fi%
\box0\ignorespaces}

\def\Item#1{\ITEMMACRO {\rm #1} ??? 1.8 ???}
\def\FreeItem#1{\ITEMMACRO {#1} ??? 1.8 ???}

\let\Bitem=\bItem
\def\BrackItem[#1]{\ITEMMACRO [#1] ??? 1.8 ???}

% *************************************************************
\def\ni{\noindent}
\def\qedclaim{\hfill$\triangle$\smallskip}

\usepackage[centertableaux,mathmode,notabloids]{ytableau}

\newcommand{\triple}[6]{
\hbox{\small
\begin{ytableau}
\none[#1] & #4        & #5 \\
\none     & \none[#2] & #6 \\
\none     & \none     & \none[#3]
\end{ytableau}
}}

\newcommand{\nitriple}[3]{
\hbox{\small
\begin{ytableau}
    #1  & #2 \\
\none   & #3 
\end{ytableau}
}}

\DeclareMathOperator{\type}{type}
\DeclareMathOperator{\inv}{inv}
\DeclareMathOperator{\id}{id}
\def\ovl{\overline}

\def\ered{e_{red}}
\def\egreen{e_{green}}
\def\eblue{e_{blue}}

\graphicspath{{./Figures/}}
%----------------------- Title -------------------------------------------
\title{Topological Drawings of Complete Bipartite Graphs}
\author{Jean Cardinal~\thanks{Universit\'e libre de Bruxelles (ULB), Brussels, Belgium. {\tt jcardin@ulb.ac.be}.}
  \and Stefan Felsner~\thanks{Technische Universit\"at Berlin, Germany. {\tt felsner@math.tu-berlin.de}.}}

\date{February 2017}

\begin{document}
\maketitle

\ni\hrulefill

\begin{abstract}
  Topological drawings are natural representations of graphs in the
  plane, where vertices are represented by points, and edges by curves
  connecting the points.
  Topological drawings of complete graphs and of complete bipartite
  graphs have been studied extensively in the context of crossing number problems.
  We consider a natural class of simple topological drawings of {\em complete bipartite} graphs,
  in which we require that one side of the vertex set bipartition lies on the outer boundary of the drawing.

  We investigate the combinatorics of such drawings.  For this
  purpose, we define combinatorial encodings of the drawings by
  enumerating the distinct drawings of subgraphs isomorphic to
  $K_{2,2}$ and $K_{3,2}$, and investigate the constraints they must
  satisfy. We prove that a drawing of $K_{k,n}$ exists if and only
  if some simple local conditions are satisfied by the encodings.
  This directly yields a polynomial-time algorithm for deciding
  the existence of such a drawing given the encoding. We show the encoding
  is equivalent to specifying which pairs of edges cross, yielding
  a similar polynomial-time algorithm for the realizability of abstract
  topological graphs.

  We also completely characterize and enumerate such drawings of $K_{k,n}$
  in which the order of the edges around each vertex is the same for vertices
  on the same side of the bipartition.
  Finally, we investigate drawings of $K_{k,n}$ using straight lines and pseudolines,
  and consider the complexity of the corresponding realizability problems.
  \end{abstract}

\tableofcontents

%%%%%%%%%%%%%%%%%%%%%%%%%%%%%%%%%%%%%%%%%%%%%%%%%%%%%%%%%%%%%%%%%%%%%%%%%%%
\section{Introduction}
%%%%%%%%%%%%%%%%%%%%%%%%%%%%%%%%%%%%%%%%%%%%%%%%%%%%%%%%%%%%%%%%%%%%%%%%%%%

We consider {\em topological graph drawings}, which are drawings of
simple undirected graphs where vertices are represented by points in
the plane, and edges are represented by simple curves that connect the
corresponding points. We typically restrict those drawings to satisfy
some natural nondegeneracy conditions. In particular, we consider {\em
  simple} drawings, in which every pair of edges intersect at most
once. A common vertex counts as an intersection. 

While being perhaps the most natural and the most used representations
of graphs, simple drawings are far from being understood from the
combinatorial point of view. A prominent illustration is the problem of
identifying the minimum number of edge crossings
in a simple topological drawing of $K_n$~\cite{HH63,BK64,AAFRS13} or
of $K_{k,n}$~\cite{Z54,CRS13}, for which there are long standing conjectures.

In order to cope with the inherent complexity of the drawings, it is
useful to consider combinatorial abstractions. Those abstractions are
discrete structures encoding some features of a drawing.  One such
abstraction, introduced by Kratochv\'il, Lubiw, and Ne\v{s}et\v{r}il,
is called {\em abstract topological graphs} (AT-graph)~\cite{KLN91}.
An AT-graph consists of a graph $(V,E)$ together with a set ${\cal
  X}\subseteq \binom{E}{2}$.  A topological drawing is said to {\em
  realize} an AT-graph if the pairs of edges that cross are exactly
those in $\cal X$.  Another abstraction of a topological drawing is
called the {\em rotation system}. The rotation system associates a
circular permutation with every vertex $v$, which in a realization
must correspond to the order in which the neighbors of $v$ are
connected to $v$.  Natural realizability problems are:
given an AT-graph or a rotation system, is it realizable as a
topological drawing?  The realizability problem for AT-graphs
is known to be NP-complete~\cite{KM89}.

For simple topological drawings of complete graphs, the two
abstractions are actually equivalent~\cite{PT06}. It is possible
to reconstruct the set of crossing pairs of edges by looking at the
rotation system, and vice-versa (up to reversal of all permutations).
Kyn\v cl recently proved the
remarkable result that a complete AT-graph (an AT-graph for which the
underlying graph is complete) can be realized as a simple topological
drawing of $K_n$ if and only if all the AT-subgraphs on at most 6
vertices are realizable~\cite{K11,K15}.  This directly yields a
polynomial-time algorithm for the realizability problem.  While this
provides a key insight on topological drawings of complete graphs,
similar realizability problems already appear much more difficult when
they involve complete {\em bipartite} graphs.  In that case, knowing
the rotation system is not sufficient for recontructing the
intersecting pairs of edges.

We propose a fine-grained analysis of simple topological drawings of
complete bipartite graphs. In order to make the analysis more
tractable, we introduce a natural restriction on the drawings, by
requiring that one side of the vertex set bipartition lies on a circle
at infinity. This gives rise to meaningful, yet complex enough,
combinatorial structures.

%%%%%%%%%%%%%%%%%%%%%%%%%%%%%%%%%%%%%%%%%%%%%%%%%%%%%%%%%%%%%%%%%%%%%%%%%%%
\paragraph{\bf Definitions.} 

We wish to draw the complete bipartite graph $K_{k,n}$ in the plane in
such a way that:
\begin{enumerate}
\item vertices are represented by points,
\item edges are continuous curves that connect those points, and do
  not contain any other vertices than their two endpoints
\item no more than two edges intersect in one point,
\item edges pairwise intersect at most once; in particular, edges
  incident to the same vertex intersect only at this vertex,
\item the $k$ vertices of one side of the bipartition lie on the outer
  boundary of the drawing.
\end{enumerate}

Properties 1--4 are the usual requirements for {\em simple topological
  drawings} also known as {\em good drawings}. As we will see,
property~5 leads to drawings with interesting combinatorial
structures. We will refer to drawings satisfying
properties 1--5 as {\em outer drawings}. Since this is the only type
of drawings we consider, we will use the single term {\em drawing}
instead when the context is clear.

The set of vertices of a bipartite graph $K_{k,n}$ will be denoted by
$P\cup V$, where $P$ and~$V$ are the two sides of the bipartition,
with $|P|=k$ and $|V|=n$. When we consider a given drawing, we will
use the word ``vertex'' and ``edge'' to denote both the vertex or edge
of the graph, and their representation as points and curves.  Without
loss of generality, we can assume that the $k$ outer vertices
$p_1,\ldots,p_k$ lie in clockwise order on the boundary of a disk that
contains all the edges, or on the line at infinity. The vertices of
$V$ are labeled $1,\ldots,n$. An example of such a drawing is given in
Figure~\ref{fig:example-1}.  \medskip

%%%%%%%%%%%%%%%%%%%%%%%%%%%%%%%%%%%%%%%%%%%%%%%%%%%
%%
% in einem figure environment mit caption
   \calc_figscale{14}
    \begin{figure}[htb]
    \centerline{\input{\path/example-1.pstex_t}}
    \caption{\label{fig:example-1}}
    \end{figure}
    VC
{Two outer drawings of $K_{3,5}$. In both drawings the rotation system is $(12345,21435,13254)$.}
%%
%%%%%%%%%%%%%%%%%%%%%%%%%%%%%%%%%%%%%%%%%%%%%%%%%%%

The {\em rotation system} of the drawing is a sequence of $k$
permutations on $n$ elements associated with the vertices of $P$ in
clockwise order. For each vertex of $P$, its permutation encodes the
(say) counterclockwise order in which the $n$ vertices of $V$ are
connected to it. Due to our last constraint on the drawings, the
rotations of the $k$ vertices of~$P$ around each vertex of $V$ are
fixed and identical, they reflect the clockwise order of
$p_1,\ldots,p_k$ on the boundary.

Unlike for complete graphs, the rotation system of an outer drawing of a
complete bipartite graph does not completely determine which pairs of
edges are intersecting. This is exemplified with the two drawings in
Figure~\ref{fig:example-1}.

%%%%%%%%%%%%%%%%%%%%%%%%%%%%%%%%%%%%%%%%%%%%%%%%%%%%%%%%%%%%%%%%%%%%%%%%%%%
\paragraph{\bf Results.}

The paper is organized as follows. In Section~\ref{sec:uniform}, we
consider outer drawings with a {\em uniform} rotation system, in which the
$k$ permutations of the vertices of~$P$ are all equal to the
identity. In this case, we can state a general structure theorem that
allows us to completely characterize and count outer drawings of arbitrary
bipartite graphs $K_{k,n}$.

In Section~\ref{sec:k=2}, we consider outer drawings of $K_{2,n}$ with
arbitrary rotation systems. We consider a natural combinatorial
encoding of such drawings, and state two necessary consistency
conditions involving triples and quadruples of points in~$V$.  We show
that these conditions are also sufficient, yielding a polynomial-time
algorithm for checking consistency of a drawing.

We also observe that our encoding is equivalent to specifying which
pairs of edges must intersect in the drawing, hence exactly encodes the
corresponding AT-graph. Therefore we show as a corollary that we
can decide the realizability of a given AT-graph with underlying graph
isomorphic to $K_{2,n}$ in polynomial time.

In Section~\ref{sec:k=3} and \ref{sec:arbk}, we extend these results,
first to outer drawings of $K_{3,n}$, then to outer drawings of $K_{k,n}$.
We prove that simple consistency conditions on triples and quadruples are
sufficient for drawings of $K_{k,n}$, yielding again a polynomial-time
algorithm for consistency checking.

In Section~\ref{sec:extendable}, we consider outer drawings with the additional
property that the edges can be extended into a pseudoline arrangement,
which we refer to as {\em extendable} drawings.
We give a necessary and sufficient condition for the existence
of an extendable outer drawing of a complete bipartite graph given the
rotation system. We also touch upon the even more restricted problem
of finding straight-line outer drawings with prescribed rotation systems.

%%%%%%%%%%%%%%%%%%%%%%%%%%%%%%%%%%%%%%%%%%%%%%%%%%%%%%%%%%%%%%%%%%%%%%%%%%%%%%%%
\section{Outer Drawings with uniform rotation system}\label{sec:uniform}
%%%%%%%%%%%%%%%%%%%%%%%%%%%%%%%%%%%%%%%%%%%%%%%%%%%%%%%%%%%%%%%%%%%%%%%%%%%%%%%%%

We first consider the case where $k$ is arbitrary but the rotation
system is uniform, that is, the permutation around each of the $k$
vertices $p_i$ is the same. Without loss of generality we assume
that this permutation is the identity permutation on $[n]$. 

In a given outer drawing, each of the $n$ vertices of $V$ splits the plane
into $k$ regions $Q_1, Q_2, \ldots ,Q_k$, where each $Q_i$ is bounded
by the edges from $v$ to $p_i$ and $p_{i+1}$, with the understanding
that $p_{k+1}=p_1$.  We denote by
$Q_i(v)$ the $i$th region defined by vertex~$v$ and further on call
these regions {\em quadrants}. We let $\type (a,b)
= i$, for $a,b\in V$ and $i\in [k]$, whenever $a\in Q_i(b)$. 
This implies that $b\in Q_i(a)$, see Figure~\ref{fig:quadrants}.
Indeed if $a<b$ and $j\neq i+1$, then edge $p_{i+1}b$ has to intersect
all the edges $p_ja$, while edge $p_jb$ has to
avoid $p_{i+1}b$ until they meet in $b$. It follows that
none of the edges $p_jb$ can intersect~$p_{i+1}a$.
This shows that $a \in Q_i(b)$.

%%%%%%%%%%%%%%%%%%%%%%%%%%%%%%%%%%%%%%%%%%%%%%%%%%%
%%
% in einem figure environment mit caption
   \calc_figscale{30}
    \begin{figure}[htb]
    \centerline{\input{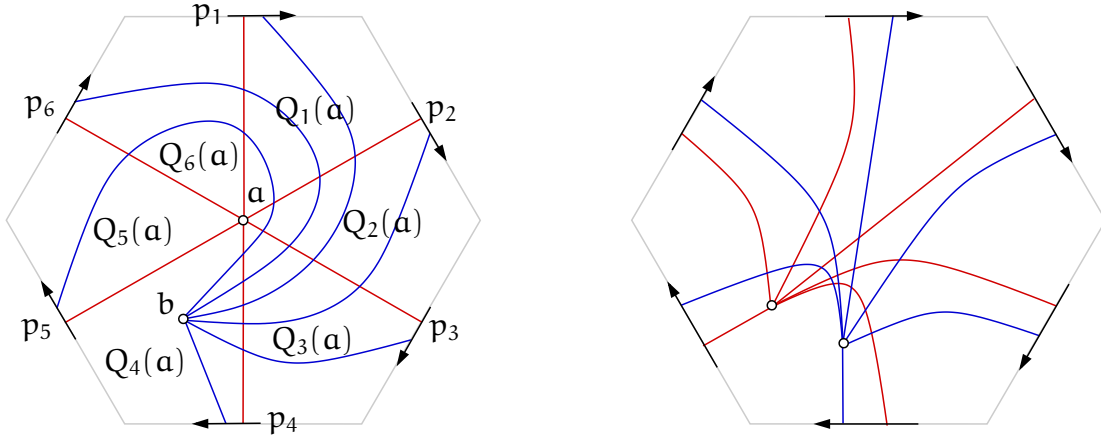}}
    \caption{Having placed $b$ in $Q_4(a)$ the crossing
  pairs of edges and the order of crossings on each edge is prescribed. 
  In particular $a\in Q_4(b)$. On the right a symmetric outer drawing of the
pair.\label{fig:quadrants}}
    \end{figure}
    
%%
%%%%%%%%%%%%%%%%%%%%%%%%%%%%%%%%%%%%%%%%%%%%%%%%%%%

\begin{observation}[Symmetry]\quad\\
For all $a$, $b$ in uniform rotation systems:\quad $\type (a,b) = \type (b,a).$
\end{observation}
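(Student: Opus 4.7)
My plan is to fix $a,b\in V$ with $\type(b,a)=i$, so $b\in Q_i(a)$, and show $a\in Q_i(b)$. Without loss of generality I may assume $a<b$, so that in the uniform rotation the edge to $a$ precedes the edge to $b$ at every $p_j$. The conclusion $\type(a,b)=i$ then follows by symmetry of $a$ and $b$ in the hypothesis.

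First I would unpack what $b\in Q_i(a)$ means geometrically: the star $\{ap_1,\dots,ap_k\}$ together with the outer boundary cuts the disk into $k$ regions, and $b$ is in the region $Q_i(a)$ whose boundary consists of the edges $ap_i$, $ap_{i+1}$, and the outer arc from $p_i$ to $p_{i+1}$ that contains no other $p_\ell$. In particular, every $p_j$ with $j\notin\{i,i+1\}$ lies \emph{outside} $Q_i(a)$, so each edge $p_j b$ goes from outside to inside $Q_i(a)$ and must therefore cross the boundary an odd number of times; by simplicity and because $p_j b$ shares no endpoint with $ap_i$ or $ap_{i+1}$ when $j\notin\{i,i+1\}$, it must cross exactly one of $ap_i$, $ap_{i+1}$.

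The crucial step is to pin down \emph{which} of the two boundary edges is crossed, and here uniformity does the work. Following the sketch already given in the excerpt, I would argue that $p_{i+1}b$ must cross every edge $p_j a$ with $j\neq i+1$: since $b$ lies inside $Q_i(a)$ and the ccw order at $p_{i+1}$ places $a$ immediately before $b$, the curve $p_{i+1}b$ has to thread its way through the star at $a$ to reach $Q_i(a)$, and the only homotopy class of simple curves from $p_{i+1}$ to $b$ that is consistent with the fixed rotation at $b$ (the cw order $p_1,\dots,p_k$) is the one crossing each $p_j a$ exactly once. A symmetric argument at $p_i$ then forces each edge $p_j b$ to avoid $p_{i+1}a$: such a crossing would force $p_j b$ and $p_{i+1}b$ to meet away from $b$, contradicting simplicity. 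The same reasoning with $p_i$ playing the role of $p_{i+1}$ shows no $p_j b$ crosses $p_i a$ either.

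Combining these observations, no edge $bp_j$ (with $j\neq i,i+1$) separates $a$ from the interior of the cone at $b$ between $bp_i$ and $bp_{i+1}$, and the cone itself is bounded only by $bp_i$ and $bp_{i+1}$, so $a$ lies in $Q_i(b)$, giving $\type(a,b)=i=\type(b,a)$. The main obstacle is the middle step: making rigorous the topological claim that uniformity plus $b\in Q_i(a)$ forces $p_{i+1}b$ to cross each $p_j a$. I would pin this down by a Jordan-curve argument applied to the simple closed curve formed by $p_{i+1}b$, $bp_j$, and $p_j p_{i+1}$ along the outer boundary, using the fact that $a$ lies on the ``wrong'' side by the rotation information at $b$; once that separation is established, simplicity of the drawing forces the required crossing.
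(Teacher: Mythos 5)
Your setup (reduce to $a<b$, show $b\in Q_i(a)\Rightarrow a\in Q_i(b)$), your counting argument that every edge $p_jb$ with $j\notin\{i,i+1\}$ crosses exactly one of $ap_i$, $ap_{i+1}$, and your first key claim --- that $p_{i+1}b$ must cross every $p_ja$ with $j\neq i+1$ --- all match the paper's argument (though the forcing there comes from the rotation at $p_{i+1}$, which makes $p_{i+1}b$ depart on the $Q_{i+1}(a)$-side of $p_{i+1}a$ so that it must wind around the star of $a$; the rotation at $b$, which you invoke, plays no role in this step). The genuine error comes next: the assertion that ``the same reasoning with $p_i$ playing the role of $p_{i+1}$ shows no $p_jb$ crosses $p_ia$ either'' is false, and it contradicts your own second paragraph. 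Since each $p_jb$ with $j\notin\{i,i+1\}$ crosses exactly one of $ap_i$, $ap_{i+1}$, and (correctly) none of them crosses $ap_{i+1}$, they must all cross $ap_i$ --- as does $p_{i+1}b$ itself. There is no symmetry between $p_i$ and $p_{i+1}$ here: because $a<b$, the edge $p_ib$ departs $p_i$ directly into $Q_i(a)$ and crosses nothing, while $p_jb$ for $j\notin\{i,i+1\}$ crosses $ap_{j+1},\dots,ap_i$ on its way around, and $p_{i+1}b$ crosses everything except $ap_{i+1}$.

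Because your conclusion leans on this false ``$p_i$-symmetric'' claim, the last step does not go through: knowing only that $ap_{i+1}$ avoids the star of $b$ (it cannot cross $bp_{i+1}$, and by your correct claim it crosses no $bp_j$) places $a$ in one of the two quadrants $Q_i(b)$, $Q_{i+1}(b)$ incident to $p_{i+1}$, and your ``cone'' sentence never says why it is $Q_i(b)$ rather than $Q_{i+1}(b)$. The missing ingredient is a second use of the uniform rotation at $p_{i+1}$: since $a<b$, the edge $p_{i+1}a$ leaves $p_{i+1}$ on the $p_i$-ward side of $p_{i+1}b$, i.e.\ locally inside $Q_i(b)$; as it meets no edge of the star of $b$, it stays in $Q_i(b)$, hence $a\in Q_i(b)$. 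With that replacement your proof becomes essentially the paper's ($p_{i+1}b$ crosses all $p_ja$; no $p_jb$ crosses $p_{i+1}a$; conclude $a\in Q_i(b)$), but as written the argument contains a false intermediate claim and an unjustified final identification of the quadrant.
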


For the case $k=2$, we have exactly two types of pairs, that we will
denote by $A$ and $B$.  The two types are illustrated on
Figure~\ref{fig:pseudolines-1}. Note that the two types can be
distinguished by specifying which are the pairs of intersecting edges.
The outer drawings of $K_{2,n}$ with uniform rotations can be 
viewed as {\em colored pseudoline
  arrangements}, where:
\begin{itemize}
\item each pseudoline is split into two segments of distinct colors,
\item no crossing is monochromatic.
\end{itemize}
%%%%%%%%%%%%%%%%%%%%%%%%%%%%%%%%%%%%%%%%%%%%%%%%%%%
%%
% in einem figure environment mit caption
   \calc_figscale{30}
    \begin{figure}[htb]
    \centerline{\input{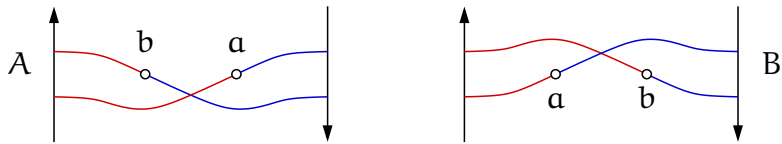}}
    \caption{The two types of pairs for outer drawings of
  $K_{2,n}$ with uniform rotation systems.\label{fig:pseudolines-1}}
    \end{figure}
    
%%
%%%%%%%%%%%%%%%%%%%%%%%%%%%%%%%%%%%%%%%%%%%%%%%%%%%

This is illustrated on Figure~\ref{fig:pseudolines-2}.  The pseudoline
of a vertex $v\in V$ is denoted by~$\ell(v)$. The left (red) and right
(blue) parts of this pseudoline are denote by $\ell_L (v)$
and~$\ell_R(v)$.  Now having $\type (a,b) = \type (b,a) = A$ means
that $b$ lies {\bf a}bove $\ell(a)$ and $a$ lies {\bf
  a}bove~$\ell(b)$. While having $\type (a,b) = \type (b,a) = B$ means
that $b$ lies {\bf b}elow $\ell(a)$ and $a$ lies {\bf
  b}elow~$\ell(b)$.

%%%%%%%%%%%%%%%%%%%%%%%%%%%%%%%%%%%%%%%%%%%%%%%%%%%
%%
\def\QTableOne{
\begin{ytableau}
\none[1]  &  B        & B        & B \\
\none     & \none[2]  & B        & A \\
\none     & \none     & \none[3] & A \\
\none     & \none     &\none     & \none[4]
\end{ytableau}
}
% in einem figure environment mit caption
   \calc_figscale{30}
    \begin{figure}[htb]
    \centerline{\input{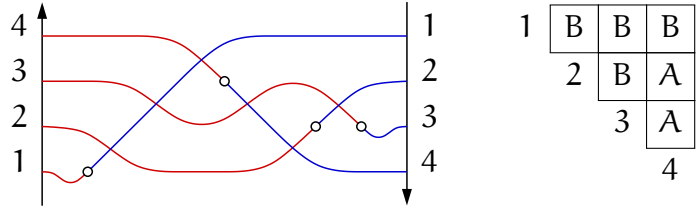}}
    \caption{Drawing $K_{2,4}$ as a colored pseudoline
  arrangement. The type of each pair is given in the table on the right.\label{fig:pseudolines-2}}
    \end{figure}

%%%%%%%%%%%%%%%%%%%%%%%%%%%%%%%%%%%%%%%%%%%%%%%%%%%%%%%%%%%%%%%%%%%%%%%%%%%%%%%%
\subsection{The triple rule}

\begin{lemma}[Triple rule]
     \label{lem:triple}\quad\\
For uniform rotation systems and three vertices $a,b,c\in V$ with $a<b<c$\\[1mm]
\centerline{$ \type (a,c)\in \{ \type (a,b), type (b,c) \}.$}
\end{lemma}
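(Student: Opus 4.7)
The plan is to reduce the lemma to the case $k=2$ and then handle that base case via the pseudoline interpretation introduced earlier in this section. Set $m=\type(a,c)$, and consider the sub-drawing of $K_{2,n}$ obtained from the original by deleting all outer vertices except $p_m$ and $p_{m+1}$ (together with their incident edges). This sub-drawing is a valid outer drawing of $K_{2,n}$ with uniform rotation system (inherited from the original), and for every $v\in V$ the wedge $Q_m(v)$ in the original coincides with one of the two wedges of $v$ in the sub-drawing---namely the one adjacent to the short outer arc between $p_m$ and $p_{m+1}$. Consequently, up to a relabeling of the two $k=2$ types, ``$\type(x,y)=m$'' in the original is equivalent to the sub-type of $(x,y)$ being $A$ in the sub-drawing, so the general triple rule reduces to the $k=2$ triple rule with uniform rotation.

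For $k=2$, I view the drawing as a colored pseudoline arrangement: each $v\in V$ lies on a pseudoline $\ell(v)=p_m v\cup p_{m+1} v$, split at $v$ into a red segment $\ell_L(v)$ and a blue segment $\ell_R(v)$. The simple-drawing property forbids monochromatic crossings, and the uniform rotation forces each pair $\ell(x),\ell(y)$ to cross exactly once in the interior. For $x<y$, the crossing $X_{xy}$ therefore lies either on $\ell_L(x)\cap\ell_R(y)$ (type $A$) or on $\ell_R(x)\cap\ell_L(y)$ (type $B$). After applying a planar homeomorphism that renders every pseudoline monotone in the horizontal direction, let $h(v)$ denote the horizontal coordinate of vertex $v$. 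Then for $x<y$, type $A$ is equivalent to $h(y)<h(x)$ and type $B$ to $h(x)<h(y)$: indeed, $X_{xy}\in\ell_L(x)$ means $h(X_{xy})<h(x)$, $X_{xy}\in\ell_R(y)$ means $h(X_{xy})>h(y)$, and the forbidden monochromatic alternatives are ruled out by the no-monochromatic-crossing condition.

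The triple rule now follows from a direct case analysis on the horizontal order of $a,b,c$. If $\type(a,c)=A$, then $h(c)<h(a)$. When $h(b)<h(a)$, we obtain $\type(a,b)=A$ (since $a<b$); otherwise $h(a)\leq h(b)$ gives $h(c)<h(b)$, hence $\type(b,c)=A$ (since $b<c$). Either way $\type(a,c)\in\{\type(a,b),\type(b,c)\}$, and the case $\type(a,c)=B$ is entirely symmetric. The main obstacle I foresee is justifying the $x$-monotone parametrization and the resulting type-to-horizontal-position correspondence cleanly; this can be done either via standard planar-homeomorphism arguments for pseudoline arrangements, or by working combinatorially with the order of events along each pseudoline, using that the uniform rotation reverses the pseudoline ordering between $p_m$ and $p_{m+1}$.
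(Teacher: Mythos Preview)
Your proof is correct, but the route differs from the paper's in two respects.

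For the general case $k>2$, the paper does \emph{not} reduce to $k=2$. Instead it argues directly: setting $i=\type(a,b)$ and $j=\type(a,c)$ with $i\neq j$, it observes that $c\in Q_j(a)$ forces $p_{j+1}a$ and $p_jc$ to be disjoint, and since $p_jb$ and $p_jc$ share only the endpoint $p_j$, the vertex $c$ is trapped in the region bounded by $p_jb$ and $p_{j+1}a$, which is contained in $Q_j(b)$; hence $\type(b,c)=j=\type(a,c)$. Your reduction---restricting to $p_m,p_{m+1}$ with $m=\type(a,c)$ and noting that ``type $m$ in the original'' corresponds to a fixed one of the two $k=2$ types---is a legitimate alternative, and in fact is exactly the device the paper itself uses later for the quadruple rule (Lemma~\ref{lem:qrule}). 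So your approach is more uniform across the two lemmas, while the paper's direct argument for $k>2$ is self-contained and avoids any appeal to the $k=2$ case.

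For $k=2$, the paper argues by tracking where the crossings $\ell(a)\cap\ell(b)$ and $\ell(b)\cap\ell(c)$ must lie on $\ell(b)$ (assuming $\type(a,b)=\type(b,c)=B$, the only nontrivial case) and deducing the position of $\ell(a)\cap\ell(c)$. Your version introduces an $x$-monotone realization and the horizontal coordinate $h(v)$, turning the statement into the transitivity of a linear order. This is slicker once the monotone realization is granted; the paper's argument is slightly more hands-on but needs no such normalization. Your acknowledged obstacle---justifying the $x$-monotone parametrization---is genuine but standard: a family of curves between two boundary points pairwise crossing exactly once is combinatorially a pseudoline arrangement and can be redrawn $x$-monotonically (e.g.\ as a wiring diagram), and the homeomorphism carries the marked points $v$ along.
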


\begin{proof}
{\bf Case $k=2$.}
If $\type(a,b) \neq \type(b,c)$ there is nothing to show since there are only two types.
Without loss of generality, suppose that $\type (a,b)=\type (b,c)=B$. 
This situation is illustrated in the left part of Figure~\ref{fig:triple-both}. The pseudoline 
$\ell (c)$ must cross $\ell(b)$ on $\ell_R(b)$, otherwise we would
have $\type (b,c)=A$. Hence the point $c$ is on the right of this intersection.
Pseudoline 
$\ell (a)$ must cross $\ell(b)$ on $\ell_L(b)$, and $a$ is left of
this intersection. It follows that  $\ell (a)$ and  $\ell (c)$ cross
on $\ell_R(a)$ and $\ell_L(c)$, i.e.,~$\type (a,c)=B$.
%%%%%%%%%%%%%%%%%%%%%%%%%%%%%%%%%%%%%%%%%%%%%%%%%%%
%%
% in einem figure environment mit caption
   \calc_figscale{30}
    \begin{figure}[htb]
    \centerline{\input{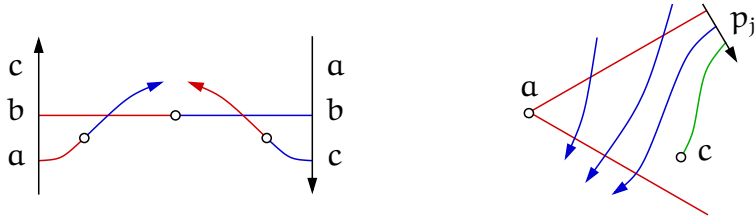}}
    \caption{Illustrations for the $k=2$ case of
Lemma~\ref{lem:triple} (left), and the $k>2$ case of
Lemma~\ref{lem:triple} (right).\label{fig:triple-both}}
    \end{figure}
    
%%
%%%%%%%%%%%%%%%%%%%%%%%%%%%%%%%%%%%%%%%%%%%%%%%%%%%

{\bf Case $k>2$.}  For the general case assume that $\type(a,b)= i$
and $\type(a,c) = j$. If $i=j$ there is nothing to show. Now suppose
$i\neq j$. From $c\in Q_j(a)$ it follows that $p_{j+1}a$ and $p_jc$
are disjoint. Edges $p_jb$ and $p_jc$ only share the endpoint $p_j$,
hence $c$ has to be in the region delimited by $p_jb$ and $p_{j+1}a$,
see the right part of Figure~\ref{fig:triple-both}. This region is
contained in $Q_j(b)$, whence $\type(b,c) = j$.
\end{proof}

%%%%%%%%%%%%%%%%%%%%%%%%%%%%%%%%%%%%%%%%%%%%%%%%%%%%%%%%%%%%%%%%%%%%%%%%%%
\subsection{The quadruple rule}

\begin{lemma}[Quadruple rule]\quad\\\label{lem:qrule}
For four vertices $a,b,c,d\in V$ with $a<b<c<d$:\\
if $\;\type (a,c) = \type (b,c) = \type (b,d) = X$ then $\type (a,d) = X$.
\end{lemma}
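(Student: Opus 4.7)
The plan is to reduce to the case $k=2$ and then prove that case by a topological sweep argument. To reduce, fix $X$ and delete from the drawing every outer vertex $p_j$ with $j\notin\{X,X+1\}$ together with all edges incident to it. The result is an outer $K_{2,n}$-drawing with uniform rotation, in which a pair $(u,v)$ of inner vertices has the ``$X$-side'' type in this restriction---call it $A$---if and only if its original type in $K_{k,n}$ was $X$, and the other type $B$ otherwise. The hypothesis $\type(a,c)=\type(b,c)=\type(b,d)=X$ becomes all-$A$ in the restriction, and the conclusion $\type(a,d)=X$ becomes $\type(a,d)=A$, so it suffices to prove the lemma for $k=2$.

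\textbf{Sweep argument for $k=2$.} Suppose for contradiction that $\type(a,c)=\type(b,c)=\type(b,d)=A$ but $\type(a,d)=B$. Applying Lemma~\ref{lem:triple} to $\{a,b,d\}$ and $\{a,c,d\}$ forces $\type(a,b)=\type(c,d)=B$. Now interpret the colored pseudoline arrangement as a left-to-right sweep from $p_1$ to $p_2$: each vertex $v$ sits on $\ell(v)$ at a sweep time $t(v)$, and each crossing $x_{uv}$ happens at sweep time $t(x_{uv})$. Using the direction convention (smaller indices are lower at $p_1$ and higher at $p_2$), for $u<v$ type $A$ translates to $t(v)<t(x_{uv})<t(u)$ and type $B$ to $t(u)<t(x_{uv})<t(v)$. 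The six typings therefore pin down the vertex order $t(c)<t(a)<t(d)<t(b)$ and restrict each crossing to a specific time-window. At time $t(c)$ no crossing has yet fired, so the pseudolines are in their initial bottom-to-top order $\ell(a),\ell(b),\ell(c),\ell(d)$. Performing $x_{ac}$ in $(t(c),t(a))$ requires first performing $x_{bc}$ (the only admissible separating swap), yielding the order $\ell(c),\ell(a),\ell(b),\ell(d)$; after placing $a$, performing $x_{ad}$ in $(t(a),t(d))$ forces $x_{ab}$ first and yields the order $\ell(c),\ell(b),\ell(d),\ell(a)$. But the remaining required crossing $x_{cd}$ must still fire in its window $(t(c),t(d))$, while $\ell(c)$ and $\ell(d)$ are now separated by $\ell(b)$; the only transpositions that could bring them adjacent are $x_{bc}$ (already consumed) and $x_{bd}$ (its window $(t(d),t(b))$ has not yet opened). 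Hence $x_{cd}$ cannot fire in time, the desired contradiction.

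\textbf{Main obstacle.} The forbidden six-type configuration is consistent on each of its four triples under Lemma~\ref{lem:triple}, so the quadruple rule cannot be proved by any triple-level reasoning---it is genuinely a new global constraint. The obstruction is topological: the pair $(c,d)$ is trapped apart by $\ell(b)$, whose remaining mobility has already been spent on the other imposed types. Making this precise in the sweep/wiring-diagram model is the technical heart of the proof, and the reduction step then spares us from redoing the argument in the more intricate higher-$k$ quadrant geometry.
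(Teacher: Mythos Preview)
Your reduction from general $k$ to $k=2$ is exactly the paper's reduction: restrict to the two consecutive outer vertices $p_X,p_{X+1}$ and observe that ``type $X$'' versus ``type ${\neq}X$'' in the full drawing becomes the $A/B$ dichotomy in the induced $K_{2,n}$. So on that step you and the paper agree.

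For the $k=2$ case the paper argues directly and constructively (taking $X=B$ without loss of generality): it locates the crossing $\ell_R(b)\cap\ell_L(c)$, shows that $\ell_L(d)$ must meet $\ell_R(b)$ to the right of that crossing and that $\ell_R(a)$ must meet $\ell_L(c)$ to the left of it, and concludes that $\ell_R(a)$ and $\ell_L(d)$ are forced to cross, giving $\type(a,d)=B$. Your route is genuinely different: you argue by contradiction in the wiring-diagram model, deriving the vertex order $t(c)<t(a)<t(d)<t(b)$ from the six type constraints and then showing that the required adjacent transpositions cannot be scheduled consistently. The paper's argument is shorter and more geometric; yours is more combinatorial and makes the ``global'' nature of the quadruple rule (that it is not implied by the triple rule alone) very transparent.

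One small gap in your sweep argument: when you write that performing $x_{bc}$ then $x_{ac}$ ``yields the order $\ell(c),\ell(a),\ell(b),\ell(d)$'', you are tacitly assuming that $x_{cd}$ has not yet fired. But at time $t(c)$ the lines $\ell(c)$ and $\ell(d)$ \emph{are} adjacent, so $x_{cd}$ could in principle fire first. You should rule this out explicitly: if $x_{cd}$ fires before $x_{bc}$, the order becomes $a,b,d,c$, and now $b,c$ are no longer adjacent; since $x_{ab}$ and $x_{bd}$ are both outside their windows before $t(a)$, there is no way to make $a,c$ adjacent in time for $x_{ac}$, already a contradiction. Once $x_{bc}$ has fired, $c$ and $d$ are separated by $b$ and remain so until $x_{bd}$ (which lies after $t(d)$), so your subsequent reasoning goes through. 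With this one-line patch your argument is complete.
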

\begin{proof}
  {\bf Case $k=2$.}  Suppose, without loss of generality, that
  $X=B$. Consider the pseudolines representing $b$ and $c$ with their
crossing at $\ell_R(b)\cup \ell_L(c)$. Coming from the left the edge
  $\ell_L(d)$ has to avoid $\ell_L(c)$ and therefore intersects
  $\ell_R(b)$. On $\ell_R(b)$ the crossing with $\ell_L(c)$ is left of
  the crossing with $\ell_L(d)$, see
  Figure~\ref{fig:quadruple}. Symmetrically
from the right the
edge $\ell_R(a)$ has to intersect $\ell_L(c)$ and this intersection is
left of  $\ell_R(b)\cup \ell_L(c)$. To reach the crossings with
 $\ell_L(c)$ and $\ell_R(b)$ edges $\ell_R(a)$ and $\ell_L(d)$
have to intersect, hence, $\type (a,d)=B$.

%%%%%%%%%%%%%%%%%%%%%%%%%%%%%%%%%%%%%%%%%%%%%%%%%%%
%%
% in einem figure environment mit caption
   \calc_figscale{30}
    \begin{figure}[htb]
    \centerline{\input{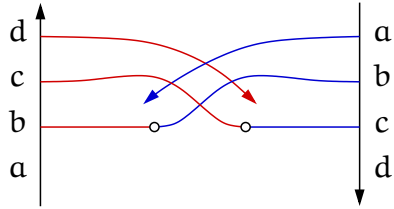}}
    \caption{Illustration for the $k>2$ case of Lemma~\ref{lem:triple}.\label{fig:quadruple}}
    \end{figure}
    
%%
%%%%%%%%%%%%%%%%%%%%%%%%%%%%%%%%%%%%%%%%%%%%%%%%%%%

{\bf Case $k>2$.}  In the general case, we let $X=i$, and consider the
pseudoline arrangement defined by the two successive vertices $p_i$
and $p_{i+1}$ of $P$ defining the quadrants $Q_i$.  Proving that
$\type (a,d)=i$, that is, that $a\in Q_i(d)$, can be done as above for
$k=2$ on the drawing of $K_{2,n}$ induced by $\{p_i, p_{i+1}\}$ and
$V$.
\end{proof}

%%%%%%%%%%%%%%%%%%%%%%%%%%%%%%%%%%%%%%%%%%%%%%%%%%%%%%%%%%%%%%%%%%%%%%%%%%%%%%%%
\subsection{Decomposability and Counting}

We can now state a general structure theorem for all outer drawings of $K_{k,n}$ with 
uniform rotation systems.

\begin{theorem}\label{thm:unif->dec}
  Consider the complete bipartite graph $G$ with vertex bipartition $(P,V)$ such that
  $|P|=k$ and $|V|=n$.  
Given a type in $[k]$ for each pair of vertices in $V$, there exists an outer drawing
of $G$ realizing those types with a uniform rotation system if and only if:
\begin{enumerate}
\item there exists $s\in \{2,\ldots,n\}$ and $X\in [k]$ such that $\type (a,b)
  = X$ for all pairs $a,b$ with $a< s$ and $b \geq s$, (in the table
  this corresponds to maximal rectangle whose cells have all the same entry)
\item the same holds recursively when the interval $[1,n]$ is replaced
  by any of the two intervals~$[1,s-1]$ and $[s,n]$.
\end{enumerate}
\end{theorem}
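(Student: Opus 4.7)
I proceed by induction on $n$, with trivial base cases $n \leq 2$. Given the root split with separator $s$ and color $X$, apply the induction hypothesis to obtain uniform outer drawings $D_L$ of $K_{k,s-1}$ on $V_L = [1,s-1]$ and $D_R$ of $K_{k,n-s+1}$ on $V_R = [s,n]$. Shrink $D_R$ into a small sub-disk $D^{\star}$ placed inside a thin strip along the boundary arc from $p_X$ to $p_{X+1}$. Since this arc bounds $Q_X(v)$ for every $v \in V_L$, the entire strip lies in $\bigcap_{v \in V_L} Q_X(v)$, so every $w \in V_R$ automatically satisfies $\type(v,w) = X$. Each edge $p_\alpha w$ with $w \in V_R$ is realized by concatenating the corresponding edge inside $D^{\star}$ with a connecting curve from the boundary of $D^{\star}$ out to the global $p_\alpha$; the connecting curves are routed as a tight bundle and inserted into the rotation at $p_\alpha$ between the edges to $s-1$ and to $1$ (cyclically), producing the identity rotation $1,2,\dots,n$. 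The bundle can be drawn to cross each edge of $D_L$ at most once, and the drawings $D_L$ and $D_R$ are simple by induction, so the combined drawing is a simple outer drawing.

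\textbf{Necessity (drawing $\Rightarrow$ decomposition).} I again induct on $n$. Set $X := \type(1,n)$ and let $s$ be the smallest integer in $\{2,\dots,n\}$ such that $\type(1,j) = X$ for every $j \in [s,n]$. Since $s=n$ is always admissible, $s$ is well-defined, and minimality forces $\type(1,s-1) \neq X$ whenever $s \geq 3$. I claim $\type(i,j) = X$ for all $i \in [1,s-1]$ and $j \in [s,n]$. The case $i = 1$ is the definition of $s$. The case $i = s - 1$ (with $s \geq 3$) follows from Lemma~\ref{lem:triple} applied to $(1,s-1,j)$: since $\type(1,s-1) \neq X$, the fact that $\type(1,j) \in \{\type(1,s-1),\type(s-1,j)\}$ forces $\type(s-1,j) = X$. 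For $2 \leq i \leq s-2$ and $j \in [s,n]$, suppose toward contradiction that $\type(i,j) \neq X$. Lemma~\ref{lem:triple} on $(1,i,j)$ gives $\type(1,i) = X$; Lemma~\ref{lem:triple} on $(i,s-1,j)$ combined with $\type(s-1,j) = X$ gives $\type(i,s-1) = \type(i,j) =: Y$; and Lemma~\ref{lem:triple} on $(1,i,s-1)$ combined with $\type(1,i) = X$ and $\type(1,s-1) \neq X$ gives $\type(1,s-1) = Y$. Applying Lemma~\ref{lem:qrule} to $(1,i,s-1,j)$ with the common value $Y$ then yields $\type(1,j) = Y$, contradicting $\type(1,j) = X$. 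The sub-tables on $[1,s-1]$ and $[s,n]$ arise from the uniform outer sub-drawings induced on those vertex subsets, so they satisfy the induction hypothesis and decompose recursively.

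\textbf{Main obstacle.} In the sufficiency direction the main difficulty is the topological bookkeeping for the combined drawing: one has to route the bundle of connecting curves so that simplicity and the identity rotation are preserved, and verify that no edge pair crosses twice. In the necessity direction the delicate step is the case $2 \leq i \leq s-2$, where Lemma~\ref{lem:triple} alone narrows $\type(i,j)$ only down to the two possibilities $\{X,Y\}$; it is precisely at this step that the quadruple rule is required to eliminate $Y$, which is why both Lemmas~\ref{lem:triple} and~\ref{lem:qrule} must appear in the characterization.
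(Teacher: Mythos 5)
Your proof is correct and follows essentially the same route as the paper: the same choice of $s$ and $X$ from the first row of the table, the same applications of the triple rule and the quadruple rule to the tuples $\{1,i,s-1,j\}$ (merely packaged as a contradiction instead of the paper's two-case direct argument), and the same recursive blow-up/substitution construction for sufficiency, for which you actually supply more topological detail than the paper does.
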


\begin{proof} 
{\bf $(\Rightarrow)$} Let us first show that if there exists a
drawing, then the types must satisfy the above structure.  We proceed
by induction on $n$. Pick the smallest $s \in \{2,\ldots,n\}$ such
that $\type (1,b) = \type(1,s)$ for all $b\geq s$. Set $X := \type
(1,s)$.  We claim that $\type(a,b)=X$ for all $a,b$ such that $1\leq
a<s\leq b\leq n$. For $a=1$ this is just the condition on $s$. Now let
$1<a$.

First suppose that $\type(1,a)\not= X$. We can apply the triple rule
on the indices $1,a,b$. Since $\type(1,b)\in\{\type(1,a),\type(a,b)\}$,
we must have that $\type(a,b)=X$.

Now suppose that $\type(1,a)=X$.  We have $\type(1,s-1) = Y \not= X$
by definition. As in the previous case we obtain $\type (s-1, b)=X$
from the triple rule for $1,s-1,b$. Applying the triple rule on
$1,a,s-1$ yields that $\type(a,s-1)=Y$.

Now apply the quadruple rule on $1,a,s-1,b$. We know that
$\type(1,s-1)=\type(a,s-1)=Y$, and by definition $\type(1,b)=X$.
Hence we must have that $\type(a,b)\not= Y$.

Finally, apply the triple rule on $a,s-1,b$. We know that
$\type(a,s-1)=Y$, $\type(s-1,b)=X$. Since $\type (a,b)\not= Y$, we
must have $\type(a,b)=X$. This yields the claim.

{\bf $(\Leftarrow)$} Now given the recursive structure, it is not
difficult to construct a drawing. Consider the two subintervals as a
single vertex, then recursively blow up these two vertices. (See
Figure~\ref{fig:exuni} for an illustration).
\end{proof}

%%%%%%%%%%%%%%%%%%%%%%%%%%%%%%%%%%%%%%%%%%%%%%%%%%%
%%
\begin{figure}[h!]
\begin{center}
\includegraphics[scale=.9,page=2]{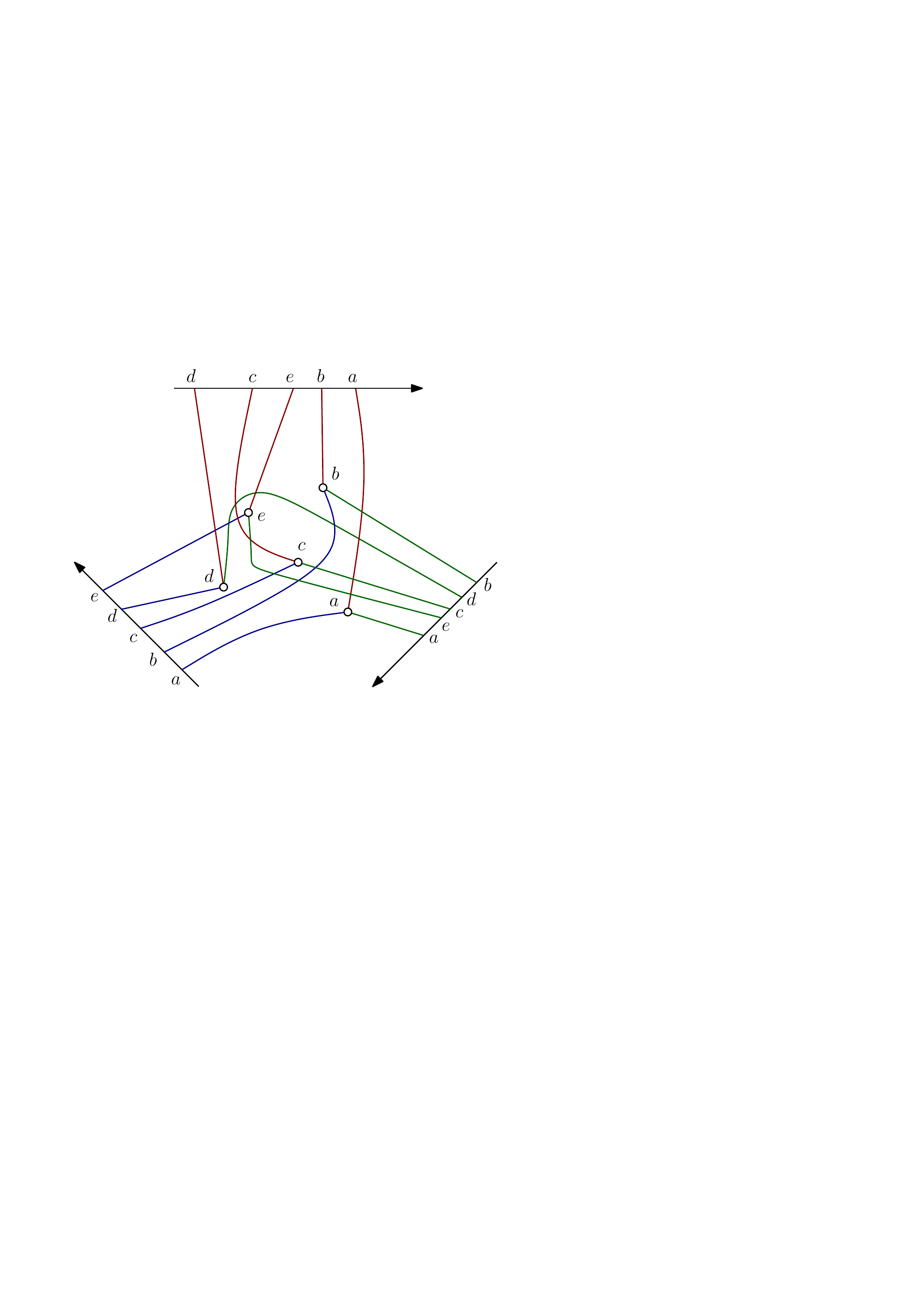}
\end{center}
\caption{\label{fig:exuni}Illustration of the recursive structure of
the outer drawings in the uniform case.}
\end{figure}
%%
%%%%%%%%%%%%%%%%%%%%%%%%%%%%%%%%%%%%%%%%%%%%%%%%%%%

The recursive structure yields a corollary on the number of distinct drawings.

\begin{corollary}[Counting outer drawings with uniform rotation systems]
  \label{cor:countunif}
  For every pair of integers $k,n>0$ denote by $T(k,n)$ the number of
  outer drawings of the complete bipartite graph isomorphic to
  $K_{k,n}$ with uniform rotation systems. Then
  $$
  T(n+1, k+1) = \sum_{j=0}^{n} \tbinom{n+j}{2j}\,C_j\,k^j
  $$
  where $C_j$ is the $j$th Catalan number.
\end{corollary}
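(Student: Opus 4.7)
My plan is to refine the recursive decomposition of Theorem~\ref{thm:unif->dec} into a \emph{canonical} bijection with labelled plane trees, and then read the formula off the resulting generating-function identity. Throughout I write $|V|$ for the number of inner vertices and $|P|=k+1$ for the number of outer vertices, so that the corollary's right-hand side counts drawings with $|V|=n+1$ and $|P|=k+1$.

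\textbf{Step 1 (canonical tree encoding).} Given a drawing, I would first argue that its ``top type'' $X$ is an invariant: whenever $s<s'$ both satisfy the rectangle condition of Theorem~\ref{thm:unif->dec}, the sub-rectangle $[1,s-1]\times[s',|V|]$ lies in both prescribed rectangles and therefore forces the two types to coincide. Taking simultaneously \emph{all} valid top-level $X$-splits produces a partition of $\{1,\ldots,|V|\}$ into consecutive groups. A maximality argument then shows that every group of size at least $2$ induces a sub-drawing whose own top type differs from $X$: otherwise one could lift an internal $X$-split to a top-level $X$-split lying strictly between two consecutive valid ones. Recursing associates to the drawing a unique plane tree with $|V|$ leaves whose internal nodes each have $\geq 2$ children, are labelled in $[k+1]$, and carry labels different from their parents'. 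Conversely, from such a tree the type of a pair $(a,b)$ is recovered as the label of the lowest common ancestor of the leaves $a$ and $b$, and both the triple and quadruple rules follow from standard ordered-LCA identities for rooted plane trees.

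\textbf{Step 2 (generating function).} Let $F(x)=\sum_{m\geq 1}D_m\,x^m$, where $D_m$ counts the canonical trees with $m$ leaves. For a fixed label $X\in[k+1]$, let $G(x)$ be the generating function of sub-trees whose root label is \emph{not} $X$, with the convention that a single leaf counts as one such sub-tree. Symmetry over the $k+1$ labels yields $G(x)=(x+kF(x))/(k+1)$. A tree whose root has label $X$ is an ordered sequence of at least two sub-trees each counted by $G$, and summing over the $k+1$ choices of root label gives $F(x)-x=(k+1)\,G(x)^2/(1-G(x))$. Eliminating $F$ produces the quadratic
\[
(k+1)\,G(x)^2-(1+x)\,G(x)+x=0,
\]
whose branch with $G(0)=0$ leads, after back-substitution, to
\[
\frac{F(x)}{x}=\frac{(1-x)-\sqrt{1-(4k+2)x+x^2}}{2kx}.
\]

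\textbf{Step 3 (matching the sum), and main obstacle.} To finish I would verify that $\sum_{n\geq 0}\bigl(\sum_{j=0}^{n}\binom{n+j}{2j}C_jk^j\bigr)x^n$ equals $F(x)/x$. Swapping the order of summation and using $\sum_{n\geq j}\binom{n+j}{2j}x^n=x^j/(1-x)^{2j+1}$ rewrites the double sum as $(1-x)^{-1}\sum_jC_j\bigl(kx/(1-x)^2\bigr)^j$; applying the Catalan generating function $\sum_jC_jy^j=(1-\sqrt{1-4y})/(2y)$ at $y=kx/(1-x)^2$ and simplifying reproduces exactly the closed form for $F(x)/x$ above. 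I expect the delicate point to be Step~1: turning the existence-only recursive split of Theorem~\ref{thm:unif->dec} into a genuinely canonical, uniqueness-preserving tree encoding. Both the well-definedness of the top type and the parent--child label-difference property must be established through careful ``rectangle intersection'' arguments. Once the bijection with this family of labelled Schr\"oder-like trees is in place, the remaining generating-function manipulations are standard.
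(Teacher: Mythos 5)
Your proposal is correct, and it takes a genuinely different (and more self-contained) route than the paper's. The paper canonicalizes the decomposition of Theorem~\ref{thm:unif->dec} by always splitting at the smallest valid $s$, which encodes each drawing as a labelled \emph{binary} tree in which only the \emph{left} child's label must differ from its parent's; from this it derives the recursion $T(k,n)=T(k,n-1)+(k-1)\sum_{i=1}^{n-1}T(k,i)\,T(k,n-i)$ and then cites OEIS entry A103209 for the closed form, asserting it ``can be verified by induction.'' You instead take \emph{all} valid top-level splits simultaneously, obtaining Schr\"oder-type plane trees of arity $\geq 2$ in which \emph{every} child's label differs from its parent's. Your two worries about Step~1 are exactly the right ones and both resolve: the rectangle-intersection argument works because for valid splits $s<s'$ the pair $(s-1,s')$ lies in both rectangles, and the maximality argument works because an internal split of a block with the same label $X$ would lift to an additional valid top-level split. (The converse direction is even easier than you suggest: such a tree directly exhibits the recursive decomposition demanded by Theorem~\ref{thm:unif->dec}, so no appeal to LCA identities or to the triple/quadruple rules is needed; injectivity holds because the top type and the block structure are determined by the type table.) You also read the statement correctly: the arguments of $T(n+1,k+1)$ in the displayed formula must be taken as $|V|=n+1$ and $|P|=k+1$, which is the reading that makes the identity true. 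Your Steps~2 and~3 then supply precisely what the paper omits: the system $G=(x+kF)/(k+1)$ and $F-x=(k+1)G^2/(1-G)$ does yield $(k+1)G^2-(1+x)G+x=0$ and hence $F(x)/x=\bigl((1-x)-\sqrt{1-(4k+2)x+x^2}\,\bigr)/(2kx)$, and the binomial/Catalan resummation reproduces this generating function exactly (for $k=1$ it specializes to the large Schr\"oder numbers, consistent with the paper's closing remark). The net gain of your approach is an actual proof of the stated closed form rather than a recursion plus an OEIS citation; the cost is the extra care needed to make the all-splits tree encoding canonical, which you have correctly identified and which does go through.
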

\begin{proof}
The recursive structure can be modeled in a labeled binary tree. The root
corresponds to $[1,n]$, the subtrees correspond to the
intervals~$[1,s-1]$ and $[s,n]$,
and the label of the root is $\type(a,b)$ for $a<s\leq b$.
The definition implies that the label of the left child of a node is
different from the label of the node. Leaves have no label.

For the number $T(k,n)$ of labeled binary trees we therefore get a
Catalan-like recursion 
$ 
T(k,n) = k \sum_{i=1}^{n-1} \left(\frac{k-1}{k}\kern3pt T(k, i)\right)
\cdot T(k, n-i) + T(k, n-1).
$
%% Maple:
%% T := proc(k,n)
%% option remember;
%% if n=1 then 1 else (k-1)*sum( 'T(k,i)*T(k,n-i)' , 'i'=1..n-1 ) + T(k,n-1) fi;
%% end;
  The factor $k$ preceeding the sum accounts for the choice of the
  label for the root. Using symmetry on the labels we find that a
  $\frac{k-1}{k}$ fraction of the candidates for the left subtree
  comply with the condition on the labels. The case where the left
  subtree only consists of a single leaf node is exceptional, in this
  case there is no label and we have one choice for this subtree, not
  just $(1-1/k)$. This explains the additional summand. The recursion
  $$
  T(k,n) = T(k,n-1) + (k-1) {\textstyle \sum_{i=1}^{n-1}} T(k, i)
  \cdot T(k,n-i)
  $$
  together with the initial condition $T(k,1) = 1$ yields an array of
  numbers which is is listed as entry A103209 in the encyclopedia of
  integer sequences\footnote{www.oeis.org} (OEIS). The stated
  explicite expression for $T(k,n)$ can be found there. It can be
  verified by induction.
\end{proof}

Note that in the case $k=2$, Corollary~\ref{cor:countunif}
provides a bijection between outer drawings with uniform rotation systems and
combinatorial structures counted by Schr\"oder
numbers, such as separable permutations and guillotine partitions.

%%%%%%%%%%%%%%%%%%%%%%%%%%%%%%%%%%%%%%%%%%%%%%%%%%%%%%%%%%%%%%%%%%%%%%%%%%%%%%%%%%
\section{Outer Drawings with $k=2$}\label{sec:k=2}
%%%%%%%%%%%%%%%%%%%%%%%%%%%%%%%%%%%%%%%%%%%%%%%%%%%%%%%%%%%%%%%%%%%%%%%%%%%%%%%%%%

In this section we deal with outer drawings with $k=2$ and arbitrary rotation system.
We now have three types of pairs, that we call $N$, $A$, and $B$, as
illustrated on Figure~\ref{fig:threetypes-2}.  The type $N$ (for
noncrossing) is new, and is forced whenever the pair corresponds to an
inversion in the two permutations. Note again that the three types exactly
encode which are the pairs of crossing edges.

%%%%%%%%%%%%%%%%%%%%%%%%%%%%%%%%%%%%%%%%%%%%%%%%%%%
%%
% in einem figure environment mit caption
   \calc_figscale{30}
    \begin{figure}[htb]
    \centerline{\input{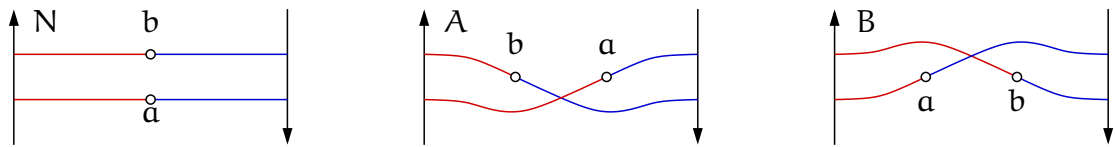}}
    \caption{The three types of pairs for outer drawings of
  $K_{2,n}$ with arbitrary rotation systems.\label{fig:threetypes-2}}
    \end{figure}
    
%%
%%%%%%%%%%%%%%%%%%%%%%%%%%%%%%%%%%%%%%%%%%%%%%%%%%%

Recall that an outer drawing of $K_{2,n}$, in which no pair is of type $N$, can
be seen as a colored pseudoline arrangement as defined
previously. Similarly, an outer drawing of $K_{2,n}$ in which some pairs are
of type $N$ can be seen as an arrangement of colored monotone curves
crossing pairwise {\em at most} once. We will refer to arrangement of
monotone curves that cross {\em at most} once as {\em quasi-pseudoline
  arrangements}.  The pairs of type $N$ correspond to parallel
pseudolines. Without loss of generality, we can suppose that the first
permutation in the rotation system, that is, the order of the
pseudolines on the left side, is the identity. We denote by $\pi$ the
permutation on the right side.

The first question is whether every permutation $\pi$ is feasible
in the sense that there is a drawing of $K_{2,n}$ such that the
rotations are $(\id,\pi)$. The answer is yes, two easy constructions
are exemplified in Figure~\ref{fig:feasible}

%%%%%%%%%%%%%%%%%%%%%%%%%%%%%%%%%%%%%%%%%%%%%%%%%%%
%%
\def\QTableAB{{\small
\begin{ytableau}
\none[1]  &  N        &          &          & N \\ 
\none     & \none[2]  &          &          &   \\
\none     & \none     & \none[3] &  N       & N \\
\none     & \none     &\none     & \none[4] & N \\
\none     & \none     &\none     &\none     & \none[5]
\end{ytableau}}
}
% in einem figure environment mit caption
   \calc_figscale{20}
    \begin{figure}[htb]
    \centerline{\input{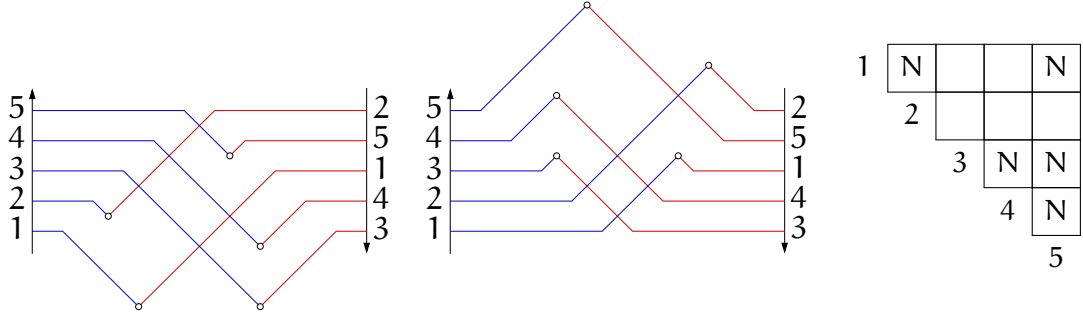}}
    \caption{Two outer drawings with rotations
  $(\id_5,[3,4,1,5,2])$.
  On the left all non-$N$-types are $B$ on the right they are $A$.\label{fig:feasible}}
    \end{figure}
    
%%
%%%%%%%%%%%%%%%%%%%%%%%%%%%%%%%%%%%%%%%%%%%%%%%%%%%

%%%%%%%%%%%%%%%%%%%%%%%%%%%%%%%%%%%%%%%%%%%%%%%%%%%%%%%%%%%%%%%%%%%%%%%%%%%%
\subsection{Triples}

For $a,b,c\in V$, with $a<b<c$, we are interested in the triples of
types $(\type(a,b)$, $\type(a,c)$, $\type(b,c))$ that are possible in
an outer drawing of $K_{2,n}$, such triples are called
\emph{legal}. We like to display triples in little tables, e.g.,
the triple $\type(a,b)=X$, $\type(a,c)=Y$, and $\type(b,c)=Z$ is
represented as
$$\triple{a}{b}{c}{X}{Y}{Z}.$$

\begin{lemma}[decomposable triples]
    \quad\\ \label{lem:dec tri}
A triple with $Y \in \{X,Z\}$ is always legal. There are~15 triples of
this kind.
\end{lemma}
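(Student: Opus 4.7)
The plan is to verify the count and then realize each of the $15$ decomposable triples explicitly. Writing $(X,Y,Z) \in \{N,A,B\}^3$, the condition $Y \in \{X,Z\}$ partitions into three disjoint cases: $X=Y=Z$ ($3$ triples), $X=Y\ne Z$ ($3 \cdot 2 = 6$), and $Y=Z\ne X$ ($3 \cdot 2 = 6$), totalling $15$. The cases are disjoint since two of the equalities together already force the all-equal case.

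For the six decomposable triples not involving $N$---namely $(A,A,A)$, $(B,B,B)$, $(A,A,B)$, $(B,B,A)$, $(A,B,B)$, $(B,A,A)$---I would invoke Theorem~\ref{thm:unif->dec} with the identity permutation $\pi = \id$ on the three vertices. No pair is then an inversion and all types lie in $\{A,B\}$; each of these six triples satisfies the recursive block structure with a single split point $s \in \{2,3\}$ (and a trivial recursion on each side), so Theorem~\ref{thm:unif->dec} produces a realizing uniform outer drawing.

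For the nine triples involving at least one $N$, the plan is to choose $\pi$ on $\{a,b,c\}$ so that its inversion set matches exactly the positions of $N$ in the triple, forcing the $N$ entries to appear automatically. The relevant choices are: $\pi = [3,2,1]$ for $(N,N,N)$; $\pi = [1,3,2]$ for $(A,A,N)$ and $(B,B,N)$; $\pi = [2,1,3]$ for $(N,A,A)$ and $(N,B,B)$; $\pi = [2,3,1]$ for $(N,N,A)$ and $(N,N,B)$; and $\pi = [3,1,2]$ for $(A,N,N)$ and $(B,N,N)$. The only inversion pattern not realized by some $\pi$ on three elements is the one in which only the middle pair $(a,c)$ is inverted, and that pattern corresponds precisely to triples with $Y=N$ but $X,Z\ne N$, which are not decomposable; hence every decomposable triple involving $N$ is covered by the list above.

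The main obstacle I expect is verifying that, after fixing $\pi$ to force the right set of $N$'s, the remaining non-inverted pair(s) can simultaneously be realized as the prescribed types in $\{A,B\}$. This reduces to a local two-curve question: a pair of vertices in non-inverted order admits either crossing orientation by routing one of the two segments $\ell_L(\cdot)$, $\ell_R(\cdot)$ above or below the other, and the decomposability hypothesis ensures that the at most two non-$N$ entries share a common label, so no three-way compatibility constraint obstructs simultaneous realization. Producing an explicit small drawing for each of the nine mixed cases then completes the verification and, together with the six uniform cases, yields realizing outer drawings for all $15$ decomposable triples.
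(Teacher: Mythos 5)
Your proposal is correct in substance, and your count ($3+6+6=15$) matches the paper's, but your realization argument takes a genuinely different and more laborious route. The paper proves all fifteen cases with a single construction: when $Y=X$, take a $K_{2,2}$ drawing of type $X$ on $\{a,v\}$, double the curve of $v$ into a thin corridor ending in a small disk at $v$, and plug a $K_{2,2}$ drawing of type $Z$ on $\{b,c\}$ into that disk (the case $Y=Z$ is symmetric); this works uniformly whether or not $X$ or $Z$ equals $N$, since plugging in a non-crossing $K_{2,2}$ is just as legitimate as plugging in a crossing one. You instead split off the six $\{A,B\}$-only triples and realize them via Theorem~\ref{thm:unif->dec} --- which is valid, and indeed that theorem's sufficiency direction is proved by exactly this blow-up --- and then handle the nine $N$-involving triples by fixing a permutation $\pi$ whose inversion pattern matches the $N$-pattern and promising an explicit drawing for each. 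Your permutation choices are internally consistent, but two remarks: first, the inversion patterns on three elements that no $\pi$ realizes are both ``only $(a,c)$'' and its complement ``only $(a,b)$ and $(b,c)$,'' not just the former (harmless here, since neither pattern arises from a decomposable triple); second, and more importantly, your final step defers to nine explicit drawings that you never exhibit, supported only by an informal ``no three-way obstruction'' remark. Those drawings do exist, so nothing in your argument fails, but to close it cleanly you should either produce them or simply observe that the doubling construction above already yields each one, which makes the case split unnecessary.
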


\begin{proof}
If $Y=X$ take a drawing of $K_{2,2}$ of type $X$ with vertices $a,v$
and $a < v$. In this drawing double the pseudoline corresponding to
$v$ and cover vertex $v$ by a small circle. Then plug a drawing of
$K_{2,2}$ of type $Z$ with vertices $b,c$ in this circle. This results
in a drawing of $K_{2,3}$ with the prescribed types. The construction
is very much as in Figure~\ref{fig:exuni}.

There are 3 triples\triple{a}{b}{c}{X}{X}{X} and in addition
for each of the 6 pairs $(X,Z)$ with $X\neq Z$ a triple of each the 2
types\triple{a}{b}{c}{X}{X}{Z}
and\triple{a}{b}{c}{X}{Z}{Z}.
\end{proof}

Note that the triples of the latter lemma are
decomposable in the sense of Theorem~\ref{thm:unif->dec}.

\begin{lemma}\label{lem:two-non-decomp}\quad\\[-4.1mm]
There are exactly two non-decomposable legal triples:
\raisebox{1mm}{$\triple{a}{b}{c}{N}{A}{B}$} and 
\raisebox{1mm}{$\triple{a}{b}{c}{A}{B}{N}$}. 
\end{lemma}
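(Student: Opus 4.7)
The plan is to enumerate the twelve non-decomposable triples $(X,Y,Z)$ (those with $Y\notin\{X,Z\}$) and show that ten are infeasible, leaving exactly the two listed. The basic tool is that a type-$N$ pair corresponds to an \emph{agreement} of the two rotations on that pair---the two curves do not cross, and hence list the pair in matching order near $p_1$ and near $p_2$---whereas a type-$A$ or type-$B$ pair corresponds to a \emph{disagreement}, i.e., a forced crossing, with the choice of $A$ versus $B$ recording on which side of the crossing point each of the two vertices lies.

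First, six triples are ruled out by a cyclic rotation obstruction. The $N$/non-$N$ pattern imposes three pairwise comparisons among the positions of $a,b,c$ at $p_2$: each $N$ preserves the $p_1$-order while each non-$N$ reverses it. Two of the eight possible sign patterns produce cycles: $(\text{disagree},\text{agree},\text{disagree})$ eliminates $(A,N,A), (A,N,B), (B,N,A), (B,N,B)$, and $(\text{agree},\text{disagree},\text{agree})$ eliminates $(N,A,N)$ and $(N,B,N)$.

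Four more triples are killed by finer geometric arguments. For $(N,A,B), (N,B,A), (A,B,N)$, and $(B,A,N)$, exactly one pair is of type $N$, so the third curve crosses the middle curve before the outer one (since near $p_1$ the curve $\ell(b)$ separates $\ell(a)$ from $\ell(c)$, and the $N$-pair of curves does not cross). Combined with the fact that $A$ vs.\ $B$ is determined by whether the relevant vertex lies on the $p_1$-side or the $p_2$-side of the crossing on its own curve, one checks directly that $(N,A,B)$ is consistent (the vertex $c$ lies between its two crossings on $\ell(c)$) while $(N,B,A)$ forces $c$ to be simultaneously before the earlier crossing and after the later one, a contradiction; the case of $(A,B,N)$ versus $(B,A,N)$ is symmetric, with $a$ on $\ell(a)$ playing the role of $c$ on $\ell(c)$. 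Finally, $(A,B,A)$ and $(B,A,B)$ consist entirely of disagreements, so the rotation at $p_2$ restricted to $\{a,b,c\}$ is the identity: the induced sub-drawing on $\{p_1,p_2,a,b,c\}$ is a $K_{2,3}$ with uniform rotation, and Theorem~\ref{thm:unif->dec} forces its triple to be decomposable, contradicting the non-decomposability of $(A,B,A)$ and $(B,A,B)$. The two surviving triples $(N,A,B)$ and $(A,B,N)$ are realized by explicit drawings built from the vertex placements dictated by the analysis above (or as restrictions of the drawings in Figure~\ref{fig:feasible}); the main obstacle in the argument is the bookkeeping for the ``vertex before or after crossing'' step, which becomes routine once the order of crossings along each curve is fixed by the rotation system.
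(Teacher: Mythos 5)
Your proof is correct and follows essentially the same route as the paper: the six triples whose middle entry's $N$-status differs from both outer entries are excluded by transitivity of the crossing and non-crossing relations (your cycle argument), the all-$\{A,B\}$ triples are reduced to the uniform triple rule, and only the four mixed cases remain. Your additional argument forcing the crossing with $\ell(b)$ to precede the crossing with the outer curve along the third curve, and then reading off where the vertex sits relative to its two crossings, correctly fills in the elimination of $(N,B,A)$ and $(B,A,N)$ that the paper leaves implicit; the only quibble is that the drawings of Figure~\ref{fig:feasible} have all non-$N$ types equal and hence contain no non-decomposable triples, so the realizability of the two surviving triples should rest on your explicit constructions rather than on that figure.
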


\begin{proof}
  From Lemma~\ref{lem:triple} we know that triples where all entries
  are $A$ or $B$ are decomposable. If $\type(a,b) = N$, then $(a,b)$
  is a non-inversion of $\pi$ while pairs $(a,b)$ with $\type(a,b) \in
  \{A,B\}$ are inversions of $\pi$. Both, the set of inversion pairs
  and the set of non-inversion pairs are transitive. Hence, triples of
  type\nitriple{{}}{N}{{}} and\nitriple{N}{}{N} where empty
  cells represent inversion pairs are impossible. It remains to
  consider the cases where exactly one of $X$ and $Z$ is $N$ and the
  other two symbols in the triple are $A$ and~$B$. Only the two
  triples shown in the statement of the lemma remain.
\end{proof}

With the two lemmas we have classified all 17 legal triples, i.e., all
outer drawings of $K_{2,3}$. 

\begin{observation}[Triple rule]\label{obs:triple-rule}
  Any three vertices of $V$ in an outer drawing of $K_{2,n}$ must induce
  one of the 17 legal triples of types.
\end{observation}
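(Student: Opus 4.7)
The plan is to derive this observation as an immediate corollary of Lemmas \ref{lem:dec tri} and \ref{lem:two-non-decomp}, via a straightforward subdrawing-restriction argument. The underlying definition at play is that a triple of types is \emph{legal} precisely when it arises as $(\type(a,b),\type(a,c),\type(b,c))$ for three vertices $a<b<c$ of $V$ in \emph{some} outer drawing of $K_{2,3}$. Under this definition Lemma \ref{lem:dec tri} supplies $15$ decomposable legal triples ($3$ constant triples, together with $2$ triples for each of the $6$ ordered unequal pairs of distinct types), and Lemma \ref{lem:two-non-decomp} adds exactly $2$ more legal triples; a triple is either decomposable or not, so the total legal count is $15+2=17$.

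The one point that needs to be stated explicitly is that legality is closed under restriction. Given an outer drawing $D$ of $K_{2,n}$ and any three vertices $a<b<c$ of $V$, I would form the subdrawing $D'$ on $P\cup\{a,b,c\}$ obtained from $D$ by deleting the vertices of $V\setminus\{a,b,c\}$ together with their incident edges. Properties 1--4 of the definition of outer drawings (points, simple curves, no triple intersections, at most one pairwise intersection) are inherited from $D$ after deletion, and property~5 is preserved because every vertex of $P$ is still present and still lies on the outer boundary of~$D'$. Hence $D'$ is itself an outer drawing of $K_{2,3}$ realizing the triple $(\type(a,b),\type(a,c),\type(b,c))$, so this triple is legal.

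Combining the two steps, every triple of vertices from $V$ in an outer drawing of $K_{2,n}$ induces a legal triple, which by the previous classification must be one of the $17$ triples enumerated in Lemmas \ref{lem:dec tri} and \ref{lem:two-non-decomp}. I do not anticipate any genuine obstacle: all real content sits in the two preceding lemmas, and the observation is essentially a repackaging of their conclusions as a necessary local condition on outer drawings. This is presumably why the statement is labeled an ``observation'' rather than a theorem or lemma.
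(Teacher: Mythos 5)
Your proposal is correct and matches the paper's (implicit) treatment: the paper offers no separate proof of this observation, regarding it as an immediate consequence of Lemmas~\ref{lem:dec tri} and~\ref{lem:two-non-decomp}, which together classify the $17$ legal triples, i.e., the possible outer drawings of $K_{2,3}$. Your only addition is to spell out the (sound) closure-under-restriction step, which the paper's definition of ``legal'' in terms of drawings of $K_{2,n}$ renders unnecessary but which does no harm.
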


%%%%%%%%%%%%%%%%%%%%%%%%%%%%%%%%%%%%%%%%%%%%%%%%%%%%%%%%%%%%%%%%%%%%%%%%%%%%
\subsection{Quadruples}

We aim at a characterization of collections of types that correspond
to outer drawings. Already in the case of uniform rotations we had to add
Lemma~\ref{lem:qrule}, a condition for quadruples. In the general
case the situation is more complex than in the uniform case,
see Figure~\ref{fig:qruleFail}.

%%%%%%%%%%%%%%%%%%%%%%%%%%%%%%%%%%%%%%%%%%%%%%%%%%%
%%
\def\QTableTwoN{
{\small\begin{ytableau}
\none[_1]  &  A        & N        & B \\
\none     & \none[_2]  & N        & N \\
\none     & \none     & \none[_3] & B \\
\none     & \none     &\none     & \none[_4]
\end{ytableau}}
}
\def\QTableTwoB{
{\small\begin{ytableau}
\none[_1]  &  N        & B        & A \\
\none     & \none[_2]  & B        & B \\
\none     & \none     & \none[_3] & A \\
\none     & \none     &\none     & \none[_4]
\end{ytableau}}
}
% in einem figure environment mit caption
   \calc_figscale{26}
    \begin{figure}[htb]
    \centerline{\input{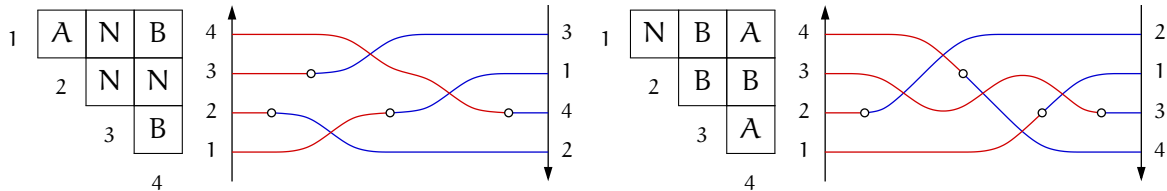}}
    \caption{The quadruple rule from
Lemma~\ref{lem:qrule} does not hold in the presence of $N$ types.\label{fig:qruleFail}}
    \end{figure}
    
%%
%%%%%%%%%%%%%%%%%%%%%%%%%%%%%%%%%%%%%%%%%%%%%%%%%%%

Reviewing the proof of Lemma~\ref{lem:qrule} we see that in the case
discussed there, where given $B$ types are intended to enforce
$\type(a,d)=B$, we need that in~$\pi$ element $a$ is before $b$.
This is equivalent to $\type(a,b)\neq N$. Symmetrically, three
$A$ types enforce $\type(a,d)=A$ when $d$ is the last in $\pi$,
i.e., if $\type(c,d)\neq N$.

\begin{lemma}\label{lem:qrule-2}
  Consider four vertices $a,b,c,d\in V$ such that $a<b<c<d$.\\
  If $\type(a,b)\neq N$ and $\type (a,c) = \type (b,c) = 
  \type (b,d) = B$ then $\type (a,d) = B$.\\
  If $\type(c,d)\neq N$ and $\type (a,c) = \type (b,c) = 
  \type (b,d) = A$ then $\type (a,d) = A$.
\end{lemma}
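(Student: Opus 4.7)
The plan is to adapt the pseudoline-based argument from the $k=2$ case of Lemma~\ref{lem:qrule}, identifying the single step where the new hypothesis becomes essential. First I would observe that the second statement reduces to the first by the symmetry that reverses the linear order on $V$ (sending the quadruple $a<b<c<d$ to itself under $a\leftrightarrow d$ and $b\leftrightarrow c$) combined with the swap $A\leftrightarrow B$, realized by a horizontal reflection of the drawing. Under this involution the hypotheses of the second statement map to those of the first, so it suffices to prove only the $B$-case.

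Assuming $\type(a,b)\neq N$ and $\type(a,c)=\type(b,c)=\type(b,d)=B$, I would first verify that $\type(a,d)\in\{A,B\}$: since types $A$ and $B$ correspond to inversions of $\pi$ while type $N$ corresponds to non-inversions, the hypotheses make both $(a,b)$ and $(b,d)$ inversions, whence $\pi^{-1}(a)>\pi^{-1}(b)>\pi^{-1}(d)$ and $(a,d)$ is an inversion as well. Next I would retrace the geometric argument of Lemma~\ref{lem:qrule}: the $\ell(b)$-$\ell(c)$ crossing sits on $\ell_R(b)\cap\ell_L(c)$; the segment $\ell_L(d)$ enters from the far left at position $d$ (below $c$), must avoid $\ell_L(c)$, and hence reaches the $\ell(b)$-$\ell(d)$ crossing on $\ell_R(b)$ at a point to the right of the $b$-$c$ crossing; symmetrically $\ell_R(a)$ must reach the $\ell(a)$-$\ell(c)$ crossing on $\ell_L(c)$ to the left of the $b$-$c$ crossing. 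A Jordan-curve argument then forces $\ell_R(a)$ and $\ell_L(d)$ to intersect, placing the $\ell(a)$-$\ell(d)$ crossing on $\ell_R(a)\cap\ell_L(d)$, i.e., of type $B$.

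The hardest part, and the single point at which the new hypothesis must be invoked, is the ``symmetric'' step asserting that $\ell_R(a)$ enters the right boundary below $\ell_R(b)$. In the uniform setting of Lemma~\ref{lem:qrule} this was automatic, but under a general rotation system it is equivalent to $a$ lying below $b$ in the right-side order, i.e., to $(a,b)$ being an inversion of $\pi$---precisely the condition $\type(a,b)\neq N$. The failure mode shown in Figure~\ref{fig:qruleFail}, where $\type(a,b)=N$ and the conclusion breaks down, confirms that this hypothesis is not just sufficient but necessary, and it pinpoints exactly where the uniform-case proof needs patching.
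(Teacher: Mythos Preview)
Your proposal is correct and follows essentially the same approach as the paper: the paper's argument (given in the paragraph preceding the lemma) simply says to revisit the proof of Lemma~\ref{lem:qrule} and observe that the only place the uniform hypothesis was used in the $B$-case is to guarantee the relative position of $\ell_R(a)$ and $\ell_R(b)$ on the right boundary, which in the general setting is exactly the condition $\type(a,b)\neq N$; the $A$-case is handled symmetrically via $\type(c,d)\neq N$. Your write-up spells this out in somewhat more detail (including the explicit check that $\type(a,d)\neq N$ via transitivity of inversions), but the idea is identical.
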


%%%%%%%%%%%%%%%%%%%%%%%%%%%%%%%%%%%%%%%%%%%%%%%%%%%%%%%%%%%%%%%%%%%%%%%%%%%%%%%%%%%%%
\subsection{Consistency}

With the next theorem we show that consistency on triples and
quadruples is sufficient to grant the existence of an outer drawing.

\begin{theorem}[Consistency of outer drawings for $k=2$]
  \label{thm:consist-2}
    Consider the complete bipartite graph $G$ with vertex bipartition $(P,V)$ such that
  $|P|=2$ and $|V|=n$.  
  Given a type in $\{A,B,N\}$ for each pair of vertices in $V$, there exists an outer
  drawing of $G$ realizing those types if and only if all triples are legal
  and the quadruple rule (Lemma~\ref{lem:qrule-2}) is satisfied.
\end{theorem}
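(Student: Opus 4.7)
The necessity direction is immediate from Observation~\ref{obs:triple-rule} and Lemma~\ref{lem:qrule-2}, which state exactly the listed conditions as necessary. I therefore focus on sufficiency, which I would prove by induction on $n=|V|$.

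For the base cases $n\le 3$, Lemmas~\ref{lem:dec tri} and~\ref{lem:two-non-decomp} together enumerate the $17$ legal triples and provide a realization of each, so every consistent assignment on at most three vertices is realizable. For the inductive step with $n\ge 4$, I would first recover the right-side permutation $\pi$. The proof of Lemma~\ref{lem:two-non-decomp} shows that both the set of non-$N$ pairs and its complement are transitive under the triple rule, so the non-$N$ pairs form the inversion set of a unique permutation $\pi$ on $V$. I would then delete a well-chosen vertex; a convenient choice is $v^\ast := \pi^{-1}(1)$, the vertex appearing first in $\pi$. For this choice the types involving $v^\ast$ are already heavily constrained: $(u,v^\ast)$ has type $N$ for every $u>v^\ast$ and type $A$ or $B$ for every $u<v^\ast$. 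The restriction of the type assignment to $V\setminus\{v^\ast\}$ trivially preserves both the triple rule and the quadruple rule, so by the induction hypothesis it is realized by some outer drawing $D'$ of $K_{2,n-1}$.

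The heart of the proof, and the main obstacle, is reinserting the pseudoline $\ell(v^\ast)$ into $D'$. Its endpoints at $p_1$ and $p_2$ are fixed by the index order on the left and by $\pi$ on the right, and its required crossings with each $\ell(u)$ are prescribed by the prescribed types. I would argue that the order along $\ell(v^\ast)$ in which it must cross the other pseudolines is forced by the pairwise types among the relevant vertices $u<v^\ast$, and that this forced order is consistent whenever the triple and quadruple rules hold. A minimal contradiction in the ordering would either be a triple of vertices whose types violate the triple rule, or a quadruple containing $v^\ast$ whose types violate the quadruple rule of Lemma~\ref{lem:qrule-2}; both are excluded by hypothesis. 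Once a consistent order of crossings along $\ell(v^\ast)$ is established, the routing of $\ell(v^\ast)$ through $D'$ is determined up to isotopy, and $v^\ast$ itself can be placed in the unique face realizing the prescribed $A/B$ type of each of its crossings.

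I expect the main technical work to lie in that final case analysis for the ordering of crossings along $\ell(v^\ast)$: one must carefully enumerate the potential local obstructions between pairs of forced crossings and invoke the appropriate instance of the triple or quadruple rule to rule each out. By comparison, the setup of the induction and the recovery of $\pi$ from the $N$/non-$N$ pattern are routine.
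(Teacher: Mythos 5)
Your necessity direction is fine, but the sufficiency direction has a genuine gap at exactly the point you identify as ``the heart of the proof.'' The claim that, once the order of crossings along $\ell(v^\ast)$ is forced by the types, ``the routing of $\ell(v^\ast)$ through $D'$ is determined up to isotopy'' is unjustified, and it is where all the difficulty lives. Two separate problems arise. First, the induction hypothesis only hands you \emph{some} drawing $D'$ realizing the restricted types; such drawings are not unique up to homeomorphism (for instance, the relative order of crossings coming from $N$-pairs, and more generally the order of crossings along a fixed curve, is not determined by the type assignment alone). A new curve with a prescribed crossing pattern may be insertable into one realization but not another, so the induction would have to be strengthened to produce a realization with an explicit extension property. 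Second, even for a fixed $D'$, knowing the linear order in which $\ell(v^\ast)$ must meet the other curves does not by itself yield a curve in $D'$ that crosses each $\ell(u)$ at most once: you must show that each consecutive pair of required crossings is reachable through the cell complex of $D'$ without forcing an extra crossing with some third curve (and without crossing the curves of the $N$-pairs at all). That is a nontrivial one-element extension lemma for quasi-pseudoline arrangements, and nothing in your outline proves it.

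The paper avoids this insertion problem entirely by a non-inductive, global construction: it appends an auxiliary quasi-pseudoline arrangement $T$ on the right so that every pair of curves crosses exactly once, defines the full local sequence $\alpha_i$ of every curve simultaneously (using the quadruple rule of Lemma~\ref{lem:qrule-2} to show the induced ``left-to-right'' tournament on each edge is transitive, hence a linear order), and then invokes the characterization of local sequences of pseudoline arrangements (Lemma~\ref{lem:localseq}) to obtain the whole arrangement at once. That characterization is precisely the tool that certifies global routability from purely local (triple-wise) data; your inductive scheme would need to prove an equivalent extension statement from scratch. So your proposal is a genuinely different route, but as written it is incomplete: either supply and prove the extension lemma for inserting $\ell(v^\ast)$ into a suitably chosen $D'$, or fall back on a global argument such as the local-sequence characterization.
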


The proof of this result uses the following known result on {\em local
  sequences} in pseudoline arrangements. Given an arrangement of $n$
pseudolines, the local sequences are the permutations $\alpha_i$ of
$[n]\setminus \{ i\}$, $i\in [n]$, representing the order in which the $i$th
pseudoline intersects the $n-1$ others.

\begin{lemma}[Thm. 6.17 in \cite{f-gga-04}]\label{lem:localseq}
  The set $\{ \alpha_i\}_{i\in [n]}$ is the set of local sequences of
  an arrangement of $n$ pseudolines if and only if
  $$
  ij \in \inv(\alpha_k) \Leftrightarrow ik \in \inv(\alpha_j) 
      \Leftrightarrow jk \in \inv(\alpha_i),
  $$
  for all triples $i,j,k$, where $\inv(\alpha)$ is the set of
  inversions of the permutation $\alpha$.
\end{lemma}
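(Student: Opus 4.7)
The plan is to prove the two directions separately: necessity is a direct geometric observation about triangles in pseudoline arrangements; sufficiency is established by induction on $n$, inserting one pseudoline at a time.

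\textbf{Necessity ($\Rightarrow$).} Fix a pseudoline arrangement with lines indexed by their initial (leftmost) vertical order, and fix indices $i<j<k$. Since only vertically adjacent pseudolines can cross, and $\ell_j$ lies strictly between $\ell_i$ and $\ell_k$ in the order restricted to $\{\ell_i,\ell_j,\ell_k\}$ until it participates in a crossing with one of them (swaps with lines outside the triple do not affect this relative order), the crossing $ik$ can be neither first nor last among the three pairwise crossings of $\ell_i,\ell_j,\ell_k$. Hence there are only two possible left-to-right orders for the three crossings: either $(ij,ik,jk)$ or $(jk,ik,ij)$. Reading off the local sequences directly, in the first order all three statements ``$jk\in\inv(\alpha_i)$'', ``$ik\in\inv(\alpha_j)$'', ``$ij\in\inv(\alpha_k)$'' are simultaneously false, whereas in the second order all three are simultaneously true. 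This gives the required equivalence.

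\textbf{Sufficiency ($\Leftarrow$).} By induction on $n$; the case $n\le 2$ is immediate. Assume the claim for $n-1$, and let $\{\alpha_i\}_{i\in[n]}$ satisfy the triple condition. Deleting the entry $n$ from each $\alpha_i$ with $i<n$ yields sequences $\{\alpha'_i\}_{i\in[n-1]}$ that still satisfy the triple condition restricted to $[n-1]$, and hence, by induction, are realized by an arrangement $\mathcal{A}'$ on $[n-1]$. It remains to insert an $n$-th pseudoline $\ell_n$ into $\mathcal{A}'$ so that the global local sequences are $\{\alpha_i\}_{i\in[n]}$. Sweep left to right across $\mathcal{A}'$: between consecutive $\ell_n$-crossings the slot of $\ell_n$ among the other lines is constant, and at each $\ell_n$-crossing it changes by $\pm 1$. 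Because $\ell_n$ starts above all other lines and (since every pair of pseudolines must cross exactly once) ends below all of them, the slot must decrease by $1$ at each of the $n-1$ crossings. Consequently the $t$-th line crossed by $\ell_n$, namely $\alpha_n(t)$, is forced to be the immediate lower neighbor of $\ell_n$'s current slot in the sweep state of $\mathcal{A}'$ just before crossing $t$.

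\textbf{Main obstacle.} The heart of the argument is verifying that $\alpha_n$ is consistent with $\mathcal{A}'$ under this sweep: at the sweep moment when $\ell_n$ is supposed to cross $\ell_i$, no other pseudoline $\ell_j$ may lie strictly between $\ell_n$ and $\ell_i$ in the current vertical order. Suppose for contradiction that such an $\ell_j$ is present. Applying the triple condition to $\{i,j,n\}$ and using that $\mathcal{A}'$ already realizes $\alpha'_i$ and $\alpha'_j$ correctly, a short case analysis on the three inversion predicates $ij\in\inv(\alpha_n)$, $in\in\inv(\alpha_j)$, $jn\in\inv(\alpha_i)$ forces $\ell_n$ to have crossed $\ell_j$ before $\ell_i$ in $\alpha_n$, contradicting the assumption that $\ell_i$ is the next crossing. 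Ruling this out for every triple involving $n$ produces an $x$-monotone curve $\ell_n$ through $\mathcal{A}'$ realizing $\alpha_n$, and checking that the updated local sequence at each $\ell_i$ places $n$ in the prescribed position again reduces to applying the triple condition to $\{i,j,n\}$ for each relevant $j$. This combinatorial consistency check is the technically demanding step; the geometric realization of $\ell_n$ as a simple $x$-monotone curve is then routine, completing the induction.
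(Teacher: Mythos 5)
First, a point of comparison: the paper does not prove this lemma at all --- it is imported as Theorem~6.17 of Felsner's book~\cite{f-gga-04} and used as a black box in the proof of Theorem~\ref{thm:consist-2} --- so there is no in-paper argument to measure yours against, and I can only assess your proof on its own terms. Your necessity direction is correct and complete: in the subarrangement induced by $\ell_i,\ell_j,\ell_k$ with $i<j<k$, the crossing $ik$ can be neither first nor last among the three crossings, so the only two possible orders are $(ij,ik,jk)$ and $(jk,ik,ij)$, and these make the three inversion predicates simultaneously false, respectively simultaneously true.

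The sufficiency direction, however, has a genuine gap exactly where you flag ``the technically demanding step.'' After obtaining $\mathcal{A}'$ on $[n-1]$ by induction, inserting $\ell_n$ must satisfy two kinds of constraints at once: the crossings of $\ell_n$ must occur in the order $\alpha_n$, \emph{and} for each $i<n$ the crossing with $\ell_i$ must land on the specific edge of $\ell_i$ in $\mathcal{A}'$ delimited by the crossings with the two neighbours of $n$ in $\alpha_i$. Your sketch engages seriously only with the first constraint, and even there the argument is incomplete: the phrase ``the sweep moment when $\ell_n$ is supposed to cross $\ell_i$'' is not well defined, because you are free to delay each new crossing and let the crossings of $\mathcal{A}'$ itself reorder the wires in the meantime. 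The obstruction you must actually rule out is that \emph{throughout the admissible window on $\ell_i$} some uncrossed $\ell_j$ separates $\ell_n$ from $\ell_i$ (or some already-crossed line lies on the wrong side), and ruling this out needs the triple condition on $\{i,j,n\}$ \emph{combined with} the realized positions of the windows on $\ell_i$ and $\ell_j$, not just the three inversion predicates; the promised ``short case analysis'' is never performed and it is not evident that it closes. Finally, the induction hypothesis yields only \emph{some} wiring diagram with the prescribed local sequences, and these are not unique (independent crossings can be reordered), so you must also argue that the insertion succeeds for an arbitrary such realization or that a suitable one can be chosen. None of this is fatal to the plan --- an induction of this shape can be carried through --- but as written the core of the sufficiency proof is asserted rather than proved.
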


\begin{proof}[Proof of Theorem~\ref{thm:consist-2}]
The necessity of the condition was already 
stated in Observation~\ref{obs:triple-rule}

We proceed by giving an algorithm for constructing an appropriate
drawing. First recall from the proof of Lemma~\ref{lem:two-non-decomp}
that having legal triples implies that 
the sets of inversion pairs and its complement, the set of 
non-inversion pairs, are both transitive. Hence, there is a 
well defined permutation $\pi$ representing the rotation at $p_2$.

We aim at defining the local sequences $\alpha_i$ that allow an
application of Lemma~\ref{lem:localseq}. This will yield a pseudoline
arrangement. A drawing of $K_{2,n}$, however, will only correspond to a
quasi-pseudoline arrangement.  Therefore, we first construct a
quasi-pseudoline arrangement $T$ for the pair $(\ovl{\pi},\id)$,
i.e., only the quasi-pseudolines corresponding to $i$ and $j$ with
$\type (i,j)=N$ cross in $T$. The idea is that appending $T$ on the
right side of the quasi-pseudoline arrangement of the drawing yields a 
full pseudoline arrangement.

Now fix $i\in [n]$. Depending on $i$ we partition the set $[n]\setminus i$ into
five parts. For a type~$X$ let $X_<(i) = \{ j : j < i \text{ and }
\type(j,i) = X \}$ and $X_>(i) = \{ j : j > i \text{ and } \type(i,j)
= X \}$, the five relevant parts are $A_<(i)$, $A_>(i)$, $B_<(i)$,
$B_>(i)$, and $N(i) = N_<(i) \cup N_>(i)$.  The pseudoline $\ell_i$
has three parts.  The edge incident to $p_1$ (the red edge) is crossed
by pseudolines $\ell_j$ with $j \in A_>(i) \cup B_<(i)$.  The edge
incident to $p_2$ (the blue edge) is crossed by pseudolines $\ell_j$
with $j \in A_<(i) \cup B_>(i)$.  The part of $\ell_i$ belonging to
$T$ is crossed by pseudolines $\ell_j$ with $j \in N(i)$.  The order
of the crossings in the third part, i.e., the order of crossings with
pseudolines $\ell_j$ with $j\in N(i)$,  is prescribed by $T$.  

Regarding the order of the crossings on the second part we know that
the lines for $j\in A_<(i)$ have to cross $\ell_i$ from left to right in order
of decreasing indices and the lines for $j\in B_>(i)$ have to cross $\ell_i$
from left to right in order of increasing indices, see
Figure~\ref{fig:local}.  If $j\in A_<(i)$ and
$j'\in B_>(i)$, then consistency of triples implies that $\type(j,j')
\in \{A,B\}$. If $\type(j,j')=A$, then on $\ell_i$ the crossing of $j'$
has to be left of the crossing of $j$. If $\type(j,j')=B$, then on
$\ell_i$ the crossing of $j$ has to be left of the crossing of $j'$.

%%%%%%%%%%%%%%%%%%%%%%%%%%%%%%%%%%%%%%%%%%%%%%%%%%%
%%
% in einem figure environment mit caption
   \calc_figscale{38}
    \begin{figure}[htb]
    \centerline{\input{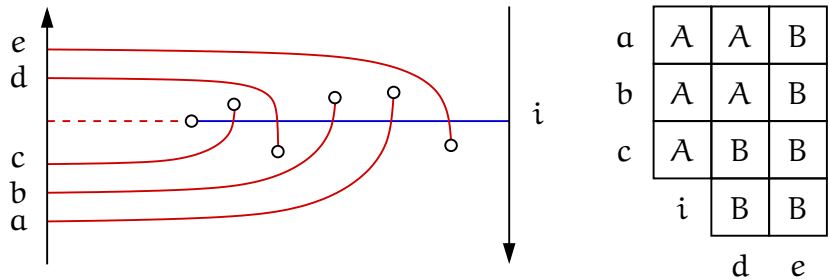}}
    \caption{Crossings on the edge $i\,p_2$.\label{fig:local}}
    \end{figure}
    
%%
%%%%%%%%%%%%%%%%%%%%%%%%%%%%%%%%%%%%%%%%%%%%%%%%%%%

The described conditions yield a {\it ``left--to--right''} relation
$\to_i$ such that for all $x,y\in A_<(i) \cup B_>(i)$ one of $x\to_i
y$ and $y\to_i x$ holds.  We have to show that $\to_i$ is
acyclic. Since $\to_i$ is a tournament it is enough to show that
$\to_i$ is transitive.

Suppose there is a
cycle $x\to_i y\to_i z\to_i x$. If $x,y < i$ and $z > i$, then
$\type(x,i)=\type(y,i) = A$, moreover, from
$x\to_i y$ we get $y < x$ and from $y\to_i z\to_i x$
we get $\type(x,z)= A$, and $\type(y,z)=B$. Since 
$\type(i,z) = B \neq N$ this is a violation of the 
second quadruple rule of Lemma~\ref{lem:qrule-2}.

If $x <i$ and $y,z >i$, then we have $\type(i,y)=\type(i,z) = B$.
From this together with $y\to_i z$ we obtain $y < z$, and
$ z\to_i x \to_i y$ yields
$\type(x,y)= B$, and $\type(x,z)=A$. 
This is a violation of the first quadruple rule of Lemma~\ref{lem:qrule-2}.

Adding the corresponding arguments for the order of crossings on the 
first part of line $\ell_i$ we conclude that the permutation
$\alpha_i$ is uniquely determined by the given types and the choice of
$T$.

The consistency condition on triples of local sequences needed for
the application of Lemma~\ref{lem:localseq}
is trivially satisfied because legal triples of types correspond to
drawings of $K_{2,3}$ and each such drawing together with $T$ 
consists of three pairwise crossing pseudolines.
\end{proof}

Since the condition only involves triples and quadruples of vertices in $V$,
this directly yields a polynomial-time algorithm for consistency checking.
By observing that the information given by the types is equivalent to specifying which
pairs of edges cross, we directly get the analogue statement for AT-graphs.

\begin{corollary}[AT-graph realizability]
  There exists an $O(n^4)$ algorithm for deciding the existence of an outer drawing of an AT-graph whose
  underlying graph is of the form $K_{2,n}$.
\end{corollary}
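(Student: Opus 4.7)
The plan is to reduce AT-graph realizability directly to the type-consistency check provided by Theorem~\ref{thm:consist-2}. Given an AT-graph whose underlying graph is $K_{2,n}$ with parts $P=\{p_1,p_2\}$ and $V$, I would first extract, for every pair $\{a,b\}\subseteq V$, the type of the induced $K_{2,2}$ on $\{p_1,p_2,a,b\}$. Among the six pairs of edges of this $K_{2,2}$, only the two ``diagonal'' pairs $\{p_1a,p_2b\}$ and $\{p_2a,p_1b\}$ can cross (the other four share an endpoint), and the three possible crossing patterns in an outer drawing are exactly the types $N$, $A$, $B$ of Figure~\ref{fig:threetypes-2}. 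Hence the type of each pair is determined in constant time from the prescribed crossing set $\mathcal{X}$; if $\mathcal{X}$ specifies a crossing pattern incompatible with $\{N,A,B\}$ (namely, both diagonal pairs simultaneously crossing), I reject at once.

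With the $\binom{n}{2}$ types in hand, I would enumerate all $\binom{n}{3}$ triples and verify each is among the $17$ legal ones classified by Lemmas~\ref{lem:dec tri} and \ref{lem:two-non-decomp}, and then enumerate all $\binom{n}{4}$ quadruples and check the two cases of the quadruple rule (Lemma~\ref{lem:qrule-2}). The algorithm accepts iff every check passes. Correctness follows immediately from Theorem~\ref{thm:consist-2}: any outer drawing realizing $\mathcal{X}$ induces the type pattern we just computed, and conversely the theorem guarantees that any type pattern passing the triple and quadruple tests is realized by some outer drawing, which by construction has exactly the prescribed crossings.

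The running time is dominated by the $O(n^4)$ quadruple checks, with the triple step costing $O(n^3)$ and the type extraction $O(n^2)$; each individual check takes $O(1)$ via a precomputed lookup table of legal triples and the two forbidden quadruple patterns. Since the substantive combinatorial work has been carried out in Theorem~\ref{thm:consist-2}, there is no real obstacle here; the only conceptual point worth verifying is the equivalence between AT-graph data and types, which amounts to the observation that the three types $N,A,B$ exhaust all simple outer drawings of $K_{2,2}$.
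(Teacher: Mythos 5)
Your proposal is correct and follows the paper's own argument: reconstruct the types $N,A,B$ from the prescribed crossing pairs (rejecting crossing patterns that match no outer drawing of $K_{2,2}$), then invoke Theorem~\ref{thm:consist-2} by checking all $O(n^3)$ triples and $O(n^4)$ quadruples. The extra detail you give about the diagonal pairs being the only candidates for crossings is exactly the observation the paper summarizes when it says the three types encode the crossing pairs, so there is nothing to add.
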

\begin{proof}
  We can check that the three types of pairs in Figure~\ref{fig:threetypes-2} exactly prescribe which pairs
  of edges cross. Furthermore, given the set of crossing pairs, we can reconstruct the type assignment.
  We can then check that every triple is legal and that the quadruple rule is satisfied in time proportional to the number of triples and quadruples, hence $O(n^4)$.
  \end{proof}

%%%%%%%%%%%%%%%%%%%%%%%%%%%%%%%%%%%%%%%%%%%%%%%%%%%%%%%%%%%%%%%%%%%%%%%%%%%%%%%%
\section{Outer Drawings with $k=3$}\label{sec:k=3}
%%%%%%%%%%%%%%%%%%%%%%%%%%%%%%%%%%%%%%%%%%%%%%%%%%%%%%%%%%%%%%%%%%%%%%%%%%%%%%%%
\def\U#1{B_{#1}}
\def\S#1{W_{#1}}

At the beginning of the previous section we have seen that any pair of
rotations is feasible for outer drawings of $K_{2,n}$. This is not true in
the case of $k> 2$. For $k=4$ the system of rotations
$([1,2],[2,1],[1,2],[2,1])$ is easily seen to be infeasible.  In the
case $k=3$ it is less obvious that infeasible systems of rotations
exist. We also had an efficient characterization 
of consistent assignments of types for $k=2$. We generalize this to $k=3$.

We again start by looking at the types for pairs, i.e., at all
possible outer drawings of~$K_{3,2}$. We already know that if the rotation
system is uniform $(\id_2,\id_2,\id_2)$, then there are three types of
drawings. The other three options $(\id_2,\ovl{\id_2},\ovl{\id_2})$,
$(\id_2,\ovl{\id_2},{\id_2})$, and $(\id_2,{\id_2},\ovl{\id_2})$, each
have a unique drawing. Figure~\ref{fig:sixtypes} shows the six
possible types and associates them to the symbols $\U{\alpha}$, and
$\S{\alpha}$, for $\alpha=1,2,3$.

%%%%%%%%%%%%%%%%%%%%%%%%%%%%%%%%%%%%%%%%%%%%%%%%%%%
%%
% in einem figure environment mit caption
   \calc_figscale{33}
    \begin{figure}[htb]
    \centerline{\input{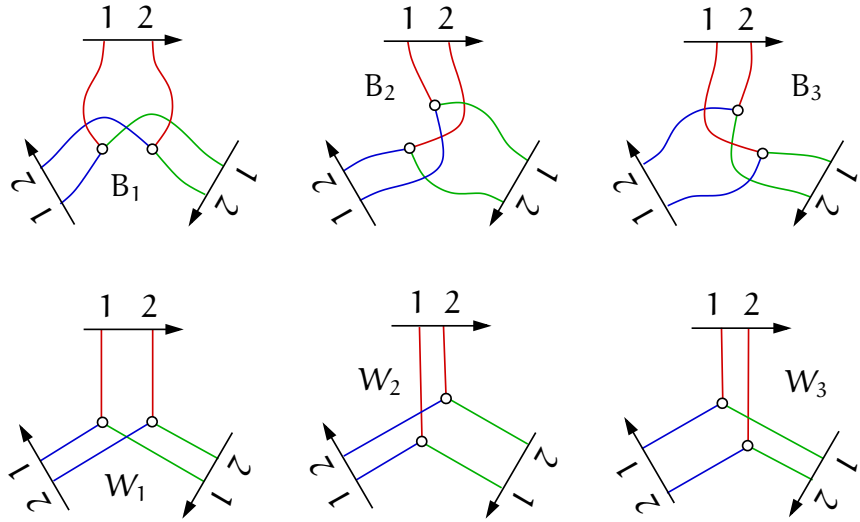}}
    \caption{The six types of outer drawings of $K_{3,2}$.\label{fig:sixtypes}}
    \end{figure}
    
%%
%%%%%%%%%%%%%%%%%%%%%%%%%%%%%%%%%%%%%%%%%%%%%%%%%%%

\begin{comment} % do we need this?
The three edges emanating from a vertex $i\in [n]$ partition the drawing area
into three regions. Let $R_\alpha(i)$ be the region bounded by the two edges
$i\,p_{\alpha+1}$ and $i\,p_{\alpha-1}$ not containing $p_\alpha$ 
($R_\alpha(i)$ is the same as the quadrant $Q_{\alpha+1}(i)$ from
Section~\ref{sec:uniform}). When the types have been prescribed for all pairs
of vertices we know which vertices are located in which region of $i$.  The
set of vertices in $R_\alpha(i)$ is denoted $S_\alpha(i)$.  The conditions for
$j\in S_\alpha(i)$ can be read from Figure~\ref{fig:sixtypes}. For example
$j\in S_2(i)$ if either $i,j$ is of type $\U2$ or $j<i$ and the type is $\S1$
or $\S3$.
\end{comment}

This classification allows us to reason about the following simple example.
\begin{proposition}\label{prop:infeasible}
The system $(\id_4,[4,2,1,3],[2,4,3,1])$ is an infeasible set of rotations. 
\end{proposition}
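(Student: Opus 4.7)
The strategy is to exploit the fact that the given rotations force almost all pair-types, leaving only one pair free, and then eliminate the remaining freedom by a short case analysis. As a first step I would restrict each of $\rho_1 = \id_4$, $\rho_2 = [4,2,1,3]$, and $\rho_3 = [2,4,3,1]$ to each of the $\binom{4}{2} = 6$ pairs of vertices in $V = \{1,2,3,4\}$. A direct check shows that only the pair $\{2,3\}$ has a uniform restricted rotation $(\id_2, \id_2, \id_2)$, while each of the other five pairs $\{1,2\}, \{1,3\}, \{1,4\}, \{2,4\}, \{3,4\}$ has a non-uniform restriction. By the classification in Figure~\ref{fig:sixtypes}, this means $\{2,3\}$ admits the three types $B_1, B_2, B_3$, whereas each of the other five pairs has a uniquely forced type among $W_1, W_2, W_3$.

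The second step is to translate each forced pair-type into a geometric constraint. The three edges from a vertex $v \in V$ to $p_1, p_2, p_3$ split the plane into three regions, one opposite each of $p_1, p_2, p_3$, and Figure~\ref{fig:sixtypes} shows that the type of a pair $\{v, w\}$ determines which of these regions contains $w$, and symmetrically which region of $w$ contains $v$. Reading off the five forced $W$-types pins down, for each vertex of $V$, the forced region placement of every other vertex. With these constraints in hand, I would then try each of the three choices $B_1, B_2, B_3$ for the type of $\{2,3\}$ and trace through the implied region assignments to look for an inconsistency---for instance, a vertex forced into two different regions of the same vertex, or two edges forced to cross twice. A cleaner alternative is to show that one of the induced sub-$K_{3,3}$ outer drawings that does not involve the pair $\{2,3\}$---namely on $\{1,2,4\}$ or on $\{1,3,4\}$, both of which have all three pair-types forced---is already infeasible on its own, which handles all three cases at once.

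The main obstacle is pinning down from Figure~\ref{fig:sixtypes} the precise geometric semantics of each of the six types, that is, the correspondence between each $W_\alpha$ and the resulting region placement, and keeping this dictionary consistent throughout the case analysis. Once that correspondence is fixed, I expect the contradiction to emerge quickly; the real effort is careful bookkeeping across the three (or, in the cleaner variant, one) cases.
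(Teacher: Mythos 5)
Your first step reproduces the paper's setup exactly: restricting the three rotations to the six pairs of $V$ shows that $\{2,3\}$ is the only pair with a uniform restriction, so its type is a free $B_\alpha$, while the other five pairs are forced, namely $(1,2)\mapsto W_1$, $(1,3)\mapsto W_3$, $(1,4)\mapsto W_1$, $(2,4)\mapsto W_2$, $(3,4)\mapsto W_1$. Your primary route---eliminating $\alpha=1,2,3$ by examining induced subconfigurations---is also the paper's route in spirit. The gap is that the proposal stops where the proof actually begins: the elimination is the entire content of the argument, and it is not carried out. In the paper it is done by inspecting the two triples that contain the free pair: the subtable on $\{1,2,3\}$, with types $(W_1,W_3,B_\alpha)$, is realizable as an outer drawing of $K_{3,3}$ only for $\alpha=2$, while the subtable on $\{2,3,4\}$, with types $(B_\alpha,W_2,W_1)$, is realizable only for $\alpha=3$; these two forced values are incompatible, so no drawing exists. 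Saying that you ``expect the contradiction to emerge quickly'' does not substitute for these two concrete checks against the six drawings of $K_{3,2}$ in Figure~\ref{fig:sixtypes}.

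Moreover, your ``cleaner alternative'' fails outright. The subsystems on $\{1,2,4\}$ (types $W_1,W_1,W_2$, i.e.\ restricted rotations $((1,2,4),(4,2,1),(2,4,1))$) and on $\{1,3,4\}$ (types $W_3,W_1,W_1$) are both feasible; in fact they are realizable even with straight lines, e.g.\ for $\{1,2,4\}$ place $1=(0,0)$, $2=(1,0)$, $4=(2,\tfrac32)$ with the outer vertices at infinity in directions $90^\circ$, $-90^\circ$, $-135^\circ$, and a similar placement works for $\{1,3,4\}$ (alternatively, their projections pass the legal-triple test of Theorem~\ref{thm:consist-3}). Indeed every $3$-subset of $V$ yields a feasible subsystem here---the triples containing $\{2,3\}$ each admit exactly one value of $\alpha$---so the infeasibility cannot be localized to a single sub-$K_{3,3}$, let alone one avoiding the free pair. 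The contradiction genuinely requires combining the constraints coming from the two triples $\{1,2,3\}$ and $\{2,3,4\}$, which is precisely what the paper's proof does.
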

\begin{proof}
The table of types for the given permutations must be of the following form:
\smallskip
\centerline{$
\begin{ytableau}
\none[1]  &  \S1        & \S3        & \S1 \\
\none     & \none[2]    & \U\alpha   & \S2 \\
\none     & \none       & \none[3]   & \S1 \\
\none     & \none       &\none       & \none[4]
\end{ytableau}
$}
Looking at the subtable \nitriple{\S1}{\S3}{\U\alpha} corresponding to
$\{1,2,3\}$ one can realize that there is only one choice for $\alpha$,
namely $\alpha=2$. The same figure shows that the
subtable~\nitriple{\U\alpha}{\S2}{\S1} of $\{2,3,4\}$ again only
allows a unique choice of $\alpha$, namely $\alpha=3$.  This proves
that there is no drawing for this set of rotations.
\end{proof}

Let~$T_1$ be the assignment of types in $\{A,B,N\}$
corresponding to the outer drawing of $K_{2,n}$ with outer vertices $(p_3,p_2)$.
Similarly, let $T_2$ be the table
corresponding to $(p_1,p_3)$ and $T_3$ the table
corresponding to $(p_2,p_1)$.
Note that these assignments determine the rotation system $(\pi_1,\pi_2,\pi_3)$.
An assignment $T$ of types in $\{B_1,B_2,B_3,W_1,W_2,W_3\}$ to every pair of vertices in $V$
translates to the following assignments $T_1,T_2,T_3$:
\begin{table}[h]
\centering 
 \begin{tabular}{l|cccccc}
  \qquad & $B_1$ & $B_2$ & $B_3$ & $W_1$ & $W_2$ & $W_3$ \\[0.5mm]
  \hline\noalign{\smallskip}
   $T_1$ & $B$   & $A$   & $A$   & $A$   & $N$   & $N$   \\ 
   $T_2$ & $A$   & $B$   & $A$   & $N$   & $A$   & $N$   \\ 
   $T_3$ & $A$   & $A$   & $B$   & $N$   & $N$   & $A$   \\ 
 \end{tabular}
 \caption{The projections of the six types of outer drawings of $K_{3,2}$.}
 \label{tab:legal-projections}
\end{table}

This table reveals a dependency between the types in $\{A,B,N\}$ of
every pair of vertices in $V$ for different pairs of vertices in $P$.
This dependency is instrumental in the upcoming consistency theorems.
For an assignment $T$ of types in $\{B_1,B_2,B_3,W_1,W_2,W_3\}$ to pairs of vertices
in $V$, we will refer to the induced assignments $T_1,T_2,T_3$ as the {\em projections}
of $T$.

We are now ready to state our consistency theorem on outer drawings for $k=3$.

%%%%%%%%%%%%%%%%%%%%%%%%%%%%%%%%%%%%%%%%%%%%%%%%%%%
\subsection{The consistency theorem for $k=3$}

\begin{theorem}[Consistency of outer drawings for $k=3$]
  \label{thm:consist-3}
    Consider the complete bipartite graph $G$ with vertex bipartition $(P,V)$ such that
  $|P|=3$ and $|V|=n$. Given the assignments $T_1, T_2, T_3$ of
    types in $\{A,B,N\}$ for the pairs $(p_3,p_2)$, $(p_1,p_3)$,
    and $(p_2,p_1)$ of vertices in $P$, respectively,
    there exists an outer drawing of $G$ realizing those types if and only if
    \begin{itemize}
      \item all three assignments obey the
        triple and quadruple consistency rules in the sense of Theorem~\ref{thm:consist-2},
      \item for any pair of vertices in $V$, the assignments correspond to an
        outer drawing of $K_{3,2}$ in the sense of Table~\ref{tab:legal-projections}.
    \end{itemize}
\end{theorem}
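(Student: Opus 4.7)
Necessity is immediate by restriction: given an outer drawing of $K_{3,n}$, erasing $p_i$ together with its $n$ incident edges leaves an outer drawing of $K_{2,n}$ on the remaining vertices, which realizes $T_i$; hence by Theorem~\ref{thm:consist-2}, $T_i$ satisfies the triple and quadruple rules. For any pair $\{v,w\}\subset V$ the subdrawing on $P\cup\{v,w\}$ is an outer drawing of $K_{3,2}$, falls into one of the six types of Figure~\ref{fig:sixtypes}, and the corresponding triple of projections is the row of Table~\ref{tab:legal-projections} matching that type.

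For sufficiency, my plan is to construct the drawing from the three projections. First I apply Theorem~\ref{thm:consist-2} to each $T_i$ separately to obtain a realizing quasi-pseudoline arrangement $D_i$ on $\{p_{i+1},p_{i-1}\}\cup V$, indices mod $3$. Each $D_i$ reads off rotations at its two boundary vertices; the pairwise compatibility enforced by Table~\ref{tab:legal-projections} guarantees that the two readings at each $p_i$ (one from each of the two arrangements containing $p_i$) agree, yielding a well-defined rotation system $(\pi_1,\pi_2,\pi_3)$.

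I would then merge the three arrangements into a single drawing by incremental insertion: starting from $D_3$, which already draws $K_{2,n}$ on $\{p_1,p_2\}\cup V$, insert the outer vertex $p_3$ on the boundary between $p_2$ and $p_1$ together with its $n$ edges, drawn in the order prescribed by $\pi_3$. The crossings that a new edge $p_3v$ must make with the already-drawn edges incident to $p_1$ and $p_2$, and the order in which those crossings appear along $p_3v$, are fully dictated by $T_1$, $T_2$ and by the local geometry of $D_3$ at~$v$. In the spirit of the proof of Theorem~\ref{thm:consist-2}, realizing $p_3v$ as a simple curve reduces to showing that a ``left-to-right along $p_3v$'' tournament on the prescribed crossings is acyclic.

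The principal obstacle is exactly this acyclicity, which is now richer than in the $k=2$ situation because a single edge $p_3v$ interacts with edges going to two different boundary vertices and with the full six-type palette $B_1,B_2,B_3,W_1,W_2,W_3$. I expect the case analysis to enumerate potential $3$- and $4$-cycles in the tournament and to rule each out by invoking either a triple/quadruple violation inside one of the $T_i$ or a forbidden combination on some pair $\{v,w\}$ that would contradict Table~\ref{tab:legal-projections}. The infeasible instance of Proposition~\ref{prop:infeasible} is a minimal obstruction of exactly this flavour, and it should provide a useful template for the case distinction.
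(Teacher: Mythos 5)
Your necessity argument is fine and matches the paper's. The problem is the sufficiency direction: what you have written is a plan, not a proof, and the plan defers exactly the part that constitutes essentially all of the work. Saying that realizing $p_3v$ ``reduces to showing that a left-to-right tournament on the prescribed crossings is acyclic'' and that you ``expect the case analysis'' to rule out the bad cycles leaves the core of the theorem unproven. Moreover, the reduction itself is not justified: in the $k=2$ proof the acyclic local orders could be assembled into a drawing only because of Lemma~\ref{lem:localseq} on local sequences of pseudoline arrangements (after padding with the auxiliary arrangement $T$); there is no analogous assembly lemma available for inserting the star of $p_3$ into an already-fixed quasi-pseudoline arrangement $D_3$. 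Even if each per-edge tournament along $p_3v$ were acyclic, you would still have to show that all $n$ new edges can be drawn simultaneously so that they are pairwise non-crossing, cross each old edge \emph{at most once} (i.e.\ realize exactly the prescribed crossing set, not merely consistent crossing orders along the new edges), and induce crossing orders along the old edges of $D_3$ that are themselves realizable. Controlling these interactions is precisely where the difficulty lies, and your sketch does not address it.

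For comparison, the paper does not insert $p_3$ incrementally. It takes the two drawings $D_2$ and $D_3$ (sharing the vertex $p_1$), superposes them by a homeomorphism that identifies the two copies of the non-crossing star at $p_1$ and makes the rotations at each $v\in V$ correct, and observes that in the superposition all red--green and red--blue crossings are already as prescribed and the green--blue crossing \emph{parities} are forced. The surplus green--blue crossings are then removed by switches at empty lenses, and the heart of the proof (Propositions~\ref{prop:v-lens} and~\ref{prop:v,w-lens}, with the eight configurations and the turn-back analysis) shows, using the consistency of the third table $T_1$, that a drawing in which every lens contains a vertex in fact has no lens at all. If you want to push your insertion approach through, you would need both a substitute for Lemma~\ref{lem:localseq} adapted to this setting and the full combinatorial case analysis you are currently only promising; as it stands, the proposal has a genuine gap.
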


This directly yields the following corollary.

\begin{corollary}[AT-graph realizability]
  There exists an $O(n^4)$ algorithm for deciding the existence of an outer drawing of an AT-graph whose
  underlying graph is of the form $K_{3,n}$.
\end{corollary}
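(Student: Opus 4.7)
The plan is to mirror the argument used for the analogous $K_{2,n}$ corollary: reduce the AT-graph realizability question to the type-consistency problem addressed by Theorem~\ref{thm:consist-3}, and then check the conditions in polynomial time.

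First, I would show that the AT-graph data on $K_{3,n}$ is equivalent to the assignment of a type in $\{B_1,B_2,B_3,W_1,W_2,W_3\}$ to every pair of vertices in $V$. For any pair $\{i,j\}\subseteq V$, the six edges of the induced $K_{3,2}$ form $\binom{6}{2}=15$ edge pairs, of which $9$ share an endpoint; only the remaining $6$ non-adjacent pairs may cross. By direct inspection of Figure~\ref{fig:sixtypes}, the six types produce pairwise distinct subsets of these candidate crossings. Hence for every pair in $V$ the type is recovered from the AT-graph in constant time, and the entire type table is built in $O(n^2)$ time.

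Second, I would convert the type assignment into the three projections $T_1, T_2, T_3$ using Table~\ref{tab:legal-projections}; this again takes $O(n^2)$ time. Then I would apply Theorem~\ref{thm:consist-3}: check the triple rule on all $\binom{n}{3}$ triples of $V$ for each of the three projections, and the quadruple rule on all $\binom{n}{4}$ quadruples. The second bullet of the theorem, namely consistency with Table~\ref{tab:legal-projections} on every pair, is automatic because the projections are built from that very table. The dominant cost is the $O(n^4)$ quadruple checks, yielding the claimed running time.

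The main obstacle, and indeed the only nontrivial step, is verifying that the six types in Figure~\ref{fig:sixtypes} induce pairwise distinct crossing patterns on $K_{3,2}$. This is a finite case analysis, but it must be carried out carefully, in particular for each pair $B_\alpha$ versus $W_\alpha$: these yield the same three regions around a vertex yet differ in the rotation at $p_\alpha$, so they must be distinguished by at least one crossing among the six non-adjacent edge pairs.
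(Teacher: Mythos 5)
Your proposal is correct and follows essentially the same route as the paper: reconstruct the six-valued type assignment from the AT-graph crossing data (using that the drawings of Figure~\ref{fig:sixtypes} have pairwise distinct crossing patterns), then invoke Theorem~\ref{thm:consist-3} by checking the triple and quadruple rules on the projections $T_1,T_2,T_3$, giving $O(n^4)$ overall. You spell out the reconstruction step and the automatic satisfaction of the Table~\ref{tab:legal-projections} condition in more detail than the paper's terse proof (and should just note explicitly that a pair whose crossing pattern matches none of the six types is immediately rejected), but the argument is the same.
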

\begin{proof}
  Again, one can check that the three types of pairs in Figure~\ref{fig:sixtypes} exactly prescribe which pairs of edges cross, and given the set of crossing pairs, we can reconstruct the type assignment.
  We can then check that every triple is legal and that the quadruple rule is satisfied in time proportional to the number of triples and quadruples, hence $O(n^4)$.
  \end{proof}

\begin{proof}[Proof of Theorem~\ref{thm:consist-3}]
Let us first note that one direction of the Theorem is easy: if there exists an outer drawing, then the assignments must be consistent. 

We now show that consistency of the type assignments is sufficient for the existence of an outer drawing.
Let $T$ be the assignment of types in $\{B_1,B_2,B_3,W_1,W_2,W_3\}$ to the pairs of $V$ given by Table~\ref{tab:legal-projections}, and let $(\pi_1,\pi_2,\pi_3)$ be the corresponding rotations. 
The consistency of the tables $T_2$ and $T_3$ in the sense of Theorem~\ref{thm:consist-2}
implies that there are drawings~$D_2$ and~$D_3$ of~$K_{2,n}$ realizing the type assignments $T_2$ and $T_3$.
The vertex $p_1$ (the outer vertex with rotation~$\pi_1$) and its edges form a non-crossing star in both drawings.

Let the drawing $D_2$ live on plane $Z_2$ and $D_3$ live on plane $Z_3$
and consider a fixed homeomorphism $\phi$ between the planes.  There
is a homeomorphism $\psi: Z_2 \to Z_2$ such that mapping $D_2$ via
$\phi\circ\psi$ to $Z_3$ yields a superposition of the two drawings
with the following properties.  
{
\Bitem Corresponding vertices are
mapped onto each other.  
\Bitem The stars of $p_1$ are mapped onto
each other, i.e., the edges at $p_1$ of the two
drawings are represented by the same curves.
\Bitem At each vertex $v\in V$ the rotation is correct, i.e., 
we see the edges to $p_1,p_2,p_3$  in clockwise order.
\Bitem The drawing has no touching edges, i.e., when two edges 
meet they properly cross in a single point. 

}\par

\smallskip
\ni
The drawing $D$ obtained by superposing  $D_2$ and $D_3$ is an outer drawing
of $K_{3,n}$.  We color the edges of $D$ as in our figures,
for example the edges incident to $p_2$ are the green edges. 
In $D$ each color class of edges is a non-crossing star. For the blue and the green this
is true because the edges come from only one of $D_2$ and $D_3$. For the
red star it is true due to construction. Moreover, all the red-blue and
red-green crossings are as prescribed by the original table $T$.
The problem we face is that there is little control on blue-green
crossings.
Let $\ered(v),\egreen(v),\eblue(v)$ be the red, green, and blue edge of $v$.

\medskip \ni {\bf Claim:} For all $v$, $w$ the parity of the number of
crossings between $\egreen(v)$ and $\eblue(w)$ in~$D$ is prescribed
by $T$, i.e., if $T$ requires a crossing between two edges, then they
have an odd number of crossings and an even number of crossings otherwise.
\medskip

Consider the curves $\egreen(v)\cup \ered(v)$ and 
$\eblue(w)\cup \ered(w)$. The rotations prescribe whether the number of
intersections of the two curves is odd or even. Hence, the parity is
respected by $D$. The crossings
of the pairs $(\egreen(v),\ered(w))$, $(\ered(v),\eblue(w))$
in~$D$ are as prescribed by $T(v,w)$. Hence, the
parity of the number of crossings of $\egreen(v)$ and $\eblue(w)$
in $D$ is also prescribed.
\qedclaim

Because the rotation at $v$ in $D$ is correct we also note: If
$\egreen(v)$ and $\eblue(v)$ cross, then the number of crossings is
even.  Hence, if $D$ has no pair of a green and a blue edge crossing
more than once, then $D$ is an outer drawing realizing the
types given by $T$.

Now consider a pair of edges $\egreen(v)$ and $\eblue(w)$ crossing
more than once, $v=w$ is allowed.  Use
a homeomorphism of the plane to make $\egreen(v)$ a horizontal
straight line segment, see Figure~\ref{fig:meander}. (In the
literature intersection patterns of two simple curves in the plane are
often called \emph{meanders}. They are of interest in enumerative and
algebraic combinatorics.)

The intersections with $\egreen(v)$ subdivide $\eblue(w)$ into a family of
\emph{arcs} and two extremal pieces. 
A blue arc defines an interval on the green edge, this is the
\emph{interval of the arc}. An arc together with its interval 
enclose a bounded region, this is the \emph{region of the arc}.

\Fact Apart from intersecting intervals any two regions of arcs over a
fixed green edge $\egreen(v)$ are either disjoint or nested. Because
blue edges are pairwise disjoint this also holds if the arcs are
defined by different blue edges.  
\smallskip

%%%%%%%%%%%%%%%%%%%%%%%%%%%%%%%%%%%%%%%%%%%%%%%%%%%
%%
% in einem figure environment mit caption
   \calc_figscale{26}
    \begin{figure}[htb]
    \centerline{\input{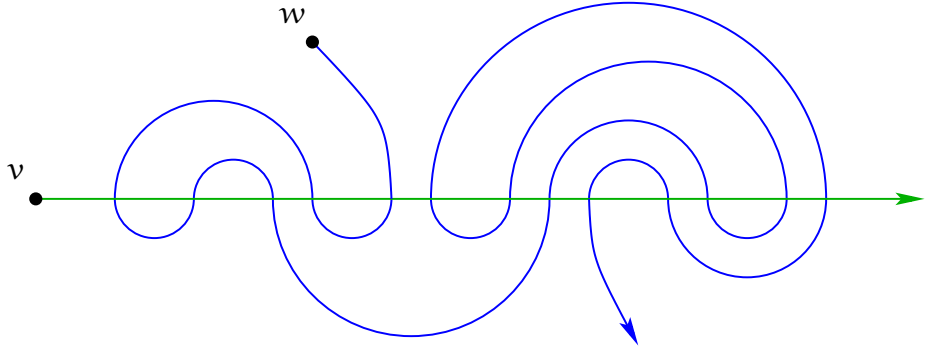}}
    \caption{A meander with 7 empty lenses.\label{fig:meander}}
    \end{figure}
    
%%
%%%%%%%%%%%%%%%%%%%%%%%%%%%%%%%%%%%%%%%%%%%%%%%%%%%

An inclusionwise minimal region of an arc is a \emph{lens}.
Since $D$ has a finite number of crossings and hence a finite number
of regions we have:

\Fact Every region of an arc contains a lens.
\smallskip

Consider a lens $L$ formed by pair of a green and a blue edge in $D$.
Suppose that~$L$ is an {\em empty lens}, i.e., there is no vertex
$u\in V$ inside of $L$. It follows that the boundary of $L$ is only
intersected by red edges, moreover, if a red edge $e$ intersects the
boundary of $L$ on the green side, then $e$ also intersects $L$ on the
blue side. Therefore, we can make~$\egreen(v)$ and~$\eblue(w)$ switch
sides at $L$ and with small deformations at the two crossings get rid
of them. It is important to note that the \emph{switch at a lens} does
not change the types. In particular after the switch the drawing still
represents assignment~$T$.

Apply switching operations until the drawing $D'$ obtained by
switching has the property that every lens in $D'$ contains a vertex.  

In the following we show that~$D'$ has no lens. For the proof we use
that the table $T_1$ corresponding to
$(\pi_3,\pi_2)$ also has to be consistent. Since~$D'$ has no lens we
conclude that it is an outer drawing that realizes the types
given by $T$.  The existence of such a drawing was the statement of
the theorem.
\end{proof}
%%%%%%%%%%%%%%%%%%%%%%%%%%%%%%%%%%%%%%%%%%%%%%%%%%%
\def\sigv{\sigma_{green}(v)}

Let $D'$ be a drawing with the property that every lens contains a
vertex. Before going into the proof that there is no lens in $D'$ we
fix some additional notation. For a given green edge $\egreen(v)$ the
regions defined by blue arcs over $\egreen(v)$ can be classified as
\emph{above}, \emph{below}, and \emph{wrapping}.
A wrapping region is a region with one contact between  $\egreen(v)$  
and $\eblue(w)$ from above and one contact from below.

\begin{lemma}\label{lem:wrapping}
There is no wrapping region.
\end{lemma}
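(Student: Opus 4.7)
The plan is to argue by contradiction: assume $D'$ has a wrapping region $W$, bounded by parts of $\egreen(v)$ and $\eblue(w)$ meeting with one crossing from above and one from below. Topologically such a region must wrap around exactly one of the two endpoints of $\egreen(v)$, which is either the interior vertex $v\in V$ or the outer vertex $p_2\in P$. I would take $W$ to be inclusion-minimal among wrapping regions in $D'$, and split into two cases.

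For the case that $W$ wraps around $v$, the argument is direct. The outer vertex $p_3$ lies outside $W$, so the blue edge $\eblue(v)$ must exit $W$ to reach $p_3$. Its only candidates for exit through the boundary of $W$ are $\egreen(v)$, which shares the vertex $v$ with $\eblue(v)$ and so cannot be crossed, or $\eblue(w)$, which is excluded because the blue star is non-crossing. This is already a contradiction. (The degenerate possibility $v=w$ is vacuous since $\egreen(v)$ and $\eblue(v)$ share an endpoint and have no crossings at all.)

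For the case that $W$ wraps around $p_2$, the outer vertex $p_2$ is incident only to green edges and the previous short argument is unavailable. The plan is to examine the green edges $\egreen(u)$ with $u \neq v$ whose angular position at $p_2$ lies in the sector of $W$. By the non-crossing property of the green star, these cannot cross $\egreen(v)$, so each such $\egreen(u)$ either ends at a vertex $u$ lying inside $W$, or exits $W$ through $\eblue(w)$ in exactly one crossing. For every $u\in W$, the edge $\eblue(u)$ must reach $p_3$ outside $W$ while avoiding $\eblue(w)$ (non-crossing blue star), so it is forced to cross $\egreen(v)$. Combining this with the parity claim established earlier in the proof of Theorem~\ref{thm:consist-3} and the consistency of the table $T_1$, one forces a green-blue crossing count inside $W$ that disagrees with the type prescribed by $T_1$. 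The main obstacle lies here: carefully pairing the forced crossings with the types $T_1(u,v)$ and invoking the minimality of $W$ (so that no further lens-switching can relieve the surplus) is the delicate part of the argument.
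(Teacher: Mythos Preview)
Your second case never occurs, and your first case is circular.

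\textbf{Case 2 is vacuous.} The vertex $p_2$ lies on the outer boundary of the drawing, and every blue arc of $\eblue(w)$ is a subcurve of an edge that stays inside the disk (its only boundary contact is at $p_3\neq p_2$). A curve inside the disk cannot wind around a boundary point, so no blue arc can wrap around $p_2$; every wrapping region automatically encloses $v$. The paper simply writes ``note that $v\in R$'' for exactly this reason. All of your parity and $T_1$-consistency machinery for Case~2 is aimed at an empty target.

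\textbf{Case 1 assumes what is to be proved.} At this point of the argument $D'$ is not yet a simple drawing: the whole difficulty is that green--blue pairs may still cross many times, and this includes $\egreen(v)$ and $\eblue(v)$. Your claim that these two edges ``share the vertex $v$ \dots\ and so cannot be crossed'' is precisely the content of Proposition~\ref{prop:v-lens}, which is proved \emph{after} Lemma~\ref{lem:wrapping} and in fact \emph{uses} it. Likewise, dismissing $v=w$ as ``vacuous since $\egreen(v)$ and $\eblue(v)$ \dots\ have no crossings at all'' is circular for the same reason.

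The paper avoids the circularity as follows. When $w=v$ it traps $\ered(v)$ instead of $\eblue(v)$: red--green and red--blue incidences are inherited from the simple drawings $D_2$ and $D_3$, so $\ered(v)$ genuinely cannot cross $\egreen(v)$ or $\eblue(v)$ in $D'$, and it is stuck inside the wrapping region. This already suffices to run the proof of Proposition~\ref{prop:v-lens}. Once that proposition is available, the case $w\neq v$ goes through with $\eblue(v)$ exactly as you wrote it. So your Case~1 argument is fine for $w\neq v$ \emph{provided} Proposition~\ref{prop:v-lens} is already in hand, but you need the red edge to bootstrap the $w=v$ case first.
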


\begin{proof}
  Let $R$ be the wrapping region formed by a blue edge wrapping around
  $\egreen(v)$ and note that $v\in R$. If the wrapping blue edge is
  $\eblue(v)$, then $\ered(v)$ has to intersect one of $\egreen(v)$ and
  $\eblue(v)$ to leave the region. This is not allowed. If the wrapping blue edge is
  $\eblue(u)$ with $u\neq v$, then $\eblue(v)$ has to intersect one of $\egreen(v)$ and
  $\eblue(u)$ to leave the region. Again, this is not allowed. 
\end{proof}

With a similar proof we get:
\begin{lemma}\label{lem:selftrap}
Vertex $w$ is not contained in the region defined by an arc
of $\eblue(w)$ over $\egreen(v)$.
\end{lemma}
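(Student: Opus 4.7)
The plan is to mirror the proof of Lemma~\ref{lem:wrapping}: exhibit an edge incident to $w$ which, by the non-crossing constraints inherited from the two drawings $D_2$ and $D_3$, cannot possibly exit the putative region. Suppose for contradiction that $w$ lies in the region $R$ bounded by an arc $\alpha$ of $\eblue(w)$ over $\egreen(v)$ together with the corresponding interval $I \subset \egreen(v)$; note that this forces $v \neq w$, as edges sharing the endpoint $w$ cannot cross each other and hence cannot produce an arc.

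The natural witness to follow is $\egreen(w)$. It starts at $w \in R$ and must terminate at $p_2$, which is on the outer boundary and therefore lies outside of $R$, so $\egreen(w)$ must cross $\partial R = \alpha \cup I$. Now $\alpha \subset \eblue(w)$ shares the endpoint $w$ with $\egreen(w)$, and in an outer drawing edges incident to a common vertex meet only at that vertex; hence $\egreen(w)$ cannot cross $\alpha$. Similarly, $I \subset \egreen(v)$, and $\egreen(w)$ and $\egreen(v)$ are both green edges emanating from $p_2$, so they meet only at $p_2$; since $I$ is a sub-interval of $\egreen(v)$ lying strictly between two crossings with $\eblue(w)$, the endpoint $p_2$ is not in the interior of $I$, so $\egreen(w)$ cannot cross $I$ either. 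This contradicts the requirement that $\egreen(w)$ leaves $R$, and we conclude $w \notin R$.

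There is no real obstacle here: the entire argument is a one-page analog of Lemma~\ref{lem:wrapping}, with $\egreen(w)$ playing the role that $\ered(v)$ or $\eblue(v)$ played there. The only point that needs a line of justification is that the common endpoint $p_2$ of $\egreen(w)$ and $\egreen(v)$ does not accidentally sit on the boundary arc $I$, which is immediate from the definition of $I$ as an open sub-interval of $\egreen(v)$ between two interior crossings.
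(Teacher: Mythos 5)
Your proof is exactly the paper's argument: the paper disposes of this lemma with the single remark that the green edge of $w$ would be trapped in such a region, and your write-up simply makes explicit why neither boundary piece can be crossed (the arc because it lies on $\eblue(w)$, which shares the vertex $w$ with $\egreen(w)$; the interval because green edges meet only at $p_2$, which is not interior to the interval). So the proposal is correct and takes essentially the same approach as the paper, at the same level of rigor as the paper's companion proof of the wrapping lemma.
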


\begin{proof}
The green edge of $w$ would be trapped in such a region.
\end{proof} 

For a region defined by an arc on $\eblue(w)$ we speak of a
\emph{forward} or \emph{backward} region depending on the direction of
the arc above $\egreen(v)$. Formally: label the~$t$~crossings of
$\egreen(v)$ and $\eblue(w)$ as $1,..,t$ according to the order on
$\eblue(w)$.  Arcs correspond to consecutive crossings. An arc
$[i,i+1]$ is \emph{forward} if crossing $i$ is to the left of crossing
$i+1$ on $\egreen(v)$, otherwise it is \emph{backward}.  The
permutation of $1,..,t$ obtained by reading the crossings from $v$ to
$p_2$ along $\egreen(v)$ is called the \emph{meander permutation} and
denoted $\sigv$.

A region is a \emph{relative lens} for $\egreen(v)$ and $\eblue(w)$
if it is above or below $\egreen(v)$ and minimal in the nesting
order of regions defined by $\egreen(v)$ and $\eblue(w)$.
In the sequel we sometimes abuse notation by talking of lenses when we
mean relative lenses.

\begin{proposition}\label{prop:v-lens}
For all $v\in V$, the green and blue edges of $v$ do not cross in $D'$.
\end{proposition}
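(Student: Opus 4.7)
The plan is to argue by contradiction: assume $\egreen(v)$ and $\eblue(v)$ cross in $D'$. The parity observation made immediately before the proposition (the rotation at $v$ is correct in $D'$, so two edges sharing $v$ must cross an even number of times) forces at least two crossings. Hence there is at least one arc of $\eblue(v)$ over $\egreen(v)$, and by the ``disjoint or nested'' fact an inclusion-minimal region $R$ among all such arcs is well defined. By Lemma~\ref{lem:wrapping}, $R$ lies strictly on one side of $\egreen(v)$, and combining the ``every region of an arc contains a lens'' fact with the defining property of $D'$, the region $R$ contains a vertex $w\in V$. Lemma~\ref{lem:selftrap} then yields $w\ne v$.

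Next I would analyze how the three edges of $w$ leave $R$. Since $\partial R$ consists of one segment of $\egreen(v)$ and one arc of $\eblue(v)$, and since green (respectively blue) edges are pairwise non-crossing, $\egreen(w)$ must exit $R$ across the blue arc while $\eblue(w)$ must exit across the green segment. This produces simultaneous crossings $\egreen(w)\cap \eblue(v)$ and $\eblue(w)\cap \egreen(v)$ in $D'$.

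To close the argument, I would contrast this with Table~\ref{tab:legal-projections}: for every one of the six admissible $K_{3,2}$ types, at most one of the pairs $(\egreen(v),\eblue(w))$, $(\eblue(v),\egreen(w))$ contributes a crossing in a valid drawing. The parity claim established before the meander discussion is preserved by the switching operations at empty lenses used to build $D'$, so the prescribed parities still hold in $D'$. It follows that at least one of the two forced crossings above must be accompanied by a second crossing of the very same edges, producing a strictly smaller blue-over-green arc region nested inside $R$. Choosing $R$ at the outset to be inclusion-minimal among \emph{all} blue-over-green arc regions of $D'$, not only those formed by $\eblue(v)$ and $\egreen(v)$, then contradicts this choice. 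The main obstacle is justifying that the parity-forced extra crossing indeed sits in a region strictly inside $R$ rather than somewhere unrelated; I would handle this by tracking the portions of $\egreen(w)$ and $\eblue(w)$ between $w$ (which lies inside $R$) and their first crossings with $\partial R$, and then applying the nesting/disjointness fact to any additional crossing of the same edge with $\egreen(v)$ or $\eblue(v)$.
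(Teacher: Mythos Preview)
Your approach diverges sharply from the paper's and, while the opening steps are fine, the closing minimality argument has a genuine gap that your proposed fix does not repair.

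The setup through the appeal to Table~\ref{tab:legal-projections} is correct: in any of the six $K_{3,2}$ types, at most one of the pairs $(\egreen(v),\eblue(w))$ and $(\eblue(v),\egreen(w))$ crosses, so by the parity claim at least one of these two pairs crosses an even number $\geq 2$ of times in $D'$. The problem is the conclusion that this forces a blue--over--green arc region strictly contained in $R$. Take the case where $\eblue(w)$ has even parity with $\egreen(v)$. The edge $\eblue(w)$ exits $R$ at a point $c_1$ on the green interval $I_R$ of $R$ (it cannot cross the blue arc), and then lies on the non-$R$ side of $\egreen(v)$. Its second crossing $c_2$ with $\egreen(v)$ may perfectly well lie \emph{outside} $I_R$. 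Then the arc of $\eblue(w)$ between $c_1$ and $c_2$ sits entirely on the non-$R$ side of $\egreen(v)$, so its region is disjoint from $R$, not nested inside it. The nesting/disjointness fact you invoke only tells you that the two regions are nested \emph{or} disjoint; it does not rule out the disjoint case, and here disjointness actually occurs. Your minimality hypothesis on $R$ is therefore not contradicted. The symmetric case where $\egreen(w)$ has even parity with $\eblue(v)$ produces arcs of $\eblue(v)$ over $\egreen(w)$, i.e.\ regions over a \emph{different} green edge, and again there is no reason these should sit inside $R$.

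The paper handles the proposition by an explicit case analysis rather than a minimality descent: it first pins down that there must be a lens \emph{above} $\egreen(v)$ (using the rotation at $v$ and the behaviour of $\ered(v)$ to determine the up/down direction of the first and last crossings of $\eblue(v)$ with $\egreen(v)$), and then treats forward and backward lenses separately. The forward case is dispatched by reading off $T_2$ and $T_3$ projections of a vertex $u$ in the lens and checking them against Table~\ref{tab:legal-projections}; the backward case splits further on whether the last crossing is to the left or right of the lens, and each subcase is killed either by trapping $\ered(u)$ or by producing an illegal triple of $T$-types. Your Table~\ref{tab:legal-projections} observation is in the same spirit as the forward-lens argument, but the paper's analysis of the backward-lens subcases is precisely what your minimality shortcut tries (and fails) to avoid.
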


\begin{proof}
If $\egreen(v)$ and $\eblue(v)$ cross, then there is at least one blue arcs
on $\eblue(w)$ over $\egreen(v)$ and, hence, there are regions. 

From the order of the three outer vertices and the fact that $\ered(v)$
has no crossing with the two other edges of $v$ we conclude that
at the last crossing of $\eblue(v)$ and $\egreen(v)$ the blue edge is
crossing $\egreen(v)$ downwards.

Since at the first crossing the blue edge is crossing upwards there is
an arc and consequently also a lens above $\egreen(v)$.

Suppose there is a forward lens above $\egreen(v)$. Let $u$ be a
vertex in the lens. Vertex~$u$ is below the line of $v$ in the
green-red arrangement.  Hence, either $T_3(u,v)=B$ or $T_3(u,v)=N$ and
$v<_1 u$. Vertex~$u$ is below the forward arc of $\eblue(v)$ forming
the lens. Hence, it is below the line of $v$ in the red-blue
arrangement, and either $T_2(u,v)=B$ or $T_2(u,v)=N$ and $u<_1 v$.
From Table~\ref{tab:legal-projections} we infer that the only legal
assignment is $T(u,v)=W_1$ and the projections are $T_2(u,v)=N$ and
$T_3(u,v)=N$. However, there is no consistent choice for the order of
$u$ and $v$ in $<_1$. This shows that there is no forward lens above
$\egreen(v)$.

Now let $[i,i+1]$ be a backward lens above $\egreen(v)$. Let $u$ be a
vertex in the lens.  We distinguish whether the last crossing on the
blue edge is to the right/left of the lens on $\egreen(v)$, see
Figure~\ref{fig:Fall2+3}.

%%%%%%%%%%%%%%%%%%%%%%%%%%%%%%%%%%%%%%%%%%%%%%%%%%%
%%
% in einem figure environment mit caption
   \calc_figscale{22}
    \begin{figure}[htb]
    \centerline{\input{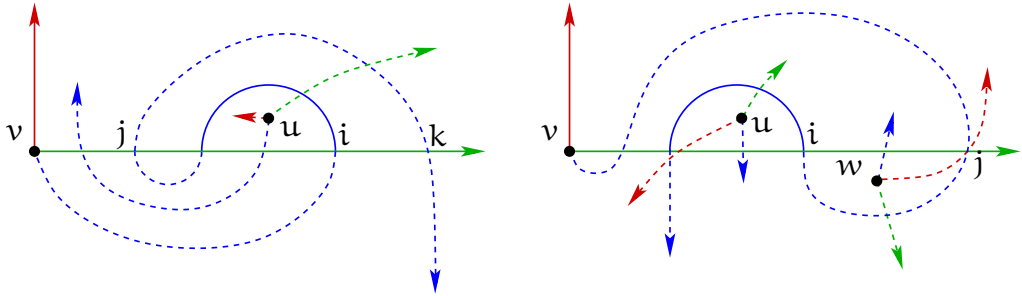}}
    \caption{The two cases for a backward lens above
$\egreen(v)$. Dashed curves indicate the order of some crossings along
the corresponding edges.\label{fig:Fall2+3}}
    \end{figure}
    
%%
%%%%%%%%%%%%%%%%%%%%%%%%%%%%%%%%%%%%%%%%%%%%%%%%%%%

Suppose the last crossing $k$ on the blue edge is to the right of the
lens, Figure~\ref{fig:Fall2+3} (left).  To get from the arc $[i,i+1]$ to
$k$ the edge $\eblue(v)$ has to cross $\egreen(v)$ upwards at some
crossing $j$ left of $i+1$.  The edge
$\eblue(u)$ has to stay disjoint from $\eblue(v)$, therefore, after
crossing $\egreen(v)$ to leave the lens it has a second crossing left
of $j$.  The edge $\egreen(u)$ has a crossing with $\eblue(v)$ in the
arc $[i,i+1]$ and another one between $j$ and~$k$.  Now consider the
edge $\ered(u)$. If it leaves the lens through $\egreen(v)$, then it
has entered a region whose boundary consists of a piece of $\eblue(u)$
and a piece of $\egreen(v)$. Edge $\ered(u)$ is disjoint from
$\eblue(u)$ and it has already used its unique crossing with
$\egreen(v)$. Hence, $\ered(u)$ has to leave the lens through the blue
arc on $\eblue(v)$. In this case, however, $\ered(u)$ enters a region
whose boundary consists of a piece of $\eblue(v)$ and a piece of
$\egreen(u)$.  Again $\ered(u)$ is trapped. Hence, this configuration
is impossible.

Suppose the last crossing on the blue edge is to the left of the lens,
Figure~\ref{fig:Fall2+3} (right). To get from $v$ to the backward arc
$[i,i+1]$ edge $\eblue(v)$ has to cross $\egreen(v)$ downwards at some
crossing $j$ right of $i$. Let $w$ be a vertex in the region of some
blue arc $[k,k+1]$ below $\egreen(v)$ with the property that $j\leq k < i$.
The edges $\ered(u)$ and $\ered(w)$ both need at least two
crossings with the union of $\egreen(v)$ and $\eblue(v)$. Hence, they
both cross $\egreen(v)$ and $\eblue(v)$. The order of the red edges at
$p_1$ implies $u <_1 v <_1 w$. Matching these data with the drawings
of Figure~\ref{fig:sixtypes} we find that $T(u,v) = B_3$ and $T(v,w) =
B_2$. These types together with the already known permutation $\pi_1$
imply that $\pi_2$ and $\pi_3$ also equal $\pi_1$. Hence, the triple is
a uniform system, and we must have $T(u,w) \in \{B_2,B_3\}$.
Since edge $\eblue(w)$ has to follow the
`tunnel' prescribed by $\eblue(v)$ there is a crossing of edges
$\egreen(v)$ and $\eblue(w)$. There is also a crossing of $\egreen(v)$
and $\ered(w)$.  However, neither in $B_2$ nor in $B_3$ we see
crossings of $\egreen(1)$ with $\eblue(2)$ and $\ered(2)$.  Hence,
again the configuration is impossible.
\end{proof}

We now come to the discussion of the general case.

\begin{proposition}\label{prop:v,w-lens}
  For all $v,w\in V$, there is no lens formed by $\egreen(v)$ and
  $\eblue(w)$ in~$D'$.
\end{proposition}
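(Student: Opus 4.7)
The approach is to argue by contradiction, following the same template used for Proposition~\ref{prop:v-lens}. Suppose $L$ is a lens in $D'$ formed by $\egreen(v)$ and $\eblue(w)$ with $v \neq w$. By the defining property of $D'$, the lens $L$ contains some vertex $u \in V$. Lemma~\ref{lem:selftrap} excludes $u = w$, and since $v$ is an endpoint of $\egreen(v)$ it lies on the boundary of $L$ rather than in its interior, so $u \neq v$ as well. Thus $u$, $v$, and $w$ are three distinct vertices and I focus on the induced drawing on $\{u,v,w\}\cup P$.

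Next, the plan is to pin down which boundary arc of $L$ each edge at $u$ must cross on its way out. Because green edges (resp.\ blue edges) share only their common endpoint $p_2$ (resp.\ $p_3$), the edge $\egreen(u)$ must exit $L$ across the blue arc of $\eblue(w)$, while $\eblue(u)$ must exit across $\egreen(v)$; the red edge $\ered(u)$ leaves across one of these two boundary pieces, using its unique crossing with that color. Consulting Table~\ref{tab:legal-projections}, the forced green-blue crossing for the pair $(u,v)$ implies $T(u,v) \in \{B_1, B_2, B_3, W_1\}$, and the same holds for $(u,w)$. Moreover, the very existence of $L$ means $\egreen(v)$ and $\eblue(w)$ cross, so also $T(v,w) \in \{B_1, B_2, B_3, W_1\}$.

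The main step is a case analysis on the geometric nature of $L$, parallel to the analysis in Proposition~\ref{prop:v-lens}: forward versus backward arc, and, by Lemma~\ref{lem:wrapping}, above versus below $\egreen(v)$. In each case the position of $u$ inside $L$ determines the projections $T_3(u,v)$ and $T_2(u,v)$ (from the red-green and red-blue arrangements attached to $v$), and similarly $T_3(u,w)$ and $T_2(u,w)$ are read off using the location of the bounding crossings of $\eblue(w)$ on $\egreen(v)$ within the meander permutation~$\sigv$. Matching these projections with Table~\ref{tab:legal-projections} and with the triple/quadruple consistency rules of Theorem~\ref{thm:consist-2} applied to $\{u,v,w\}$ in each of $T_1, T_2, T_3$ should rule out every combination, giving a contradiction. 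The forward-lens case should be the easiest: by an argument analogous to the forward-lens case of Proposition~\ref{prop:v-lens}, the constraints on $T(u,v)$ force $T(u,v)=W_1$, but then the projections of the already-constrained types $T(u,w)$ and $T(v,w)$ leave no consistent choice of $\pi_1$.

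The hard part will be the backward-lens subcase where the last crossing of $\eblue(w)$ along $\egreen(v)$ lies on the opposite side of $L$, analogous to the right panel of Figure~\ref{fig:Fall2+3}. There, tracing $\eblue(u)$ and $\ered(u)$ through the narrow \emph{tunnel} carved out by $\eblue(w)$ near~$v$ should force the triple $\{u,v,w\}$ to behave like a uniform rotation system, just as in the second case of the proof of Proposition~\ref{prop:v-lens}. Once the triple is uniform, the required crossings between $\egreen(v)$ and $\{\eblue(w),\ered(w)\}$ can be compared directly to the six drawings in Figure~\ref{fig:sixtypes}, and no type $B_1,B_2,B_3$ realizes the required pattern. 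This trap-style argument, adapted from the previous proof, is where I expect most of the technical case-checking to concentrate.
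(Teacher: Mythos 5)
There is a genuine gap, and it sits exactly where the real difficulty of this proposition lies. First, your type inferences are unsound: in $D'$ only the red--green and red--blue crossings are faithful to the prescribed tables ($T_3$ and $T_2$); for green--blue pairs only the \emph{parity} of the number of crossings is controlled. Hence an observed crossing of $\eblue(u)$ with $\egreen(v)$, or the two crossings bounding the lens itself, does not imply that the prescribed type contains a green--blue crossing --- for instance $T(v,w)$ could be $W_2$ or $W_3$ (no prescribed green--blue crossing) while $\egreen(v)$ and $\eblue(w)$ cross twice in $D'$. So the step ``$T(u,v),T(u,w),T(v,w)\in\{B_1,B_2,B_3,W_1\}$'' is not justified. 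The only information one may legitimately extract from $D'$ is positional: on which side of the red--green and red--blue stars a vertex lies, i.e.\ the orders $<_2,<_3$ and the values of $T_2,T_3$, from which constraints on $T_1$ are then derived via Table~\ref{tab:legal-projections}; this is what the paper's argument does.

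Second, a purely local analysis of one lens together with one interior vertex --- i.e.\ of the triple $\{u,v,w\}$ --- does not suffice, and this is why the paper's proof is long. The contradiction depends on the global shape of the meander of $\eblue(w)$ over $\egreen(v)$: the eight basic configurations are parametrized by the side of $w$ and by the position and direction of the \emph{last} crossing relative to the first, several of them require a second auxiliary vertex $x$ in another region of the same meander, and the resulting illegal tables are sometimes violations of the quadruple rule (Lemma~\ref{lem:qrule-2}) on four vertices of $V$, not of the triple rule. Moreover, before one can reduce to these configurations one must show that meander permutations have no turn-backs (Lemma~\ref{lem:turn-back}), which is itself a delicate case analysis introducing further vertices and repeatedly falling back on Cases I--VIII. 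Your hope of transplanting the ``tunnel'' argument from Proposition~\ref{prop:v-lens} also does not carry over as stated: that argument uses crucially that the two edges share the endpoint $v$, so that $\ered(v)$ crosses neither of them and the direction of the last crossing is forced; for $v\neq w$ none of these forcings is available. So the proposal has the right flavour (positional case analysis against the consistency of $T_1$), but the triple-only contradiction it promises cannot be carried out in all cases, and the parity issue undermines the type constraints it starts from.
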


Our proof of this proposition unfortunately depends on a lengthy
case analysis, an outline of which is given in the following subsection.

%%%%%%%%%%%%%%%%%%%%%%%%%%%%%%%%%%%%%%%%%%%%%%%%%%%%%%%%%%%%%%%%%%%%%%%%%%%%%%
\subsection{Outline of the Proof of Proposition~\ref{prop:v,w-lens}}

Suppose that there is a lens $L$ formed by $\egreen(v)$ and
$\eblue(w)$ in~$D'$. From Proposition~\ref{prop:v-lens} we know that
$v\neq w$. In fact the proposition implies that the star of every 
vertex is non-intersecting. For emphasis we collect this and the other 
restrictions on crossings in $D'$ in a list.

\Item{(1)}  Edges of the same color do not cross.
\Item{(2)}  Edges of different color that belong to the same
        vertex do not cross.
\Item{(3)}  Red edges have at most one intersection with any other edge.
\smallskip

These properties will be crucial throughout the argument. We also know
that there are no wrapping regions (Lemma~\ref{lem:wrapping}) and the
vertex in a region or lens is always different from the vertices of
the edges defining the region (Lemma~\ref{lem:selftrap}).
\medskip

In section~\ref{ssec:8configs}, we discuss eight configurations that may appear
in a meander of $\eblue(w)$ over $\egreen(v)$. The eight
configurations shown in Figure~\ref{fig:eight-configs} correspond to
the simplest meanders for the following three binary decisions:

\Bitem Vertex $w$ is \emph{above}/\emph{below} edge $\egreen(v)$.
\Bitem The last crossing is to the \emph{left}/\emph{right} of the
first crossing.
\Bitem At the last crossing $\eblue(w)$ is cutting $\egreen(v)$ 
\emph{upward}/\emph{downward}.
\medskip

\ni
For example the meander labeled VII corresponds to below/left/up. 
In the case discussion we impose additional conditions on
vertices $u$ that are contained in the regions defined by arcs of
$\eblue(w)$ over $\egreen(v)$. These conditions either ask for $u\in
S_1(w)$ or for $u\not\in S_1(w)$. The conditions are so that they
are satisfied for free if the respective regions are 
relative lenses of $\eblue(w)$
over $\egreen(v)$. In each of the cases we show that the 
$T_1$ projections yield an illegal table. Hence the configurations 
do not appear in the drawing $D'$. 

%%%%%%%%%%%%%%%%%%%%%%%%%%%%%%%%%%%%%%%%%%%%%%%%%%%
%%
% in einem figure environment mit caption
   \calc_figscale{14}
    \begin{figure}[htb]
    \centerline{\input{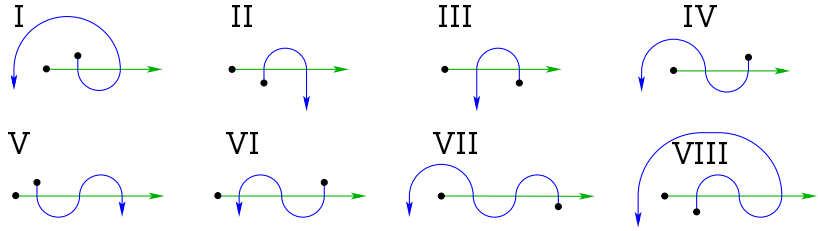}}
    \caption{The eight configurations for the 
first phase of the case analysis.\label{fig:eight-configs}}
    \end{figure}
    
%%
%%%%%%%%%%%%%%%%%%%%%%%%%%%%%%%%%%%%%%%%%%%%%%%%%%%

In the second part we use the results from the first part to show that
meanders of the drawing $D'$ have no \emph{inflections}, i.e., in the meander
permutation we never see $i-1$ and $i+1$ on the same side of $i$. 
This is split in the following two lemmas. The first, easy one, is
the following.

\begin{lemma}
If a meander permutation $\sigv$ has an inflection, then it has one of
the four \emph{turn-back patterns} shown in Figure~\ref{fig:back-configs}.
\end{lemma}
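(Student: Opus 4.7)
The plan is to treat this as a local case analysis at the inflection point. By hypothesis, there exist three labels $i-1, i, i+1$ that are consecutive in the blue ordering of the crossings of $\egreen(v)$ and $\eblue(w)$ such that, in the meander permutation $\sigv$, both $i-1$ and $i+1$ lie on the same side of $i$, i.e., both to the left of $i$ or both to the right of $i$ along $\egreen(v)$. The two arcs $[i-1,i]$ and $[i,i+1]$ are precisely the two sub-arcs of $\eblue(w)$ incident to the crossing point $i$.

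Next, because the drawing is simple, $\eblue(w)$ crosses $\egreen(v)$ transversally at $i$, so the two arcs emanate from $i$ into opposite open half-planes bounded by $\egreen(v)$: in a neighborhood of $i$, one lies above $\egreen(v)$ and the other lies below. Combined with the inflection assumption (both $i-1$ and $i+1$ on the same side of $i$), this forces the local silhouette of $\eblue(w)$ around $i$ to be a ``turn-back'': the edge enters from one side of $i$, crosses $\egreen(v)$ at $i$, and returns to the same side of $i$.

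Finally, the local picture is pinned down by exactly two independent binary choices: (a) the common side (left or right) on which $i-1$ and $i+1$ lie relative to $i$ along $\egreen(v)$, and (b) the direction (upward or downward) in which $\eblue(w)$ crosses $\egreen(v)$ at $i$. These give $2 \times 2 = 4$ distinct local configurations, and by inspection each one is precisely one of the four turn-back patterns depicted in Figure~\ref{fig:back-configs}. There is no real obstacle in this lemma — it is essentially a definitional unfolding of ``inflection''; the substantive work is deferred to the companion lemma that rules out each such turn-back pattern globally, using the projection $T_1$ and the forbidden meander configurations established in Section~\ref{ssec:8configs}.
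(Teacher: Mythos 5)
Your proposal treats the four turn-back patterns as purely local pictures at the single crossing $i$, determined by a $2\times 2$ choice (side of $i$, direction of the crossing), and concludes that the lemma is a definitional unfolding of ``inflection''. That misreads what a turn-back pattern is. As the patterns are used in Cases TB\,1--TB\,4, each one is a configuration of \emph{four} consecutive crossings: $\sigv$ must contain a substring of the form $i,i-1,i-2,i+1$ or $i+1,i-2,i-1,i$ (or the $\pm$-symmetric variants), together with the above/below information for the turn-back arc. So the lemma has real content beyond the definition of an inflection: from the hypothesis that $i-1$ and $i+1$ lie on the same side of $i$ in $\sigv$ one must deduce (i) that the crossing $i-2$ (resp.\ $i+2$) exists at all, i.e.\ the inflection cannot sit at the very start of the meander, and (ii) that this crossing does not lie between $i-1$ and $i$ along $\egreen(v)$, which is exactly what pins down the four-element substring. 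Your argument establishes neither, and without these facts the case analysis of Lemma~\ref{lem:turn-back}, which picks vertices in the regions of the arcs $[i,i-1]$ and $[i-2,i-1]$, has nothing to work with.

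Both missing steps are where the drawing-specific hypotheses enter, and they cannot be replaced by local inspection at $i$. In the paper's argument (for the case $i+1<_\sigma i-1<_\sigma i$ with the arc $[i+1,i]$ above): if $i-1=1$, then $w$ itself would lie in the region of the arc $[i+1,i]$, contradicting Lemma~\ref{lem:selftrap}; and if $i-1<_\sigma i-2<_\sigma i$, then a vertex $x$ in the region of the arc $[i-1,i-2]$ --- such a vertex exists because every region contains a lens and every lens of $D'$ contains a vertex --- would force $\eblue(x)$, confined by $\eblue(w)$, to cross $\egreen(v)$ at least three times and to trap $x$ under one of its own arcs, again contradicting Lemma~\ref{lem:selftrap}. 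Since arcs of the simple curve $\eblue(w)$ on the same side of $\egreen(v)$ cannot cross, the only remaining possibility is $i+1<_\sigma i-2<_\sigma i-1<_\sigma i$, i.e.\ one of the four patterns of Figure~\ref{fig:back-configs}. Your two binary choices do explain why there are four symmetry classes, but a $2\times 2$ enumeration of local pictures at $i$ is not a proof that one of these specific four-crossing patterns must occur.
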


\begin{proof}
Suppose that a meander permutation $\sigma = \sigv$ has an inflection
at $i$. We discuss the case where $i+1 <_\sigma i-1 <_\sigma i$ and
the blue arc $[i+1,i]$ of $\eblue(w)$ is above $\egreen(v)$. All the
other cases can be treated with symmetrical arguments.  First note
that $i-1 > 1$, otherwise, $w$ is in the region defined by the
arc $[i+1,i]$. This is impossible (Lemma~\ref{lem:selftrap}).

Now suppose that $i-1 <_\sigma i-2 <_\sigma i$. Let $x$ be a vertex in
the region defined by the arc $[i-1,i-2]$. Confined by $\eblue(w)$ the
edge $\eblue(x)$ has to cross $\egreen(v)$ at least three times and it
contains an arc whose region contains $x$. This is impossible
(Lemma~\ref{lem:selftrap}). Hence, $i+1 <_\sigma i-2 <_\sigma i-1
<_\sigma i$ and the meander contains the second of the turn-back
patterns shown in Figure~\ref{fig:back-configs}. Other cases of
inflections are related to the other cases of turn-back patterns.
\end{proof}

The proof of the second lemma involves a delicate case
analysis and is deferred to Subsection~\ref{ssec:no-turnback}.

\begin{lemma}\label{lem:turn-back}
A meander permutation has no turn-back.
\end{lemma}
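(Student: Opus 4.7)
The plan is to argue by contradiction, leveraging the eight-configuration analysis of Subsection~\ref{ssec:8configs} to show that every turn-back pattern embeds one of the already-forbidden local configurations. Suppose some meander permutation $\sigv$ in $D'$ contains one of the four turn-back patterns of Figure~\ref{fig:back-configs}. The four patterns come in pairs related by the symmetries above/below $\egreen(v)$ and forward/backward traversal of $\eblue(w)$, and these symmetries correspond to relabeling $p_1 \leftrightarrow p_3$ and reversing the rotation at $w$. Hence it suffices to treat one representative, say the turn-back with crossings ordered $i+1 <_\sigv i-2 <_\sigv i-1 <_\sigv i$ along $\egreen(v)$ and with the arc $[i-1,i]$ (hence also its nested sub-arc $[i-2,i-1]$) lying above $\egreen(v)$.

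Next I would locate the vertices forced into the turn-back. The arc $[i-1,i]$ bounds an outer region $R_{\text{out}}$ containing a nested inner region bounded by the arc $[i-2,i-1]$. Because $D'$ has the property that every minimal region contains a vertex, I can pick a vertex $u_{\text{in}}$ in a minimal lens inside the nested region and, if needed, a second vertex $u_{\text{mid}}$ in a minimal lens of $R_{\text{out}}$ lying strictly outside the inner region. Lemma~\ref{lem:selftrap} forces $u_{\text{in}},u_{\text{mid}}\notin\{v,w\}$, and Lemma~\ref{lem:wrapping} together with properties~(1)--(3) forces $\eblue(u_{\text{in}})$, $\egreen(u_{\text{in}})$, $\ered(u_{\text{in}})$ to exit their enclosing region only through prescribed pieces of $\egreen(v)$ and $\eblue(w)$. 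Tracking these exits, the sub-meander of $\eblue(u_{\text{in}})$ over $\egreen(v)$ (or symmetrically, of $\eblue(w)$ over $\egreen(u_{\text{in}})$) realizes one of the eight configurations of Figure~\ref{fig:eight-configs}, and the side condition ``$u_{\text{in}}\in S_1(\cdot)$'' or ``$u_{\text{in}}\notin S_1(\cdot)$'' required by that configuration is satisfied for free because the enclosing region is a relative lens.

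At this point the analysis of Subsection~\ref{ssec:8configs} yields that the corresponding $T_1$-projection is illegal in the sense of Table~\ref{tab:legal-projections}, contradicting the consistency hypothesis carried over from the drawings $D_2$ and $D_3$. I expect the main obstacle to be the case bookkeeping rather than any single deep step: the position of $w$ (inside or outside $R_{\text{out}}$), the side of the turn-back on which the last crossing of $\eblue(w)$ with $\egreen(v)$ lies, and the orientation (upward or downward) of that last crossing together produce several subcases. My strategy for taming this is to first use Lemma~\ref{lem:selftrap} applied to $\egreen(w)$ to rule out $w\in R_{\text{out}}$, then reduce the remaining subcases to the corresponding pictures among I--VIII by the symmetry argument above, so that each surviving subcase is resolved by direct appeal to Subsection~\ref{ssec:8configs} rather than by a fresh argument.
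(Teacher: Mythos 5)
Your top-level strategy (show every turn-back forces one of the eight forbidden configurations of Subsection~\ref{ssec:8configs}) is indeed the paper's strategy, but there are two genuine gaps in how you execute the reduction. First, the symmetry argument by which you reduce the four turn-back patterns to a single representative does not work. The three colors play asymmetric roles in $D'$: the red edges (to $p_1$) cross every other edge at most once, while green--blue pairs are exactly the ones that may cross multiply. A reflection of the plane combined with the relabeling $p_1\leftrightarrow p_3$ turns the problematic green--blue meanders into green--red meanders, so the image drawing no longer satisfies the standing hypotheses of the analysis; ``reversing the rotation at $w$'' does not repair this. Accordingly, the paper treats four genuinely distinct cases (its TB~1--TB~4, distinguished by the substring type and by whether the turn-back arc is above or below $\egreen(v)$), each with its own argument reducing to a different subset of Cases I--VIII or to a direct trapping contradiction; only the left/right mirror inside each case is dismissed by symmetry.

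Second, your claim that the side conditions of Cases I--VIII (membership or non-membership in $S_1(\cdot)$, existence of a forward or backward arc containing a suitable vertex) are ``satisfied for free because the enclosing region is a relative lens'' is false in the turn-back setting. The whole point is that before the lemma is proved, the regions of the arcs involved in a turn-back are generally nested and are \emph{not} relative lenses; the ``for free'' statement only becomes available after turn-backs are excluded, which is what makes the final step of the outline of Proposition~\ref{prop:v,w-lens} work. Establishing those side conditions is where the real work lies: the paper must pick several auxiliary vertices ($u$, $x$, and in some subcases $y$ and $z$) in carefully chosen regions, split on where they sit in the orders $<_2$, $<_3$, and, crucially, it makes an extremal choice of the turn-back (the one whose turn-back point lies as far as possible), which is used in two of the cases to guarantee that the blue edge of an interior vertex has no turn-back itself, so that its meander permutation is monotone. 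Your proposal contains no analogue of this extremal choice and no mechanism for pinning down the behavior of $\eblue(u_{\text{in}})$, so the step ``tracking these exits, the sub-meander realizes one of the eight configurations'' is an assertion rather than an argument; as written, the proof does not go through.
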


%%%%%%%%%%%%%%%%%%%%%%%%%%%%%%%%%%%%%%%%%%%%%%%%%%%
%%
% in einem figure environment mit caption
   \calc_figscale{14}
    \begin{figure}[htb]
    \centerline{\input{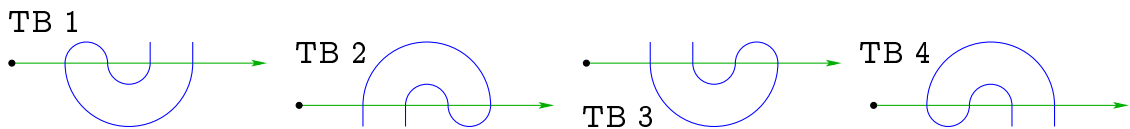}}
    \caption{The four turn-back pattern.\label{fig:back-configs}}
    \end{figure}
    
%%
%%%%%%%%%%%%%%%%%%%%%%%%%%%%%%%%%%%%%%%%%%%%%%%%%%%

The consequence of Lemma~\ref{lem:turn-back} is that the meanders in
$D'$ are very simple, their meander permutations are either the
identity permutation or the reverse of the identity. In particular
every arc defines a relative lens.  It then follows that each meander
in $D'$ can be classified according to the three binary decisions
mentioned above. Hence, it falls into one of the configurations that
have been discussed in Subsection~\ref{ssec:8configs}. The additional
conditions are satisfied because the arcs all define relative
lenses. Hence, there are no nontrivial meanders, i.e., every pair of a
green and a blue edge crosses at most once in~$D'$. This concludes the
proof of Proposition~\ref{prop:v,w-lens}.

\subsection{Case analysis for the proof of Proposition~\ref{prop:v,w-lens}}
%%%%%%%%%%%%%%%%%%%%%%%%%%%%%%%%%%%%%%%%%%%%%%%%%%%%%%%%%%%%%%%%%%%
\subsubsection{The eight basic configurations}\label{ssec:8configs}

We now deal with special instances of the
configurations from Figure~\ref{fig:eight-configs}.

%%==============================================================
\begin{wrapfigure}[10]{r}{0.30\textwidth}
\vskip-4mm
\centering
\includegraphics[width=0.25\textwidth]{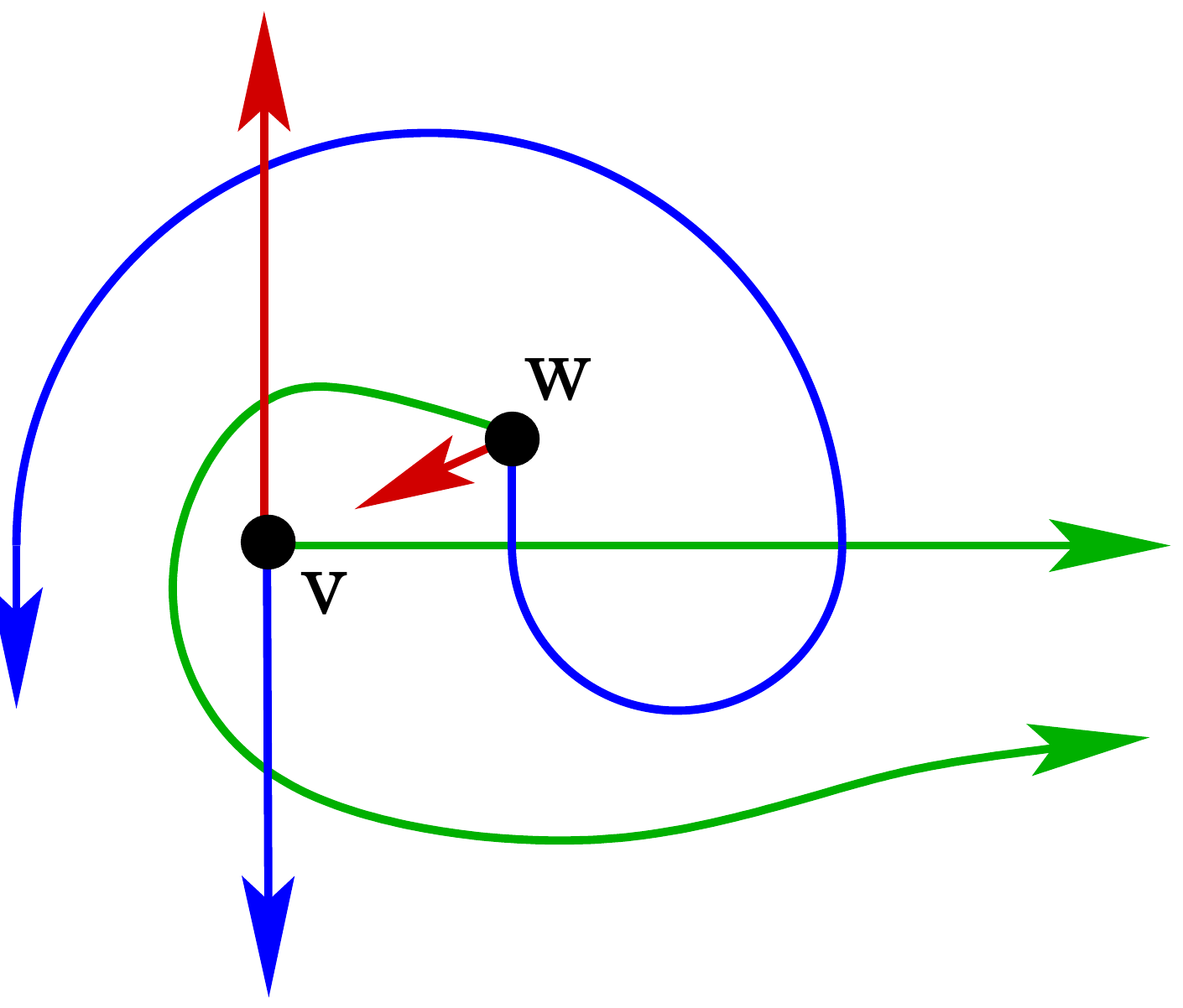}
\caption{\label{fig:case1}
\hbox to 0.15\textwidth{\hfil} Illustration for Case I.}
\end{wrapfigure}
\Case{I} \emph{The first intersection along $\eblue(w)$ is downwards
  and the last intersection is upwards and to the right of the first
  intersection along $\egreen(v)$.}
\medskip

Its last intersection with $\egreen(v)$ being upward forces
$\eblue(w)$ to intersect $\ered(v)$ on its final piece. 
Therefore, edge $\egreen(w)$ has to turn around $v$ as shown in
Figure~\ref{fig:case1}. Now consider $\ered(w)$, its first crossing
(among those relevant to the argument) has to be with $\egreen(v)$.
With this crossing, however, $\ered(w)$ gets separated from its
destination $p_1$ by the union of $\eblue(w)$ and $\egreen(v)$.
Therefore, this case is impossible.

%%==============================================================
%
\begin{wrapfigure}[14]{r}{0.30\textwidth}
%\vskip-4mm
\centering
\includegraphics[width=0.25\textwidth]{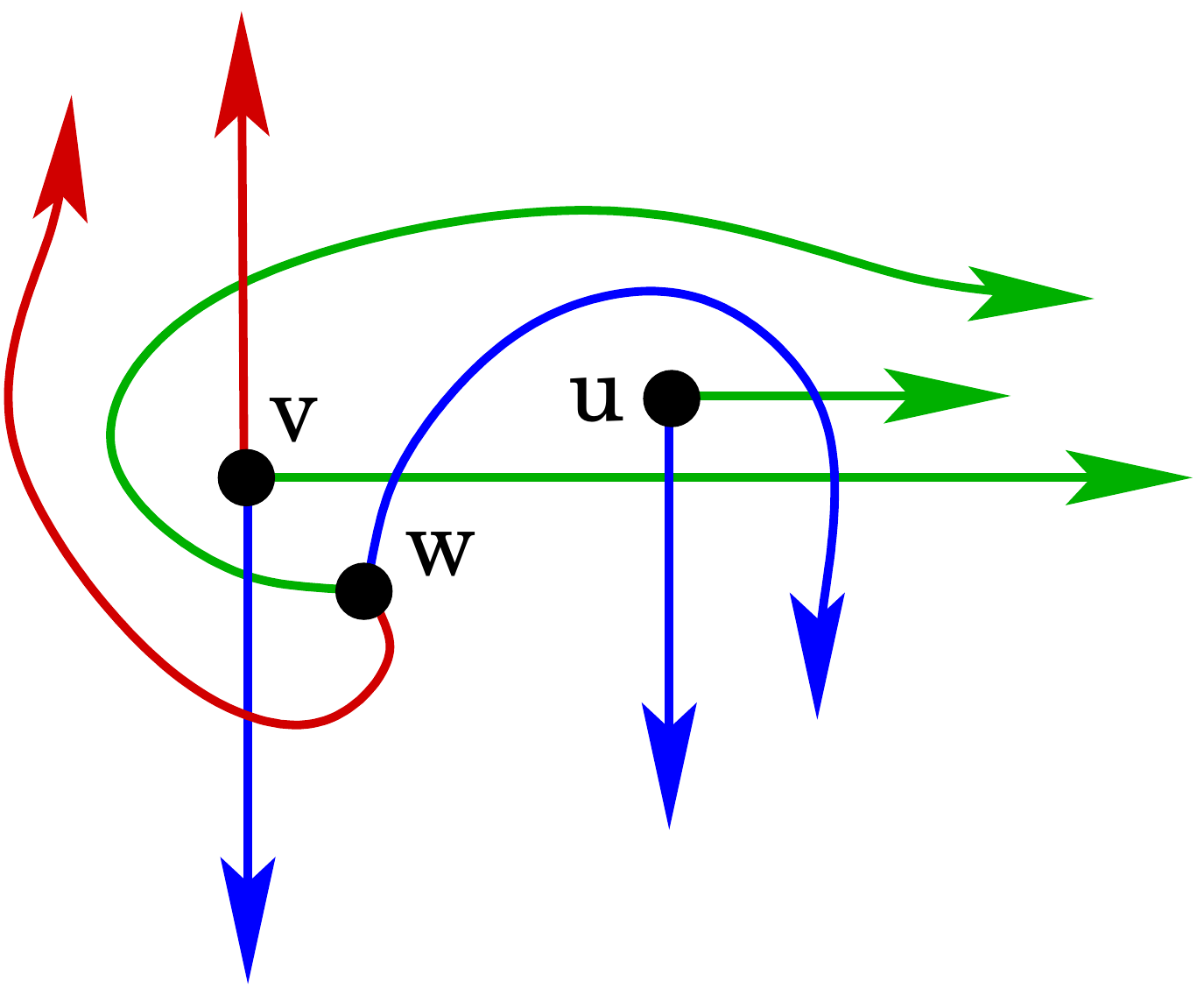}
\caption{\label{fig:case2}
\hbox to 0.15\textwidth{\hfil} Illustration for Case II.\\[3mm]
\hbox{\qquad$\protect\triple{w}{u}{v}{A}{B}{A}$}.}
\end{wrapfigure}
\Case{II} \emph{The first intersection along $\eblue(w)$ is upwards
  and the last intersection is downwards and to the right of the first
  along $\egreen(v)$ and there is a forward arc of $\eblue(w)$ above
  $\egreen(v)$ whose region contains a vertex $u\not\in S_1(w)$.}
\medskip

Edge $\egreen(w)$ has to turn around $v$ and also $\ered(w)$ has to
cross $\eblue(v)$, see Figure~\ref{fig:case3}. Now consider a
vertex~$u$ in a forward region of $\eblue(w)$ above
$\egreen(v)$. Vertex $u$ is not under an arc of $\eblue(u)$
(Lemma~\ref{lem:selftrap}) and there is at most one crossing of
$\eblue(u)$ and $\ered(w)$. Therefore, $\eblue(u)$ behaves as shown in
the sketch, i.e., $w <_3 u <_3 v$.  Also  $w <_2 u <_2 v$. From the
intersections of green and blue edges we conclude that the projections
$T_1$ are as given by the table on the right.  This table is not
legal. Hence, this case is impossible.

%%==============================================================
\begin{wrapfigure}[14]{r}{0.30\textwidth}
\vskip-0mm
\centering
\includegraphics[width=0.20\textwidth]{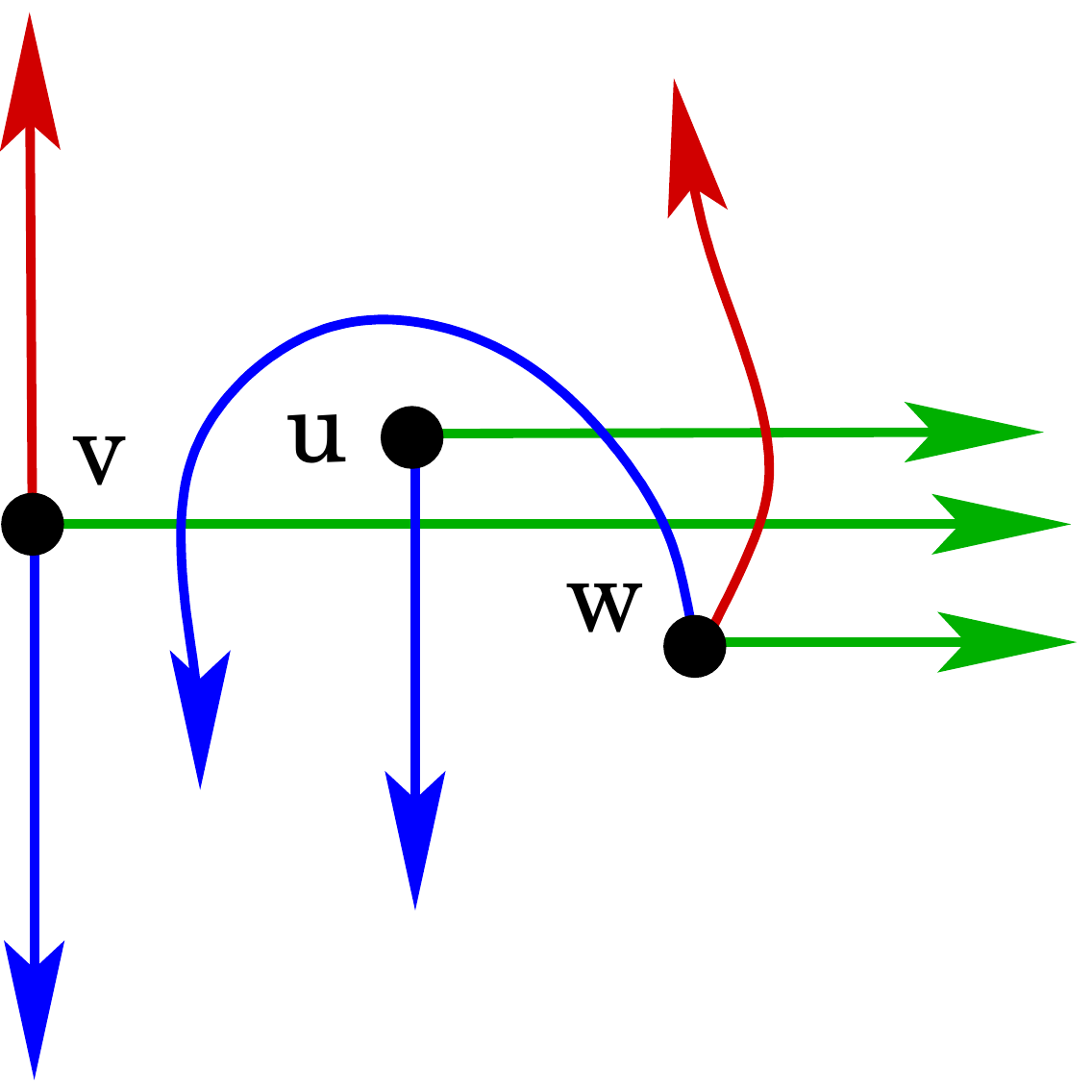}
\caption{\label{fig:case3}
\hbox to 0.15\textwidth{\hfil} Illustration for Case III.\\[3mm]
\hbox{\qquad$\protect\triple{u}{w}{v}{B}{A}{N}$}.}
\end{wrapfigure}
\Case{III} \emph{The first intersection along $\eblue(w)$ is upwards
  and the last intersection is downwards and to the left of the first
  along $\egreen(v)$ and there is a backward arc of $\eblue(w)$ above
  $\egreen(v)$ whose region contains a vertex $u\in S_1(w)$.}
\medskip

Edge $\egreen(w)$ has to stay below $\egreen(v)$.  Now consider a
vertex~$u$ in a backward region of $\eblue(w)$ above $\egreen(v)$.
Vertex $u$ is not under an arc of $\eblue(u)$ and 
we exclude the configuration of Case~I. Therefore,
$\eblue(u)$ behaves as shown in Figure~\ref{fig:case2},
i.e., $u <_3 w <_3 v$. It follows that $\egreen(u)$
has to stay above  $\egreen(v)$ and $u <_2 v <_2 w$.
From the intersections of green and blue edges we
conclude that the projections $T_1$ are as given by the table
on the right.  This tables is not legal. Hence, this case
is impossible.

%%==============================================================
%%==============================================================
\newpage

\begin{wrapfigure}[13]{r}{0.30\textwidth}
\vskip-0mm
\centering
\includegraphics[width=0.28\textwidth]{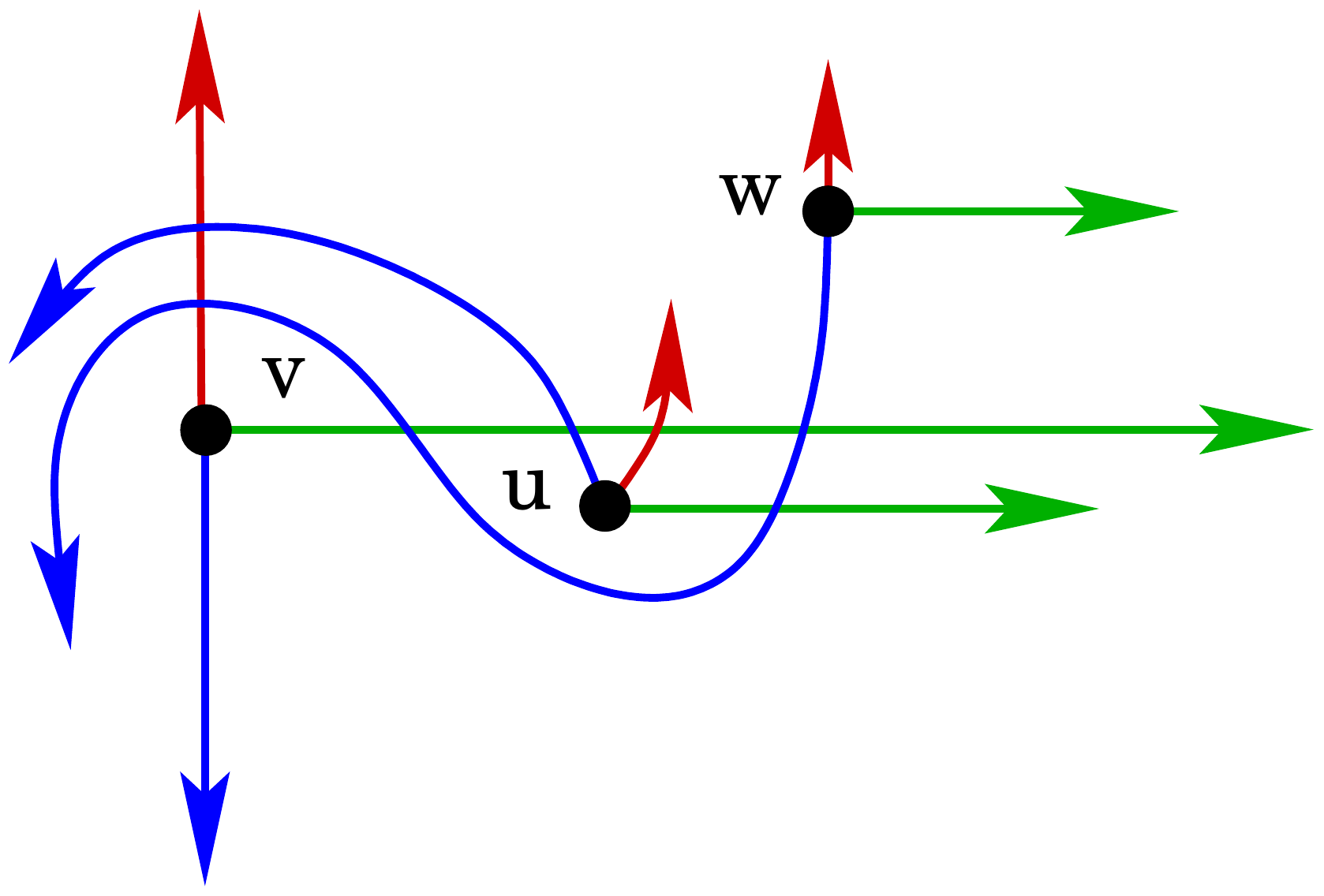}
\caption{\label{fig:case4}
\hbox to 0.15\textwidth{\hfil} Illustration for Case IV.\\[3mm]
\hbox{\qquad$\protect\triple{v}{w}{u}{N}{B}{A}$}.}
\end{wrapfigure}
\Case{IV} \emph{The first intersection along $\eblue(w)$ is downwards
  and the last intersection is upwards and to the left of the first
  along $\egreen(v)$ and there is a backward arc of $\eblue(w)$ below
  $\egreen(v)$ whose region contains a vertex $u\not\in S_1(w)$.}
\medskip

Edge  $\eblue(w)$ has
to cross $\ered(v)$ and the other edges of $w$ have no intersections
with edges of $v$. Edge $\eblue(u)$ either turn around $w$ or it
crosses $\ered(v)$. In the first case $\egreen(u)$ has turn around $v$.
Now, $\ered(u)$ has to cross $\eblue(w)$ and gets separated from its
destination $p_1$ by the union of$\eblue(w)$ and $\egreen(u)$.
Hence, there is no possible routing for edge $\ered(u)$ respecting the
restrictions on crossings in~$D'$. In the second case $\eblue(u)$
crosses $\ered(v)$ and $v <_3 w <_3 u$.  From the green edges we
obtain $w <_2 v <_2 u$. The intersections of green and blue
edges, see Figure~\ref{fig:case4} imply that the projections
$T_1$ are as given by the table below the figure.  This table is not
legal. Hence, this case is impossible.

%%==============================================================
%%
\begin{wrapfigure}[13]{r}{0.30\textwidth}
\vskip-0mm
\centering
\includegraphics[width=0.28\textwidth]{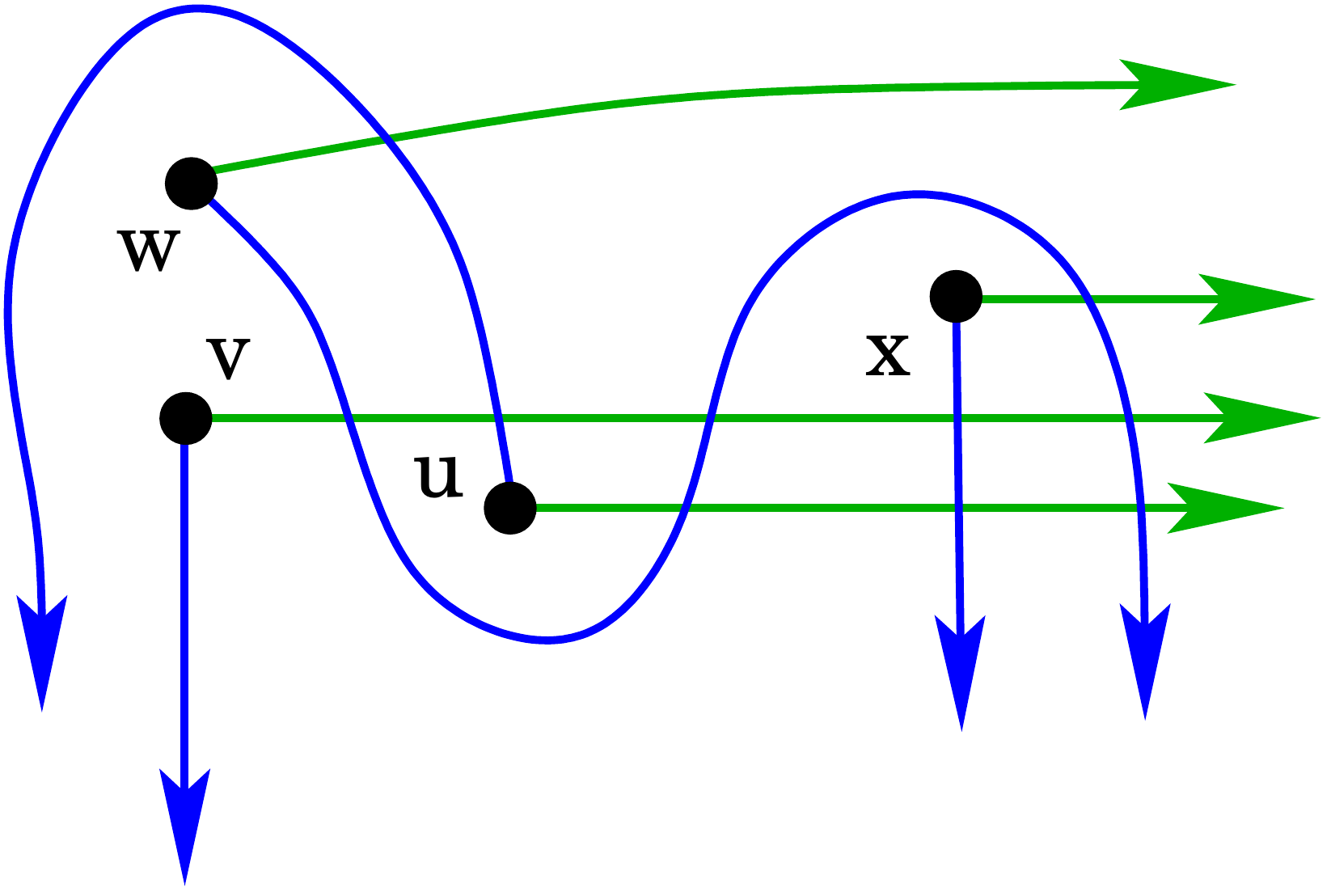}
\caption{\label{fig:case5}
\hbox to 0.15\textwidth{\hfil} Illustration for Case V.\\[3mm]
\hbox{\qquad$\protect\triple{w}{x}{u}{A}{B}{A}$}.}
\end{wrapfigure}
\Case{V} \emph{The first intersection along $\eblue(w)$ is downwards
  and the last intersection is downward and to the right of the first.
  Moreover, along $\egreen(v)$ there is a forward arc of
  $\eblue(w)$ below $\egreen(v)$ whose region contains a vertex
  $u\in S_1(w)$ and further to the right a forward arc of
  $\eblue(w)$ above $\egreen(v)$ containing a vertex $x\not\in S_1(w)$.}
\medskip

If $u <_3 w$, then edge $\eblue(w)$ makes sure that $x\not\in S_1(u)$.
Therefore, we have Case~II with $v$, $u$, and $x$.

If $w <_3 u <_3 v$, then  $w\in S_1(u)$.
Therefore, we have a Case~III.

Now let $w <_3 v <_3 u$. Since $\eblue(x)$ is crossing $\egreen(v)$
downwards we have $\egreen(x)$ above $\egreen(v)$ and a crossing
of $\eblue(w)$ with $\egreen(x)$. Therefore $w <_3 x$.

\begin{wrapfigure}[7]{r}{0.30\textwidth}
\vskip-7mm
\centering
\includegraphics[width=0.28\textwidth]{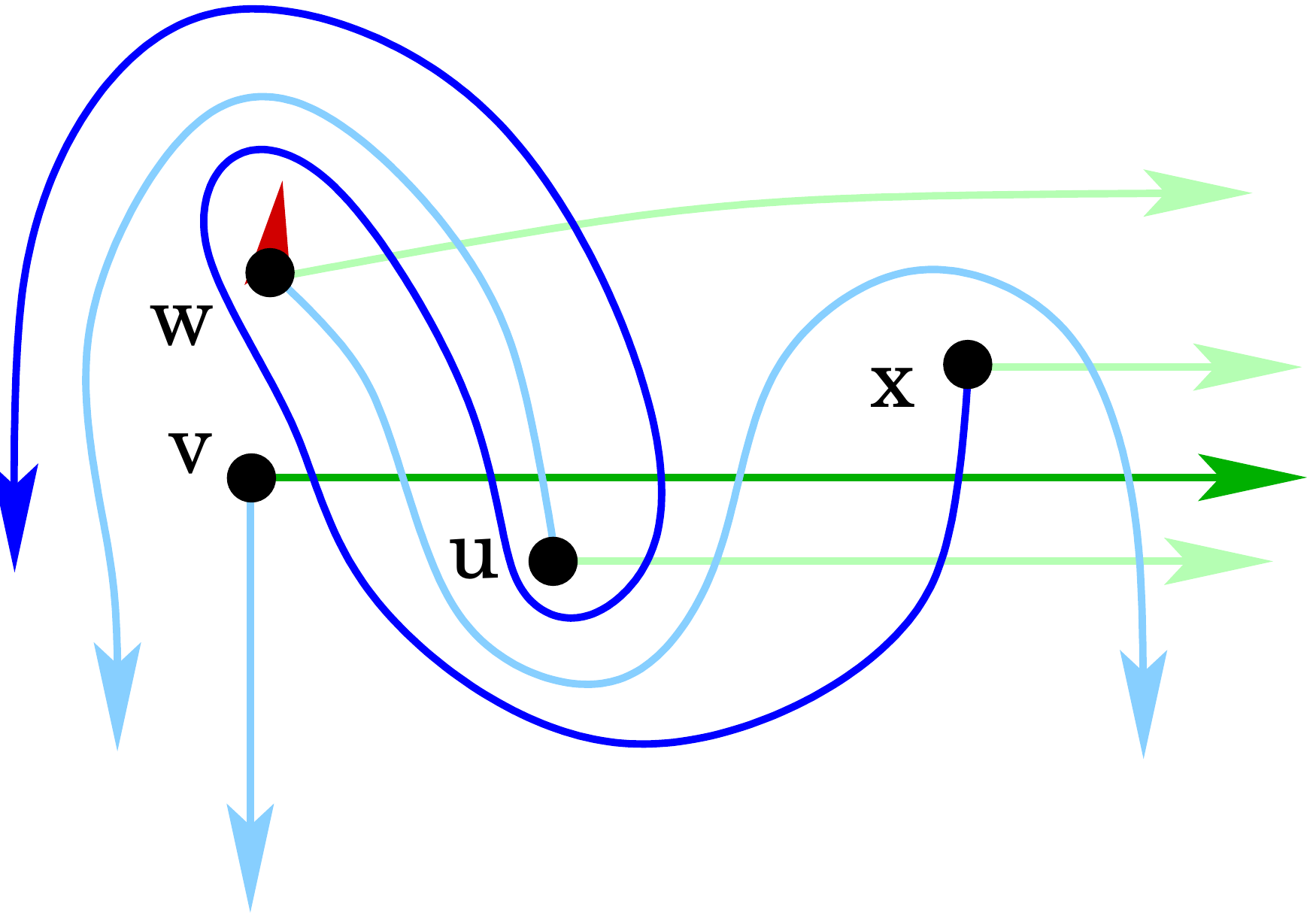}
\caption{\label{fig:case5extra}
\hbox to 0.15\textwidth{\hfil} Case V, 2nd illustration.}
\end{wrapfigure}
If $w <_3 x <_3 v$, then the intersections of green and blue
edges areas shown in Figure~\ref{fig:case5}.  The projections $T_1$ are
given by the table below the figure.  This table is not
legal. Hence, this case is impossible.
If $v <_3 x <_3 u$, then we have Case~IV with $x$ and
$u$. Finally, if  $u <_3 x$, then the green and blue
edges behave as shown in Figure~\ref{fig:case5extra}. 
Now, $\ered(w)$ would have to cross one of $\eblue(x)$ and $\egreen(v)$
twice to get to its destination $p_1$.

\newpage
%%==============================================================
%%
\def\QTableCaseVI{
{\small\begin{ytableau}
\none[u]  & A        & B \\
\none[w]  & A        & A \\
\none     & \none[v] & B \\
\none     &\none     & \none[x]
\end{ytableau}}
}\begin{wrapfigure}[12]{r}{0.30\textwidth}
\vskip-3mm
\centering
\includegraphics[width=0.28\textwidth]{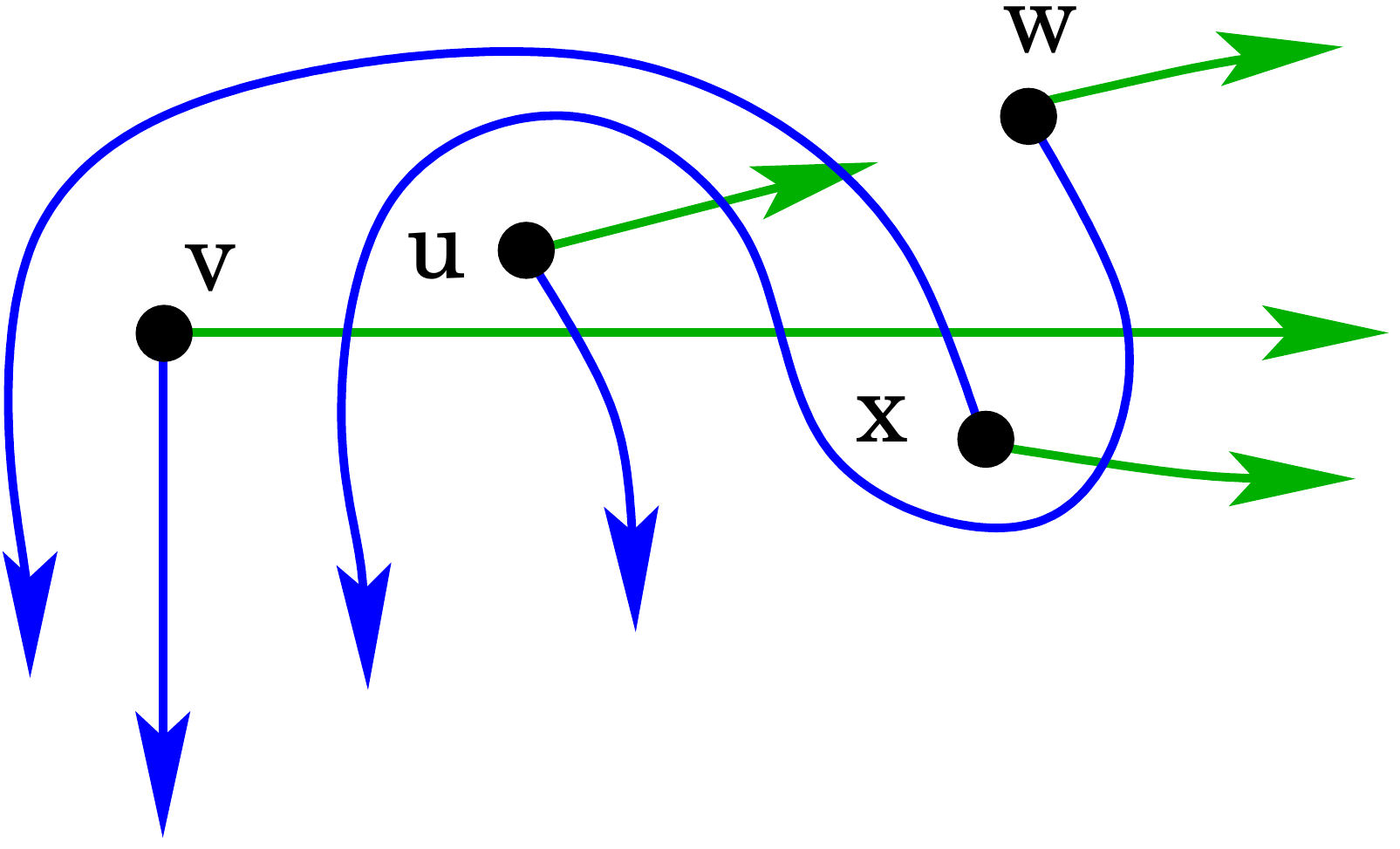}
\caption{\label{fig:case6}
\hbox to 0.15\textwidth{\hfil} Illustration for Case VI.\\[3mm]
\hbox{\qquad\protect\QTableCaseVI}.}
\end{wrapfigure}
\Case{VI} \emph{The first intersection along $\eblue(w)$ is downwards
  and the last intersection is downward and to the left of the first.
  Moreover, along $\egreen(v)$ there is a backward arc of
  $\eblue(w)$ above $\egreen(v)$ whose region contains a vertex
  $u\in S_1(w)$ and further to the right a backward arc of
  $\eblue(w)$ below $\egreen(v)$ containing a vertex $x\not\in S_1(w)$.}
\medskip

If $x <_3 v$, then edge $\eblue(x)$ either yields a Case~II with $w\not\in
S_1(x)$ or a Case~III with $u\in S_1(x)$. Therefore, $v <_3 x$.

If $w <_3 u$, then edge $\eblue(u)$ has an arc above $u$ or it forms a
Case~I. Therefore, $u <_3 w$ and the complete order is $u <_3 w <_3 v <_3 x$.

From green edges we get $w <_2 v <_2 x$ and $u <_2 v$, see
Figure~\ref{fig:case6}.  Independent of the entry $T_1(u,w)$ the
resulting table as shown below Figure~\ref{fig:case6}
violates the quadruple rule, Lemma~\ref{lem:qrule-2}.
Hence, this case is impossible.

%%==============================================================
%%
\def\QTableCaseVII{
{\small\begin{ytableau}
\none[x]  &  A        & B        & A \\
\none     & \none[v]  & B        & B \\
\none     & \none     & \none[w] & \none[u]
\end{ytableau}}
}\begin{wrapfigure}[12]{r}{0.30\textwidth}
\vskip-3mm
\centering
\includegraphics[width=0.28\textwidth]{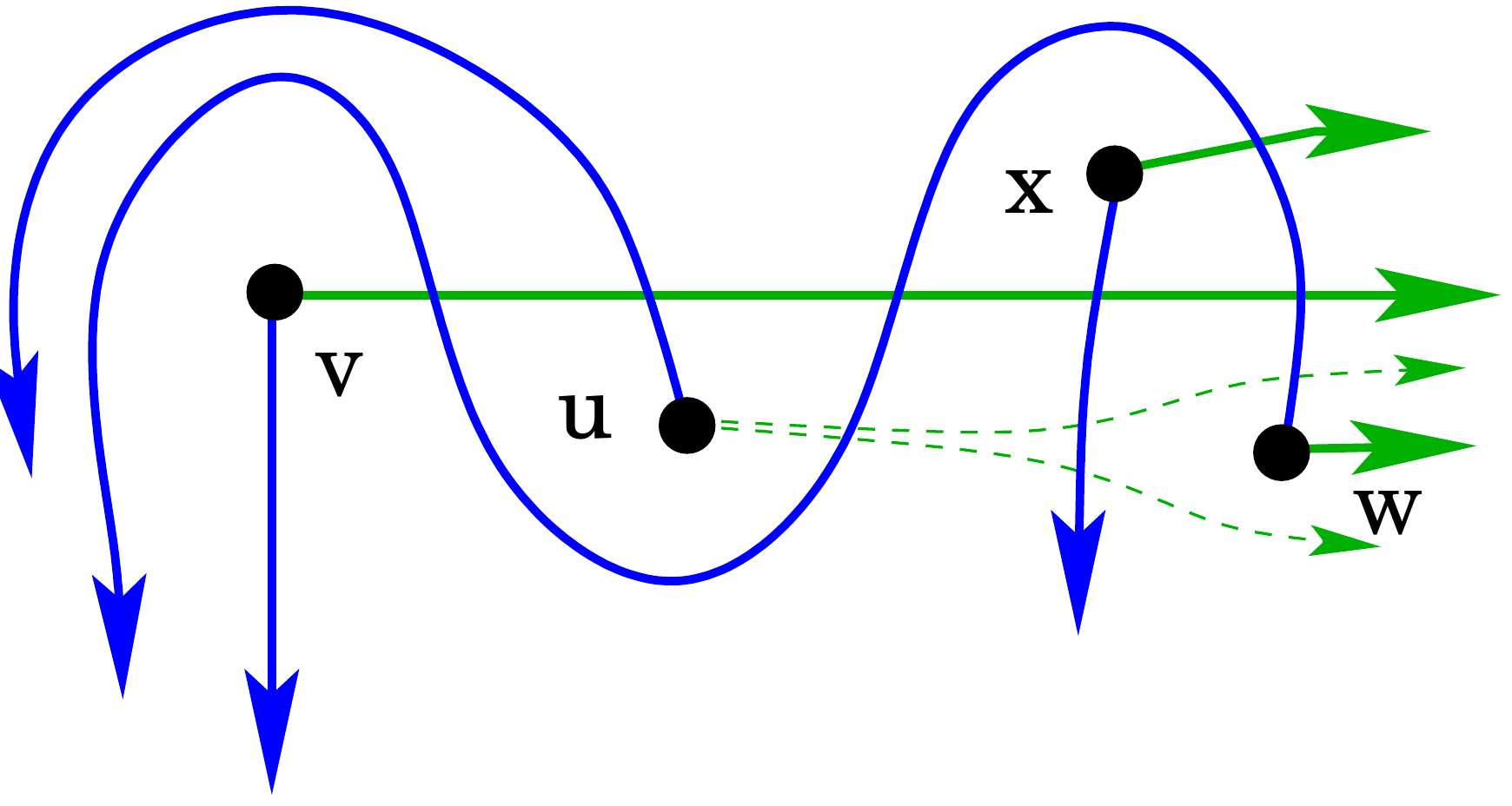}
\caption{\label{fig:case7}
\hbox to 0.15\textwidth{\hfil} Illustration for Case VII.\\[3mm]
\hbox{\qquad\protect\QTableCaseVII}.}
\end{wrapfigure}
\Case{VII} \emph{The first intersection along $\eblue(w)$ is upwards
  and the last intersection is upward and to the left of the first.
  Along $\egreen(v)$ there is a backward arc of
  $\eblue(w)$ below $\egreen(v)$ whose region contains a vertex
  $u\not\in S_1(w)$ and further to the right a backward arc of
  $\eblue(w)$ above $\egreen(v)$ containing a vertex $x\in S_1(w)$.}
\medskip

If $v <_3 x$, then edge $\eblue(x)$ either yields a Case~I or a
Case~IV with $u\not\in S_1(x)$. Therefore, $x <_3 v$.

If $u <_3 x$, then vertices $v$, $u$ and $x$ are in the configuration of
Case~II. If $x <_3 u <_3 w$, then $u <_3 v$ since otherwise $u$ would
be in a region of $\eblue(u)$, a violation of
Lemma~\ref{lem:selftrap}. Now $u <_3 v <_3 w$ and $T_1(u,v)=B$,
$T_1(u,w)=A$ and $T_1(v,w)=B$ is an illegal table.

If  $v <_3 u$, then $x <_3 v <_3 w <_3 u$ and the situation is as
sketched in Figure~\ref{fig:case7}. Irrespective of $u <_2 w$ or $w <_2 u$ 
the $T_1$ projections of the remaining 5 pairs 
yield the partial table shown below the figure. 
This violates the quadruple rule, Lemma~\ref{lem:qrule-2}.
Hence, this case is impossible. 

%%==============================================================
%%
\def\QTableCaseVIII{
{\small\begin{ytableau}
\none[u]  &  A        & 
    \hbox{\footnotesize N\kern-1pt\big|\kern-1ptB}         
                                 & A \\
\none     & \none[v]  & B        & B \\
\none     & \none     & \none[x] & \none[w]
\end{ytableau}}
}\begin{wrapfigure}[12]{r}{0.30\textwidth}
\vskip-3mm
\centering
\includegraphics[width=0.28\textwidth]{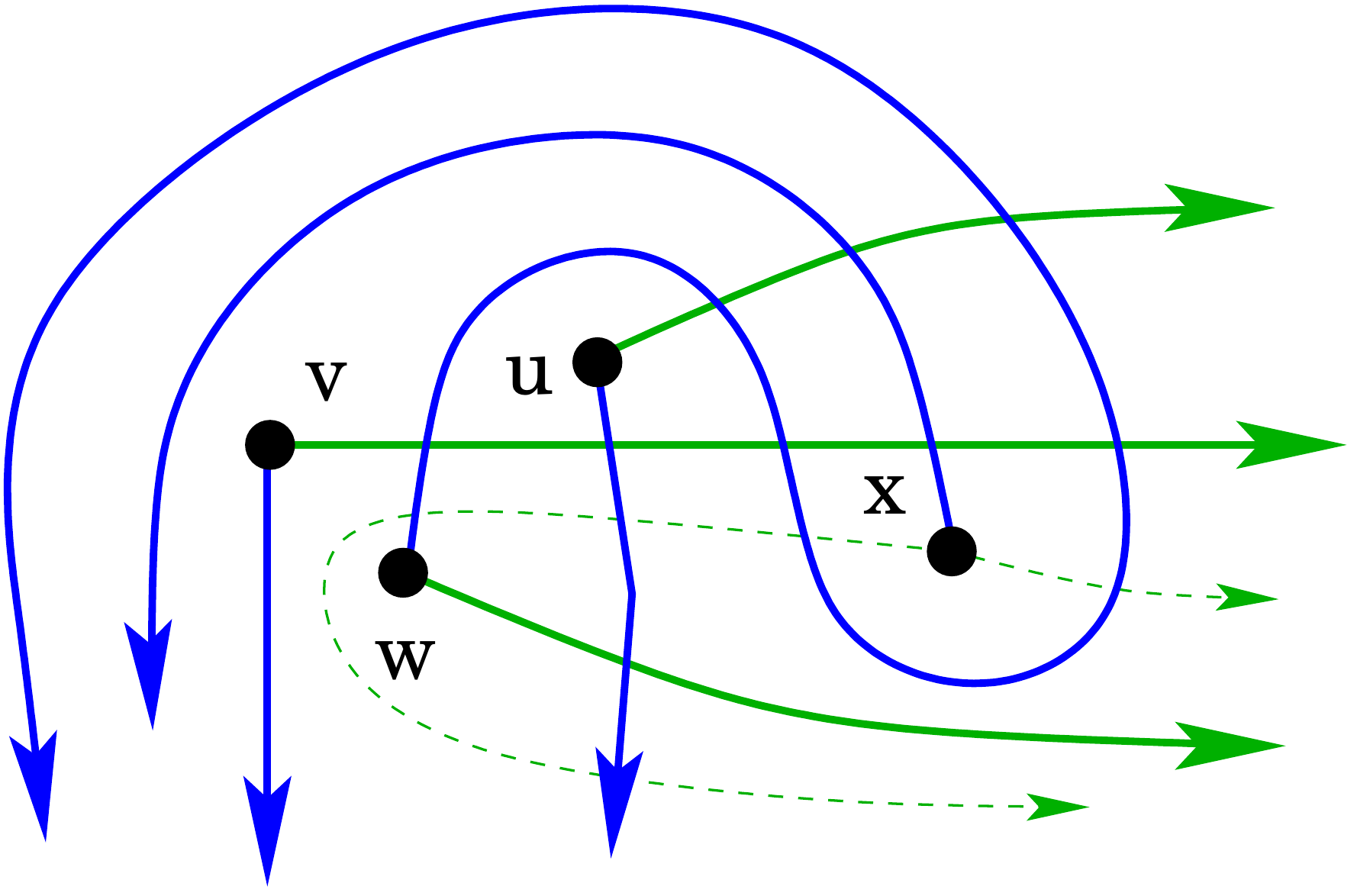}
\caption{\label{fig:case8}
\hbox to 0.15\textwidth{\hfil} Illustration for Case VIII.\\[3mm]
\hbox{\qquad\protect\QTableCaseVIII}.}
\end{wrapfigure}
\Case{VIII} \emph{The first intersection along $\eblue(w)$ is upwards
  and the last intersection is upward and to the right of the first.
  Along $\egreen(v)$ there is a forward arc of
  $\eblue(w)$ above $\egreen(v)$ whose region contains a vertex
  $u\not\in S_1(w)$ and further to the right a forward arc of
  $\eblue(w)$ below $\egreen(v)$ containing a vertex $x\in S_1(w)$.}
\medskip

If $v <_3 u$, then edge $\eblue(u)$ either yields a Case~I or a
Case~IV with $w\not\in S_1(x)$. Therefore, $u <_3 v$.  

If $u <_3 x <_3 v$,
then edge $\eblue(x)$ yields a Case~III with $u\in S_1(x)$.
If $x <_3 u$, then $x <_3 u <_3 w$ and $T_1(x,u)=N$,
$T_1(x,w)=B$ and $T_1(u,w)=A$ is an illegal table.

Therefore, $v <_3 x$ and in total $u <_3 v <_3 x <_3 w$, see
Figure~\ref{fig:case8}. Also $u <_2 v <_2 w$ and $v <_2 x$ only the
order of $x$ and $w$ in $<_2$ is not determined.  If $w <_2 x$ then
$T_1(x,w) = N$ otherwise $T_1(x,w) = B$. The resulting table is shown
below Figure~\ref{fig:case8}. In the first case the triple $u,v,x$
yields an illegal subtable. In the second case there is a violation of
the quadruple rule, Lemma~\ref{lem:qrule-2}.  Hence, this case is
impossible.

%%%%%%%%%%%%%%%%%%%%%%%%%%%%%%%%%%%%%%%%%%%%%%%%%%%%%%%%%%%%%%%%%%%
\subsubsection{Proof of Lemma~\ref{lem:turn-back}}\label{ssec:no-turnback}

Given $v$ suppose that some $\eblue(w)$ has a meander over
$\egreen(v)$ with a turn-back. Among all the turn-backs over
$\egreen(v)$ choose the one with the turn-back point $i$ as far as
possible, where $i$ is the crossing of the turn-back whose two adjacent
crossings are on the same side.
 
It will be seen in the proof that this specific choice of turn-back 
substantially simplifies the analysis of two of the cases.

%%==============================================================
%% Turn-back
\Case{TB 1} \emph{The turn back corresponds to 
a substring $i,i-1,i-2,i+1$ or to 
a substring $i,i+1,i+2,i-1$ in $\sigv$ and the 
turn-back arc is above $\egreen(v)$.}
\smallskip

Let us assume that the substring is $i,i-1,i-2,i+1$ so that
the turn-back arc is $[i,i-1]$. Let $u$ be a vertex in the region
of the turn-back arc and let $x$ be a vertex in the region of the arc
$[i-1,i-2]$. 

The edge $\eblue(u)$ is confined by $\eblue(w)$ and has a forward arc
below $x$. If $\eblue(u)$ has a crossing with $\ered(v)$ then we have
a Case~I. From Lemma~\ref{lem:selftrap} it follows that the last
crossing of $\eblue(u)$ and $\egreen(v)$ is downwards and to the right
of $i+1$.

If $x\not\in S_1(u)$, then $\eblue(u)$ has a backward arc $a$ in the
interval $[i-1,i-2]$ such that $x$ is in the region of arc $a$.  The
left part of Figure~\ref{fig:caseTB1} shows a schematic view of the
situation.  Let $y$ be a vertex in the region of an arc of $\eblue(u)$
above $\egreen(v)$ and preceeding $a$ along $\eblue(u)$ such that $y
\not\in S_1(x)$. Edges $\eblue(u)$ and Lemma~\ref{lem:selftrap} force
$\eblue(x)$ to have its last intersection with $\egreen(v)$ downwards
and to the right of $i+1$. Hence $\eblue(x)$ and $y$ are in the
configuration of Case~II.

We know that $x\in S_1(u)$. Let $y$ be a vertex below the last forward
arc of $\eblue(u)$ above $\egreen(v)$. If $y\not\in S_1(u)$, then
$\eblue(u)$ with $x$ and $y$ are in the configuration of
Case~V. Otherwise, if $y\in S_1(u)$, then there is a backward arc $a$ of 
$\eblue(u)$ above $y$. Choose a vertex $z$ from a backward arc below 
$\egreen(v)$ and to the right of $a$. The
right part of Figure~\ref{fig:caseTB1} shows a schematic view of the
situation. Edge $\eblue(u)$ is forcing $\eblue(z)$
to have an arc over $y$. 

%%%%%%%%%%%%%%%%%%%%%%%%%%%%%%%%%%%%%%%%%%%%%%%%%%%
%%
% in einem figure environment mit caption
   \calc_figscale{26}
    \begin{figure}[htb]
    \centerline{\input{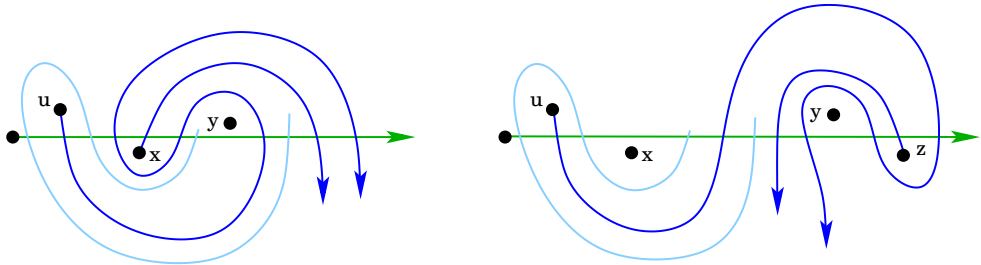}}
    \caption{Illustrations for case TB 1.\label{fig:caseTB1}}
    \end{figure}
    
%%
%%%%%%%%%%%%%%%%%%%%%%%%%%%%%%%%%%%%%%%%%%%%%%%%%%%

If the last intersection of $\eblue(z)$ with
$\egreen(v)$ is downwards and left of
the last intersection of $\eblue(u)$ and $\egreen(v)$, then 
$v$ with $z$ and~$y$ are in the configuration of
Case~III. 

If the last intersection of $\eblue(z)$ with
$\egreen(v)$ is downwards and right of the last intersection of
$\eblue(u)$ with $\egreen(v)$, then $\ered(z)$ has to cross
$\eblue(u)$ twice to reach its destination $p_1$ without intersecting 
$\eblue(z)$ and $\egreen(z)$. The
left part of Figure~\ref{fig:caseTB1second} shows a schematic view of the
situation, the light-blue arc is spanned by two arcs of $\eblue(z)$.

If the last intersection of $\eblue(z)$ with
$\egreen(v)$ is upwards and $\eblue(z)$ crosses $\ered(v)$,
then there are two options.
The first is that $x\not\in S_i(z)$ and there is a configuration of Case~VII with
$v$, $z$, $x$ and $y$. The second is $x\in S_i(z)$, this requires that
$\eblue(z)$ has a turn-back interior of the original turn-back of
$\eblue(w)$ before crossing $\ered(v)$, in this situation $\eblue(x)$
shows a Case~I or together with $u$ a Case~III.
The right part of Figure~\ref{fig:caseTB1second} illustrates this case.

%%%%%%%%%%%%%%%%%%%%%%%%%%%%%%%%%%%%%%%%%%%%%%%%%%%
%%
% in einem figure environment mit caption
   \calc_figscale{26}
    \begin{figure}[htb]
    \centerline{\input{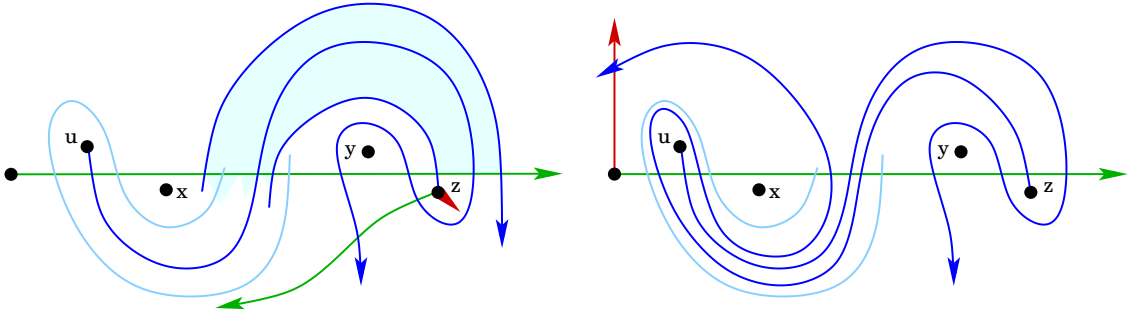}}
    \caption{More illustrations for case TB 1.\label{fig:caseTB1second}}
    \end{figure}
    
%%
%%%%%%%%%%%%%%%%%%%%%%%%%%%%%%%%%%%%%%%%%%%%%%%%%%%

%%==============================================================
%% Turn-back
\Case{TB 2} \emph{The turn back corresponds to 
a substring $i+1,i-2,i-1,i$ or to 
a substring $i-1,i+2,i+1,i$ and the 
turn-back arc is below $\egreen(v)$.}
\smallskip

Let us assume that the substring is $i+1,i-2,i-1,i$ so that
the turn-back is the arc $[i,i-1]$. Let $u$ be a vertex in the region
of the turn-back arc and let $x$ be a vertex in the region of the arc
$[i-2,i-1]$. 

The first intersection of $\eblue(u)$ with $\egreen(v)$ is left
of the turn back point $i$ of the meander $\eblue(w)$. The choice of
the turn-back as the leftmost implies that $\eblue(u)$ has no
turn-back. Hence, $\sigma_{green}(u)$ is the reverse of the identity.

If the last intersection of $\eblue(u)$ with $\egreen(v)$ is
downwards, then this is a Case~III with $x$. Otherwise the last
intersection is upwards and $\eblue(u)$ intersects $\ered(v)$.
Then there is a backward arc of $\eblue(u)$ below $\egreen(v)$ 
and a vertex $y$ from such an arc together with $x$ shows that 
the situation of Case~VII. See Figure~\ref{fig:caseTB2}.

%%%%%%%%%%%%%%%%%%%%%%%%%%%%%%%%%%%%%%%%%%%%%%%%%%%
%%
% in einem figure environment mit caption
   \calc_figscale{26}
    \begin{figure}[htb]
    \centerline{\input{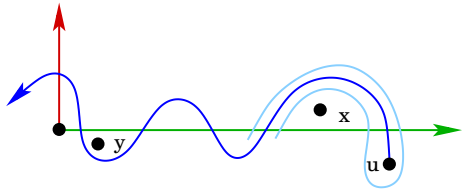}}
    \caption{Illustrations for case TB 2.\label{fig:caseTB2}}
    \end{figure}
    
%%
%%%%%%%%%%%%%%%%%%%%%%%%%%%%%%%%%%%%%%%%%%%%%%%%%%%

%%==============================================================
%% Turn-back
\Case{TB 3} \emph{The turn back corresponds to 
a substring $i+1,i-2,i-1,i$ or to 
a substring $i-1,i+2,i+1,i$ and the 
turn-back arc is above $\egreen(v)$.}
\smallskip

Let us assume that the substring is $i+1,i-2,i-1,i$ so that
the turn-back is the arc $[i,i-1]$. Let $u$ be a vertex in the region
of the turn-back arc and let $x$ be a vertex in the region of the arc
$[i-2,i-1]$. 

The first intersection of $\eblue(u)$ with $\egreen(v)$ is left
of the turn back point $i$ of the meander $\eblue(w)$. The choice of
the turn-back as the leftmost implies that $\eblue(u)$ has no
turn-back. Hence, $\sigma_{green}(u)$ is the reverse of the identity.

If $\eblue(u)$ intersects $\ered(v)$, then with $x$ this yields a
Case~IV. Otherwise there is a backward arc of $\eblue(u)$
above $\egreen(v)$. Let $y$ be a vertex from such an arc.
Now $\eblue(u)$ together with $x$ and $y$ are a Case~VI. 

%%==============================================================
%% Turn-back
\Case{TB 4} \emph{The turn back corresponds to 
a substring $i,i-1,i-2,i+1$ or to 
a substring $i,i+1,i+2,i-1$ and the 
turn-back arc is below $\egreen(v)$.}
\smallskip

Let us assume that the substring is $i,i-1,i-2,i+1$ so that
the turn-back is the arc $[i,i-1]$. Let $u$ be a vertex in the region
of the turn-back arc and let $x$ be a vertex in the region of the arc
$[i-2,i-1]$. 

If $\eblue(x)$ intersects $\ered(v)$ this is a Case~I. 
It follows that that $x <_2 v$. 
If $u <_3 x$ then $u$ and $x$ are in the configuration of Case~II. 

If $x <_3 u$, then there is a forward arc of $\eblue(u)$ below
$\egreen(v)$ and to the right of $i-2$. Let $y$ be a vertex from the
region of such an arc. 

We first consider the case $y\in S_1(u)$. 
If $\eblue(u)$ intersects $\ered(v)$ then $u$ with $y$ yields a
Case~VIII. If $x <_3 y$, then it follows with Lemma~\ref{lem:selftrap}
that $x\in S_1(y)$ so that they form a Case~III.
For $y <_3 x$ edge $\eblue(y)$ has to turn back as shown in the
left part of Figure~\ref{fig:caseTB4}. In this configuration
$\ered(u)$ has to cross one of $\egreen(v)$ and $\eblue(y)$ twice to
get to its destination $p_1$.

Now we are in the case $x <_3 u$ and $y\in S_1(u)$. 
This requires a backward arc of $\eblue(u)$ whose region contains $y$.
To the right of this arc there has to be a backward arc above
$\egreen(v)$. Let $z$ be a vertex from the region of this arc.
The situation is shown in the right part of Figure~\ref{fig:caseTB4}.  
If $y <_3 x$, then $z\not\in S_1(y)$ so that $y$ and $z$ form a
Case~II. If $x <_3 y$, then either $y$ is a Case~I or it forms a 
Case~III with $x$.

%%%%%%%%%%%%%%%%%%%%%%%%%%%%%%%%%%%%%%%%%%%%%%%%%%%
%%
% in einem figure environment mit caption
   \calc_figscale{26}
    \begin{figure}[htb]
    \centerline{\input{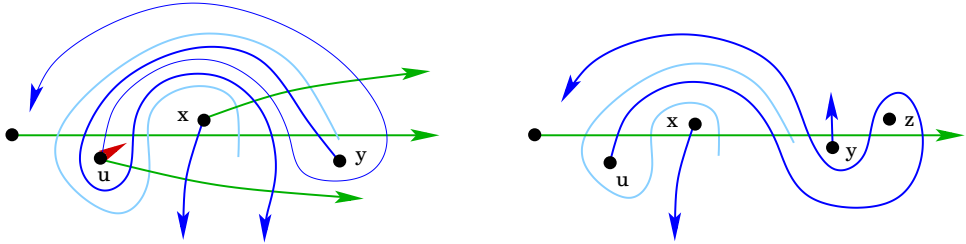}}
    \caption{Illustrations for case TB 4.\label{fig:caseTB4}}
    \end{figure}
    
%%
%%%%%%%%%%%%%%%%%%%%%%%%%%%%%%%%%%%%%%%%%%%%%%%%%%%

%%%%%%%%%%%%%%%%%%%%%%%%%%%%%%%%%%%%%%%%%%%%%%%%%%%
\section{Outer Drawings with arbitrary $k$}\label{sec:arbk}

We now generalize our result on consistency of outer drawings for $k=2,3$ to any $k\geq 2$.

In the previous section, we defined types for each pair of vertices in $V$ in an outer drawing of $K_{3,n}$.
Fortunately, we do not need to go beyond the enumeration of the outer drawings of $K_{3,2}$ to be able to
generalize our result to arbitrary values of $k$.

\begin{theorem}[Consistency of outer drawings for $k\geq 2$]
  Consider the complete bipartite graph $G$ with vertex bipartition
    $(P,V)$ such that $|P|=k$ and $|V|=n$. Given assignments $T_{ij}$ of
    types in $\{A,B,N\}$ for each pair of vertices $(p_i, p_j)$ in $P$ and each pair of vertices
    in $V$, there exists an outer drawing of $G$ realizing those types if and only if
    \begin{itemize}
      \item all assignments $T_{ij}$ for $1\leq i<j\leq k$ obey the
        triple and quadruple consistency rules in the sense of Theorem~\ref{thm:consist-2},
      \item for any triple of vertices from $P$ and and each pair of vertices
    in $V$ the assignments correspond to an outer drawing of $K_{3,2}$
    in the sense of Table~\ref{tab:legal-projections}.
    \end{itemize}
\end{theorem}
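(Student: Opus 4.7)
Necessity is immediate: restricting any outer drawing of $K_{k,n}$ to two (resp.\ three) of the outer vertices $p_i$ yields an outer drawing of $K_{2,n}$ (resp.\ $K_{3,n}$), so by Theorems~\ref{thm:consist-2} and~\ref{thm:consist-3} each $T_{ij}$ must satisfy the triple and quadruple rules, and each triple from $P$ together with each pair from $V$ must realize a legal $K_{3,2}$ type from Table~\ref{tab:legal-projections}. The bulk of the proof is sufficiency.

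My plan for sufficiency is to extend the superposition-plus-lens-elimination strategy of the proof of Theorem~\ref{thm:consist-3}. For each $i\in\{2,\ldots,k\}$, apply Theorem~\ref{thm:consist-2} to the assignment $T_{1i}$ (valid by hypothesis) to obtain an outer drawing $D_i$ of $K_{2,n}$ with outer vertices $(p_1,p_i)$. Since $p_1$ appears in every $D_i$ as a noncrossing star, a common homeomorphism lets us superpose all the $D_i$ into a single drawing $D$ on the same plane, in which the star of $p_1$ is shared, each of the other $k-1$ color classes forms a noncrossing star, the clockwise rotation at every $v\in V$ is $(p_1,p_2,\ldots,p_k)$, and no edge touches another edge nontransversally.

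By construction, for any $i\neq 1$, the crossings between the star of $p_1$ and the color-$i$ edges are exactly those prescribed by $T_{1i}$. For two non-$p_1$ colors $i\neq j$, however, $D$ may contain lenses of edges $e_i(v)$ and $e_j(w)$. A parity argument identical to the ``Claim'' in the proof of Theorem~\ref{thm:consist-3} shows that the number of crossings between $e_i(v)$ and $e_j(w)$ in $D$ has the parity prescribed by $T_{ij}(v,w)$, so realizing the target types reduces to eliminating every empty lens between a pair of non-$p_1$ colors. Each such lens swap is a purely local operation: it destroys two crossings of colors $i$ and $j$, and leaves all crossings involving edges of any other color unchanged, because any edge of a different color crossing the lens boundary on the $i$-side must cross it again on the $j$-side.

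The main obstacle is showing that the lens-elimination procedure terminates at a drawing realizing the prescribed types, i.e.\ that no lens survives once all empty lenses have been swapped away. Here the key observation is that the subdrawing of $D$ induced by $p_1$, $p_i$, $p_j$ together with $V$ depends only on $T_{1i}$, $T_{1j}$, and $T_{ij}$, and by hypothesis these three assignments satisfy exactly the preconditions of Theorem~\ref{thm:consist-3} applied to the triple $(p_1,p_i,p_j)$. Hence the full meander case analysis of Propositions~\ref{prop:v-lens} and~\ref{prop:v,w-lens} and Lemma~\ref{lem:turn-back} applies verbatim to the pair $(i,j)$ and rules out every persistent lens. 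Iterating over all pairs $i<j$ in $\{2,\ldots,k\}$ yields the required outer drawing of $K_{k,n}$. I would be somewhat cautious about the interaction of nested lenses coming from different color pairs and the order of swaps, but these are resolved by always operating on innermost lenses, mirroring the $k=3$ treatment.
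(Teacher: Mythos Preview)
Your proof is correct, and the essential mechanism---superpose partial drawings along the star of $p_1$, swap away empty lenses, then invoke the $k=3$ lens analysis on each triple $(p_1,p_i,p_j)$---is the same as the paper's.  The difference is in the \emph{organization}: the paper proceeds by induction on $k$, superposing a complete outer drawing $D_{k-1}$ of $K_{k-1,n}$ with a single new $K_{2,n}$-drawing for $(p_1,p_k)$, whereas you superpose all $k-1$ copies of $K_{2,n}$ simultaneously.

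This difference affects only the swap-safety argument.  In the paper's inductive setup, the only multi-crossings involve colour $k$; any edge of a third colour $i'\in\{2,\ldots,k-1\}$ already crosses the $e_i$-side of a minimal lens at most once because $D_{k-1}$ is a valid outer drawing.  In your setup a third colour $\ell\notin\{1,i,j\}$ may cross both boundary arcs many times, so your sentence ``any edge of a different color crossing the lens boundary on the $i$-side must cross it again on the $j$-side'' needs the qualifier you supply at the end: the lens must be innermost over \emph{all} colour pairs.  With that, each maximal arc of $e_\ell$ inside the lens has its two endpoints on different boundary arcs (otherwise a smaller lens sits inside), and since these inside-arcs pair up the boundary crossings, $e_\ell$ hits the two sides equally often; the swap is then genuinely local.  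In particular same-colour edges never enter an innermost lens, and colour-$1$ crossings are preserved exactly, so all the hypotheses needed to invoke Propositions~\ref{prop:v-lens} and~\ref{prop:v,w-lens} on the restriction to $(1,i,j)$ survive.  Both routes reach the same conclusion; the paper's induction just buys a cleaner one-line justification of the swap.
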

\begin{proof}
  We proceed by induction, suppose that the result holds for some $k$ and prove that the result holds for $k+1$.
  The base cases for $k=2,3$ were proved in the previous sections.

  From the induction hypothesis, there exists a drawing $D_{k-1}$ of $K_{k-1,n}$ realizing the types involving vertices
  $p_1,p_2,\ldots ,p_{k-1}$ in $P$. Similarly, there exists a drawing $D_k$ of $K_{2,n}$ realizing the
  types involving the two vertices $p_1$ and $p_k$.

  Similarly to what we did in the proof of Theorem~\ref{thm:consist-3}, we can superpose the two drawings in such a way
  that the drawings of the two stars $K_{1,n}$ corresponding to $p_1$ match.
  More precisely, let the drawing $D_{k-1}$ live on plane $Z_{k-1}$ and $D_k$ live on plane $Z_k$.
  Consider a fixed homeomorphism $\phi$ between the planes. There is a homeomorphism $\psi: Z_{k-1} \to Z_{k-1}$
  such that mapping $D_{k-1}$ via $\phi\circ\psi$ to $Z_k$ yields a superposition of the two drawings with the following properties:
{
\Bitem Corresponding vertices are mapped onto each other.  
\Bitem The stars of $p_1$ are mapped onto each other, i.e., the edges at $p_1$ of the two drawings are represented by the same curves.
\Bitem At each vertex $v\in V$ the rotation is correct, i.e., we see the edges to $p_1,p_2,\ldots ,p_k$ in clockwise order.
\Bitem The drawing has no touching edges, i.e., when two edges meet they properly cross in a single point. 
}\par

\smallskip
\ni
The drawing $D$ obtained by superposing $D_{k-1}$ and $D_k$ is a drawing of $K_{k,n}$, possibly with some multiple crossings.
We color the edges of $D$ according to the vertex $p_i\in P$ it is incident to, using colors $1,2,\ldots ,k$.
Each color class of edges is a non-crossing star, and the parity of the number of crossings between two edges of distinct colors is prescribed by the type assignments.
Furthermore, since $D_{k-1}$ and $D_k$ are proper outer drawings, all pairs of edges crossing more than once have respective colors $i$ and $k$ for some $i\in \{2,3,\ldots ,k-1\}$.

% now perform uncrossings iteratively to get D'
Consider all lenses formed by an edge of color $k$ and another edge of color $i\in \{2,3,\ldots ,k-1\}$.
Such a lens is said to be empty whenever it does not contain any vertex from $V$. It is inclusionwise minimal whenever it does not fully contain any other lens.
Consider an empty inclusionwise minimal lens formed by an edge $e_k(v)$ of color $k$ and another edge $e_i(w)$ of color $i$.

If an edge of color $i'\in \{1,2,3,\ldots ,k-1\}\setminus \{i\}$ intersects the lens, it can only cross $e_i(w)$ once.
Similarly, it cannot cross the boundary of the lens on $e_k(v)$ more than once, either because $i'=1$, or because otherwise the lens would not be minimal.
Another edge of color $k$ cannot intersect the lens, because it could only do so by intersecting $e_i(w)$ at least twice, contradicting the minimality of the lens.

Hence all edges intersecting the lens intersect its boundary exactly once on each edge.
Therefore we can safely get rid of the two crossings by making $e_k(v)$ and $e_i(w)$ switch sides.
The {\em switch at a lens} does not change the type assignment. We can perform this iteratively until no empty lens is remaining.
We call the final drawing $D'$. This drawing has the property that every lens contains a vertex.

% suppose D' has a lens - this concerns three edges 1,i,k
We now wish to show that $D'$ has no lens. For the sake of contradiction, suppose that there is a lens formed by $e_i(v)$ and $e_k(w)$ in $D'$.
We can apply our previous analysis for the case $k=3$ on the restriction $D''$ of $D'$ formed by the edges of color $1,i$, and $k$, letting red$=1$, green$=i$, and blue$=k$.
In particular Proposition~\ref{prop:v,w-lens} holds, and $D''$ has no lens. Therefore, neither does $D'$, which is a suitable outer drawing of $K_{k,n}$. 
\end{proof}

This yields the following corollary for AT-graphs.

\begin{corollary}[AT-graph realizability]
  There exists an $O(k^3n^2 + k^2n^4)$ algorithm for deciding the existence of an outer drawing of an AT-graph whose underlying graph is of the form $K_{k,n}$.
\end{corollary}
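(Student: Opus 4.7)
The plan is to mirror the strategy used for the $k=2$ and $k=3$ cases, making use of the consistency theorem just proved. The algorithm reconstructs the type assignments from the given AT-graph and then verifies the two conditions of the theorem.

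First I would reconstruct the assignments $T_{ij}$ from the AT-graph: for each pair $(p_i, p_j)$ with $1 \leq i < j \leq k$ and each pair $(a, b) \in V$, the induced sub-AT-graph on $\{p_i, p_j, a, b\}$ is a $K_{2,2}$ AT-graph, and from its crossing set one reads off the type in $\{A, B, N\}$ using Figure~\ref{fig:threetypes-2}. As already observed for the $k=2$ case, the three types precisely prescribe which pairs of non-adjacent edges cross, so this correspondence is a constant-time per-pair lookup. Building all tables takes $O(k^2 n^2)$ time.

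Next I would check the first condition of the theorem: for each of the $\binom{k}{2}$ pairs $(p_i, p_j)$, verify that $T_{ij}$ satisfies the triple and quadruple rules of Theorem~\ref{thm:consist-2}. Each rule is a constant-time check on a triple or quadruple of vertices of $V$, yielding $O(n^4)$ per table and $O(k^2 n^4)$ overall. Then I would check the second condition: for each of the $\binom{k}{3}$ triples of vertices in $P$ and each of the $\binom{n}{2}$ pairs in $V$, verify that the three induced projections $T_1, T_2, T_3$ match one of the six rows of Table~\ref{tab:legal-projections}. This is a constant-time check per triple-pair, giving $O(k^3 n^2)$.

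Summing the three phases yields the claimed $O(k^3 n^2 + k^2 n^4)$ running time. Correctness is immediate from the theorem, since the checks are exactly the stated necessary and sufficient conditions for an outer drawing realizing the AT-graph to exist. There is no real obstacle here, as the heavy lifting has already been done in the consistency theorem; the only substantive observation is that the type information is fully recoverable from the crossing set, which was already clear from the $k=2$ and $k=3$ analyses.
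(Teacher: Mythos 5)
Your proposal is correct and follows essentially the same approach as the paper: reconstruct the tables $T_{ij}$ from the crossing set, check the triple and quadruple rules for each of the $O(k^2)$ tables in $O(k^2n^4)$ total, and check every 5-tuple of three outer and two inner vertices against Table~\ref{tab:legal-projections} in $O(k^3n^2)$. The only (immaterial) difference is the order in which the two checks are described.
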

\begin{proof}
  We know that the three types of pairs $A,B,N$ (see Figure~\ref{fig:sixtypes}) exactly prescribe which pairs of edges cross, hence given the set of crossing pairs, we can reconstruct the type assignments $T_{ij}$. From these data, we can consider every 5-tuple $(p_a,p_b,p_c,u,v)$, with $p_a,p_b,p_c\in P$ and $u,v\in V$, and check that the corresponding drawing of the induced $K_{3,2}$ graph is one of the drawings given in Table~\ref{tab:legal-projections}. This can be done in time $O(k^3n^2)$. If this is correct, we can then check that every triple is legal and that the quadruple rule is satisfied for each $T_{ij}$. This can be done in time $O(k^2n^4)$, hence the result.
\end{proof}

In fact, this implies that checking consistency on all 6-tuples of vertices is sufficient. This matches the result of Kyn{\v{c}}l for complete graphs~\cite{K13}.

%%%%%%%%%%%%%%%%%%%%%%%%%%%%%%%%%%%%%%%%%%%%%%%%%%%%%%%%%%%%%%%%%%%%%
\section{Extendable and Straight-line Outer Drawings}\label{sec:extendable}

A simple topological drawing of a graph is called {\em extendable} if its edges can be extended into a pseudoline arrangement.
We consider the problem of the existence of an extendable outer drawing for a given rotation system. We also further restrict to straight-line drawings.
We exploit a connection between topological drawings and generalized configuration of points which was also used by Kyn{\v{c}}l~\cite{K13}.
The main result uses the notion of {\em suballowable sequence}, as defined by Asinowski~\cite{A08}.

\subsection{Extendable Outer Drawings}

We will use the notion of {\em allowable sequence}. These sequences were introduced by Goodman and Pollack in a series of papers~\cite{GP80,GP84a,GP91} as an abstraction
of the sequences of successive permutations obtained by projecting a set of $n$ points in $\mathbb{R}^2$ on a rotating line. A good summary of this and related notions can be
found in Goodman's article in the Handbook of Discrete and Computational Geometry~\cite{handbook}.
We restrict to simple allowable sequences, corresponding to point sets with no three points on a line and such that no two pair of points determine the same slope.

A sequence $\pi_1, \pi_2, \ldots, \pi_{m+1}, \ldots, \pi_{2m}$ of permutations of an $n$-element set, with $m={n\choose 2}$, is an {\em allowable sequence}
if for all $i\in [m]$, $\pi_i$ and $\pi_{m+i}$ are reverse of each other, each consecutive pair of permutations differ only by a single adjacent transposition, and every pair of element is
reversed exactly twice in the sequence.

We will also need the following additional definition:
  A  {\em Generalized configuration} is a pair $(C,A)$ such that $C$ is a collection of $n$ points in the projective plane $\mathbb{P}^2$ and $A$
  is an arrangement of pseudolines, one of which is the {\em pseudoline at infinity} $L_{\infty}$, satisfying the following conditions:
  \begin{itemize}
  \item every pair of points of $C$ lie on a pseudoline of $A$,
  \item each pseudoline in $A$ contains exactly two points of $C$, except $L_{\infty}$,
    \item $L_{\infty}$ does not contain any point of $C$.
  \end{itemize}

We say that an allowable sequence on an $n$-element set is {\em realized} by a generalized configuration on $n$ points when the
successive permutations of the sequence correspond to successive segments of the pseudoline $L_{\infty}$, and the
transposition between a pair of permutations corresponds to the intersection of $L_{\infty}$ with the pseudoline containing the two corresponding points.
In particular, we observe that if a permutation $\pi$ is realized at some point $p\in L_{\infty}$, then $p$ can be connected by additional pseudolines
to the $n$ other points in the radial order prescribed by $\pi$.

The following result is known. 
\begin{theorem}[Goodman and Pollack~\cite{GP84a}, Thm. 4.4]
  \label{thm:gcas}
Every allowable sequence can be realized by a generalized configuration.
\end{theorem}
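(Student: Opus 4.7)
The plan is to realize the allowable sequence as the combinatorial sweep of a rotating transversal around a configuration of $n$ points in $\mathbb{RP}^2$. First I would draw $L_\infty$ as a simple closed curve (a pseudoline in $\mathbb{RP}^2$) and mark $2m$ points on it, one for each consecutive transposition in $\pi_1 \to \pi_2 \to \cdots \to \pi_{2m}$, in cyclic order. The axiom that $\pi_{m+t}$ is the reverse of $\pi_t$ guarantees that the marked points pair up antipodally into $m$ projective points on $L_\infty$, one per element-pair $(i,j)$. The pseudoline $\ell_{ij}$ of the generalized configuration will have to meet $L_\infty$ precisely at this marked point.

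Next I would build the pseudolines $\ell_{ij}$ and points $p_1,\ldots,p_n$ by induction on $n$. The case $n=2$ is trivial. Assuming the result for $n-1$, restrict the allowable sequence to the first $n-1$ elements; deleting from the sequence the transpositions that involve $n$ yields a sub-sequence that is easily checked to be again an allowable sequence on $n-1$ elements, so by induction we have a generalized configuration on $p_1,\ldots,p_{n-1}$ with pseudoline-at-infinity $L_\infty$ and pseudolines $\ell_{ij}$ for $i<j<n$. To insert element $n$, read off from the full sequence the trajectory of $n$ through the successive permutations; this prescribes a cell of the existing arrangement in which to place $p_n$, and then for each $i<n$ the new pseudoline $\ell_{ni}$ is drawn from $p_n$ to $p_i$ and prolonged to pass through the marked point on $L_\infty$ dictated by the pair $(n,i)$.

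The key step is to verify that this extension produces a valid pseudoline arrangement, i.e., that each newly introduced $\ell_{ni}$ crosses every existing pseudoline exactly once. This reduces to a triangle condition: for each pair $i<j<n$, the three pseudolines $\ell_{ni},\ell_{nj},\ell_{ij}$ together with $L_\infty$ must have consistent pairwise crossings. The axioms of the allowable sequence supply exactly the local data needed here — the cyclic order in which the three pairs $(n,i)$, $(n,j)$, $(i,j)$ get reversed along the sequence determines a consistent orientation of the triangle formed by the three pseudolines, and the requirement that each pair is reversed exactly twice forces each pair of pseudolines to meet in a single projective point. The main obstacle will be carrying out this local-to-global argument rigorously: one must either produce an explicit routing of $\ell_{ni}$ through the cells of the partial arrangement without creating spurious crossings (a topological homotopy argument), or prove it via the dual \emph{wiring diagram} in which the $n$ wires represent the elements themselves, each pair of wires crosses exactly twice, and the antipodal identification on $L_\infty$ collapses these two crossings to the single intersection point $\ell_{ij}\cap L_\infty$. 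Either approach yields the required generalized configuration.
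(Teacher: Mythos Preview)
The paper does not give its own proof of this theorem; it simply cites Goodman and Pollack~\cite{GP84a} and uses the result as a black box. So there is no in-paper argument to compare your proposal against.

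As for the proposal itself: the inductive scheme you sketch is plausible in spirit but leaves the hardest part unfinished. When you insert element~$n$, you are not adding one pseudoline but $n-1$ new pseudolines $\ell_{n1},\ldots,\ell_{n,n-1}$, each constrained to pass through the common point $p_n$, through the already-fixed point $p_i$, and through a prescribed point on $L_\infty$. Checking a ``triangle condition'' on each triple $\{i,j,n\}$ gives only local consistency; what you actually need is a global routing of all $n-1$ curves simultaneously so that (i) they form a pencil through $p_n$, (ii) each crosses every old $\ell_{ij}$ exactly once, and (iii) they cross each other only at $p_n$. Your text acknowledges this (``the main obstacle will be carrying out this local-to-global argument rigorously'') but does not supply the argument, and it is not a formality: the existence of a single cell into which $p_n$ can go so that all $n-1$ routings succeed at once is precisely the content of the theorem.

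The second route you mention---working in the dual via a wiring diagram---is the one Goodman and Pollack actually take, and it is considerably cleaner: the allowable sequence \emph{is} the crossing data of an arrangement of $n$ pseudolines (one per element, not one per pair), and realizing that arrangement as a wiring diagram is immediate from the adjacent-transposition axiom. One then dualizes back to get the points $p_i$ and the connecting pseudolines $\ell_{ij}$. If you want a self-contained proof, pursue that line rather than the primal induction.
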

This directly yields that allowable sequences exactly encode which are the achievable rotation systems for our drawings.
We first define {\em suballowable sequences}, see Asinowski~\cite{A08}:
A sequence  $\pi_1, \pi_2,\ldots ,\pi_k$ of permutations of an $n$-element set is said to be a {\em suballowable sequence} if there exists an allowable sequence admitting  $\pi_1, \pi_2,\ldots ,\pi_k$ as a subsequence.

\begin{theorem}
Given a sequence $\pi_1, \pi_2,\ldots ,\pi_k$ of permutations of $n$ elements, there exists an extendable outer drawing of $K_{k,n}$ with this rotation system if and only if $\pi_1, \pi_2,\ldots ,\pi_k$ is a suballowabe sequence.
\end{theorem}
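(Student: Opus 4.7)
The plan is to prove the two implications separately, leveraging Theorem~\ref{thm:gcas} together with the observation (stated right after it) that a point on $L_\infty$ realizing permutation $\pi$ can be joined by pseudolines to all $n$ configuration points in the radial order prescribed by $\pi$.

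For the sufficiency direction ($\Leftarrow$), I would start from the assumption that $(\pi_1, \ldots, \pi_k)$ is suballowable, extend it to an allowable sequence, and invoke Theorem~\ref{thm:gcas} to realize that allowable sequence by a generalized configuration $(C, A)$ with $|C|=n$. I would then pick $k$ points $p_1, \ldots, p_k$ on $L_\infty$ in cyclic order, one in each arc of $L_\infty$ whose realized permutation equals $\pi_i$. Using the observation above, I would add, iteratively over $i=1,\ldots,k$, the $n$ pseudolines joining $p_i$ to each point of $C$ in the order $\pi_i$, and declare the segment of each such pseudoline from $p_i$ to the corresponding inner point to be an edge of the drawing. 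The key verification is that inserting the radial pseudolines for one $p_i$ does not obstruct inserting those for the next, which I would handle by showing that each new pseudoline crosses every previously inserted one exactly once, using the radial ordering and the fact that the starting arrangement was already a pseudoline arrangement.

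For the necessity direction ($\Rightarrow$), I would begin with an extendable outer drawing of $K_{k,n}$ and its extension $A^*$ to a pseudoline arrangement in the projective plane, treating the outer boundary (containing $p_1,\ldots,p_k$ in cyclic order) as the pseudoline $L_\infty$. At each $p_i$, exactly $n$ pseudolines of $A^*$ meet $L_\infty$, with counterclockwise radial order $\pi_i$. My target is to upgrade $A^*$ into (the arrangement part of) a generalized configuration on $V$ by adjoining, for every pair $u,v\in V$, a pseudoline $\ell_{uv}$ through $u$ and $v$ which crosses every existing pseudoline exactly once and meets $L_\infty$ at the single point where the relative ordering of $u$ and $v$ flips among the $\pi_i$. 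Once the family $\{\ell_{uv}\}$ is in place, $(V,\{\ell_{uv}\}\cup\{L_\infty\})$ is a generalized configuration whose associated allowable sequence, when read off $L_\infty$, realizes $\pi_i$ between the transpositions adjacent to $p_i$; this exhibits $(\pi_1,\ldots,\pi_k)$ as the desired subsequence.

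I expect the main obstacle to be in the necessity direction, specifically the construction of the auxiliary pseudolines $\ell_{uv}$ compatible with $A^*$ and with the rotation system. I would attack this either by building the $\ell_{uv}$ inductively, using at each step that the relative order of $u$ and $v$ in the rotations $\pi_1,\ldots,\pi_k$ fixes where the new pseudoline must cross $L_\infty$ while the existing pseudolines through $u$ and through $v$ in $A^*$ constrain its route in the interior, or else by perturbing $A^*$ so that each of the $k$ multi-crossings on $L_\infty$ at $p_1,\ldots,p_k$ is replaced by a tight cluster of $n$ simple crossings arranged according to $\pi_i$, and then reading the allowable sequence directly from the perturbed simple arrangement and checking that the permutation realized inside the $i$-th cluster is precisely $\pi_i$.
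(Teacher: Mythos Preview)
Your sufficiency direction matches the paper's, and your attention to whether inserting the bundle of pseudolines at $p_i$ obstructs the bundle at $p_{i+1}$ is a detail the paper skips; it is resolved cleanly by iterating Levi's enlargement lemma rather than by the ad hoc crossing count you suggest.

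For the necessity direction the paper's key tool is again \emph{Levi's enlargement lemma}: once the $kn$ edges are extended to pseudolines and $L_\infty$ is added, one simply invokes Levi repeatedly to produce, for each pair $u,v\in V$, a pseudoline $\ell_{uv}$ through $u$ and $v$ compatible with the existing arrangement. That is all that is needed; your approach~(1) amounts to redoing Levi by hand. Moreover, you phrase the crossing of $\ell_{uv}$ with $L_\infty$ as a constraint to be \emph{enforced}, whereas it is \emph{automatic}: the three pseudolines $e_{iu}$, $e_{iv}$, $\ell_{uv}$ meet pairwise only at $p_i$, $u$, $v$, so the side of $\ell_{uv}$ on which $p_i$ lies is already determined by the radial order of $u,v$ at $p_i$, i.e.\ by $\pi_i$. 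Hence the allowable sequence read along $L_\infty$ realizes $\pi_i$ at $p_i$ without any extra work. Trying to prescribe the crossing point in advance would require a stronger form of Levi that does not hold in general.

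Your approach~(2) has a genuine gap. Perturbing the $k$ multi-crossings of the $kn$ extended edges on $L_\infty$ yields a simple arrangement of $kn$ pseudolines, but this arrangement contains no pseudoline through any pair $u,v\in V$. It is therefore not (the arrangement of) a generalized configuration on $V$, and walking along $L_\infty$ produces a sequence of permutations of $kn$ labels, not of the $n$ vertices of $V$. There is no allowable sequence on $[n]$ to read off, so the argument does not conclude. The fix is precisely to add the $\binom{n}{2}$ connecting pseudolines via Levi, after which the perturbation is unnecessary.
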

\begin{proof}
  First, suppose that the rotation system is a suballowable sequence. Then by definition there exists an allowable sequence admitting it as a subsequence. Then from Theorem~\ref{thm:gcas} there exists a generalized configuration on $n$ points realizing this sequence. The $n$ points form the set $V$. For each permutation $\pi_i$, there exists a point on $L_{\infty}$ that can be connected by additional pseudolines to the $n$ points of $V$ in the order prescribed by $\pi_i$. These $k$ points form the set $P$. By keeping only the inner segments of the pseudolines connecting every point of $P$ with all points of $V$, we obtain an extendable outer drawing satisfying our conditions. 

  Now suppose there exists an extendable outer drawing. Consider the generalized configuration obtained as follows. First, we extend the edges between the points of $P$ and $V$ into pseudolines, which is possible from the extendability of the drawing. Then we connect the points of $P$ by the pseudoline at infinity. Finally, by iteratively applying Levi's enlargement Lemma, we can also connect every pair of points in $V$ by a new pseudoline. The order in which each point of $P$ is connected to all points of $V$ realizes one permutation in the allowable sequence defined by the generalized configuration. The collection of such permutations therefore forms a suballowable sequence.
\end{proof}

Note that since suballowable sequences can be recognized efficiently (see~\cite{A08}), this directly yields a polynomial-time algorithm for deciding whether there exists an extendable outer drawing realizing a given rotation system.

\subsection{Straight-line Outer Drawings}

We have a direct equivalent statement for straight-line outer drawings.
\begin{theorem}
  Given a sequence $\pi_1, \pi_2,\ldots ,\pi_k$ of permutations of $n$ elements, there exists an straight-line outer drawing of $K_{k,n}$ with this rotation system if and only if $\pi_1, \pi_2,\ldots ,\pi_k$ is a suballowable sequence that can be realized by an actual point arrangement.
\end{theorem}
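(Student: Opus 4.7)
The plan is to mirror the structure of the proof of the previous theorem on extendable outer drawings, replacing every appeal to Theorem~\ref{thm:gcas} and Levi's enlargement lemma by the hypothesis that the allowable sequence in question is realized by a bona fide point set in $\mathbb{R}^2$. In this way, pseudolines become straight lines and the generalized configuration becomes a genuine line arrangement dual to a point configuration.

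For the sufficiency direction, I would start from a suballowable sequence that is realized by an actual point arrangement. By definition this yields $n$ points in $\mathbb{R}^2$ whose projection onto a rotating line produces an allowable sequence having $\pi_1,\ldots,\pi_k$ as a subsequence. I take these $n$ points as the set $V$. Each permutation $\pi_i$ is realized at some direction, i.e.\ at some point $p_i$ on the line at infinity, and from this direction the straight lines through $p_i$ meet the points of $V$ in the order $\pi_i$. Declaring these $k$ directions to be the vertices of $P$ and drawing the straight segments from each $p_i$ to each $v\in V$ yields a straight-line outer drawing realizing the prescribed rotations; simplicity is automatic since distinct straight lines meet at most once.

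For the necessity direction, I would start from a straight-line outer drawing of $K_{k,n}$ with rotation system $\pi_1,\ldots,\pi_k$. Since the vertices of $P$ lie on the outer boundary, a projective transformation (fixing $V$ pointwise up to an affine picture) places all of $P$ on the line at infinity, so that each $p_i$ corresponds to a direction in $\mathbb{R}^2$. Extending each straight segment $p_iv$ to a full line produces a line arrangement, and the rotating-line projection of $V$ is an allowable sequence; the $k$ snapshots taken at the directions corresponding to $p_1,\ldots,p_k$ are precisely $\pi_1,\ldots,\pi_k$ in the prescribed order, so $(\pi_1,\ldots,\pi_k)$ is suballowable and realized by the point arrangement $V$.

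The main delicate step is the necessity direction, specifically the projective normalization that moves $P$ to infinity while preserving straightness of edges and the combinatorial type of the drawing; I need to ensure that no vertex of $V$ is sent to infinity and that the cyclic order of $P$ is compatible with the ordering of directions used to index the allowable sequence. Everything else reduces to the standard duality between points in $\mathbb{R}^2$ and (simple) allowable sequences, exactly as in the extendable case, but with ``pseudoline arrangement'' replaced by ``line arrangement'' throughout.
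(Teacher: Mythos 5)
Your sufficiency direction is essentially what the paper intends (the paper states this theorem without proof, as the direct analogue of the extendable case): realize the allowable sequence by an actual point set $V$, and place each $p_i$ at infinity in the direction at which $\pi_i$ occurs; this is consistent with the paper's convention that the outer vertices may lie on the line at infinity, and with its explicit constructions such as $p_1=(-\infty,\infty)$, $p_2=(0,\infty)$, $p_3=(\infty,\infty)$.

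The necessity direction, however, has a genuine gap at exactly the step you flag as delicate. A projective transformation sends the points of one single line to the line at infinity, so it can move all of $p_1,\ldots,p_k$ to infinity only if they are collinear; in a straight-line outer drawing with finite outer vertices the $p_i$ lie on the outer boundary (e.g.\ on a circle enclosing the drawing), hence for $k\ge 3$ they are in general not collinear and no such normalization exists. (For the same reason you cannot imitate the extendable proof by adding the ``line at infinity'' through the points of $P$: a straight line cannot pass through three non-collinear points, whereas a pseudoline could.) Nor is this a removable technicality: the rotation at a finite outer vertex is the radial order of $V$ around an exterior point, and such a radial order need not occur in the circular (allowable) sequence of $V$ at all. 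For instance, with $a=(0,0)$, $b=(1,0.05)$, $c=(1.05,1)$, $d=(0,1.1)$ and the exterior point $p=(2,0.5)$, the radial order around $p$ is $b,a,d,c$ (up to reversal), and a short check of the sign conditions shows that no direction of projection orders these four points in this way. So ``extend the straight edges and read off projections of $V$'' does not establish necessity when the $p_i$ are finite. The intended, and then easy, argument takes the outer vertices at infinity, as in the paper's constructions: the rotation at $p_i$ is by definition the order of $V$ along the corresponding direction, i.e.\ a permutation of the circular sequence of $V$, and the angular order of these directions makes $\pi_1,\ldots,\pi_k$ a subsequence of that allowable sequence, which is realized by $V$ itself. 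If you wish to allow finite outer vertices in the statement, you must either show that such a drawing can be converted into one with the $p_i$ at infinity realizing the same rotation system, or exhibit a realizing point set other than $V$; the proposed projective normalization does neither.
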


This does not yield an efficient algorithm, since deciding whether a given allowable sequence can be realized by an arrangement of straight lines is a hard problem~\cite{s-splNP-91,m-ut-88,hoffmann2016thesis}.
The problem, however, is tractable in the special case where $k=3$. This is due to the fact that we can fix the orientation of the three sets of edges.
There are two cases to consider. If the suballowable sequence $\pi_1,\pi_2,\pi_3$ can be extended into an allowable sequence such that the three permutations $\pi_i$ appear in the first $m$ permutations of the whole sequence, then there exists a straight-line outer drawing with $p_1=(-\infty, \infty), p_2=(0,\infty)$, and $p_3=(\infty, \infty)$. This corresponds to the case where the three projection directions span less than a half-circle. Otherwise there exists a drawing with $p_1=(\infty,\infty)$, $p_2 = (0,-\infty)$ and $p_3 = (-\infty,0)$.

%, we can fix the orientations of the three sets of edges, and the problem is tractable.

We first give an example showing that there are rotation systems for $k=3$ that are suballowable sequences but do not admit a straight line outer drawing.
Up to renaming, this example is the same as Asinowski's example of nonrealizable suballowable sequence (\cite{A08}, Proposition 8).
We give the proof for completeness. Furthermore, our proof directly indicates how to solve the straight-line realizability problem in the case $k=3$.

\begin{theorem}\label{thm:nonstretchable}
The triple of rotations given by:
\begin{eqnarray*}
\pi_1 &= (q_1,q_6,q_5,q_2,q_3,q_4)\\
\pi_2 &= (q_3,q_2,q_1,q_4,q_5,q_6)\\
\pi_3 &= (q_5,q_4,q_3,q_6,q_1,q_2)
\end{eqnarray*}
is a suballowable sequence, but is not realizable with straight lines.
\end{theorem}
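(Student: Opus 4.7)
The plan is to address the two claims separately, in order.

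For suballowability of $(\pi_1,\pi_2,\pi_3)$, I would exhibit an allowable sequence having these three permutations as a subsequence. The pairs that must be reversed between $\pi_i$ and $\pi_{i+1}$ are exactly the inversions of $\pi_{i+1}\pi_i^{-1}$; I would check that the three resulting sets of pairs are pairwise disjoint so that no pair is reversed twice in the first half of the sequence, and then order each set as a valid sequence of adjacent transpositions (a standard sorting network argument on the intermediate permutations). The remaining pairs are swapped before $\pi_1$ and after $\pi_3$ to reach the reverse of the starting permutation after exactly $\binom{6}{2}=15$ transpositions. A useful observation is the cyclic symmetry $q_i \mapsto q_{i+2 \bmod 6}$, which sends $\pi_1 \mapsto \pi_2 \mapsto \pi_3 \mapsto \pi_1$, so one interpolation can be built and then rotated to handle the others.

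For the impossibility of a straight-line realization, I would argue by contradiction. Assume a straight-line outer drawing with rotations $(\pi_1,\pi_2,\pi_3)$ exists. By the dichotomy stated just before the theorem, we may place $p_1,p_2,p_3$ at three explicit points at infinity (in one of the two configurations). The three permutations then prescribe, for three distinct projection directions, the full linear order of $q_1,\ldots,q_6$. Translating these data into inequalities on the slopes of the $\binom{6}{2}=15$ pair-connecting lines, I would identify a subconfiguration forced to satisfy the hypotheses of Pappus's theorem. The extra collinearity produced by Pappus then contradicts one of the strict orderings imposed by the three permutations (equivalently, it forces a point to lie exactly on a line on which, by general position arguments inherent in a simple drawing, it cannot lie).

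The hard part is the Pappus-style obstruction in the second half. Concretely, the cyclic symmetry of the three rotations suggests that the six points partition into two triples, most likely $\{q_1,q_3,q_5\}$ and $\{q_2,q_4,q_6\}$, which play the roles of the two ``lines'' of a Pappus configuration; the three distinguished ``cross-lines'' should correspond to the three pairs $(q_1,q_4),(q_3,q_6),(q_5,q_2)$ (or some analogous cyclic choice). I would therefore organize the slope analysis around these three cross-lines, first reading off from each $\pi_i$ the relative order in which they are encountered at $p_i$, then showing that these orderings jointly force three auxiliary lines to be concurrent, while the permutation data simultaneously forbid that concurrence. The delicate bookkeeping of which slope inequality comes from which adjacent pair in which $\pi_i$ is where I expect most of the work to lie; this is essentially Asinowski's argument specialised to the present labelling, and I would follow his proof structure closely.
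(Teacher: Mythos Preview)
Your plan for suballowability is fine in principle, though heavier than necessary: the paper simply exhibits an outer drawing (in a figure), which by the earlier characterisation certifies that $(\pi_1,\pi_2,\pi_3)$ is suballowable. Your cyclic symmetry observation $q_i\mapsto q_{i+2\bmod 6}$ is correct and would indeed streamline an explicit construction, but you do not need one.

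For non-stretchability your route diverges from the paper's and has a real gap. The paper does \emph{not} invoke Pappus at all in the proof; instead it fixes the three points of $P$ at specific points at infinity (so that the three rotations become orderings of the six points with respect to the linear forms $y-x$, $x$, and $y$), writes down nine strict linear inequalities on the coordinates $(x(q_i),y(q_i))$ read off from adjacent pairs in $\pi_1,\pi_2,\pi_3$, and sums a suitable subset of them to obtain $0<0$. This is a two-paragraph linear-programming infeasibility certificate.

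Your proposed Pappus argument, by contrast, is not obviously sound as stated. To apply Pappus you would need two triples of \emph{collinear} points, and nothing in the rotation data forces $\{q_1,q_3,q_5\}$ or $\{q_2,q_4,q_6\}$ to be collinear; the points are in general position in any simple straight-line drawing. The relationship to Pappus that the paper mentions afterwards is structural (in the tropical/dual sense, replacing each $q_\ell$ by a triple incidence of lines), not a direct application of the Euclidean Pappus theorem to the six points. If you follow Asinowski's actual proof you will find it is the same linear-inequality contradiction, not a Pappus collinearity argument. So drop the Pappus layer: once you have written the slope/coordinate inequalities you already mention, look for a positive combination summing to a contradiction; the cyclic symmetry you noticed tells you the certificate should use one inequality from each ``orbit'' under the $\mathbb{Z}/3$ action.

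One further point: you correctly flag the two-case dichotomy for placing $p_1,p_2,p_3$, but note that the paper's proof treats only the second placement explicitly; you should check (it is easy) that the first placement is excluded for this particular triple before running the inequality argument.
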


%%%%%%%%%%%%%%%%%%%%%%%%%%%%%%%%%%%%%%%%%%%%%%%%%%%
%%
% in einem figure environment mit caption
   \calc_figscale{45}
    \begin{figure}[htb]
    \centerline{\input{\path/non-stretch-small.pstex_t}}
    \caption{\label{fig:non-stretch-small}}
    \end{figure}
    VC
{An outer drawing of the system from Theorem~\ref{thm:nonstretchable}.}
%%
%%%%%%%%%%%%%%%%%%%%%%%%%%%%%%%%%%%%%%%%%%%%%%%%%%%

\begin{proof}
Figure~\ref{fig:non-stretch-small} shows an outer drawing of the
system. Suppose that the system is stretchable, then there is a
drawing with $p_1=(\infty,\infty)$, $p_2 = (0,-\infty)$ 
and $p_3 = (-\infty,0)$, i.e., with horizontal rays, vertical rays,
and rays of slope 1. The coordinates of a vertex $v$ are denoted
$(x(v),y(v))$.

From $\pi_1$ we deduce the inequality $y(q_6) - x(q_6) < y(q_1) - x(q_1)$.
Similarly $y(q_2) - x(q_2) < y(q_5) - x(q_5)$ and $y(q_4) - x(q_4) < y(q_3) - x(q_3)$.
We will refer to these inequalities as I1, I2, and I3 in this order.

From $\pi_2$ we get $x(q_6) < x(q_5)$, $x(q_4) < x(q_1)$, and $x(q_2) < x(q_3)$.
Finally, from $\pi_3$ we get $y(q_1) < y(q_2)$, $y(q_3) < y(q_6)$, and $y(q_5) < y(q_4)$.
These six inequalities are the {\em helper inequalities}.

Adding I1 and I2 and rearranging terms we obtain 
$$ y(q_6) + y(q_4) + x(q_1) + x(q_3) < y(q_1) + y(q_3) + x(q_6) + x(q_4)$$
Replacing $y(q_6)$, $y(q_4)$, and $x(q_3)$ on the left side and
$y(q_3)$, $x(q_6)$, and $x(q_4)$ on the right side using the six
helper inequalities we obtain $ y(q_5) + x(q_2)  < y(q_2) + x(q_5)$.
This contradicts I3.
\end{proof}

A straight line outer drawing of such a rotation system
can be regarded as a {\em tropical arrangement of lines}.
We refer to~\cite{ms-itrop-15} for an
introduction to tropical geometry.  Such outer drawings can consequently
be regarded as {\em tropical arrangements of pseudolines}.
Restated as a result in tropical geometry
Theorem~\ref{thm:nonstretchable} tells us that there are
non-stretchable tropical arrangements of pseudolines.

The classical construction of a simple non-stretchable example of
pseudolines due to Ringel is based on the 9 lines of a Pappus
configuration. In~\cite{gst-fstrop-05} tropical versions of Pappus are
discussed. They note that a configuration on 9 tropical lines which
can be obtained from the configuration of
Theorem~\ref{thm:nonstretchable} by replacing points $q_\ell$ by
triple incidences of lines $a_i,b_j,c_k$ with $\{i,j,k\} = \{1,2,3\}$
is a valid tropical version of Pappus' Theorem.

The stretchability problem for a tropical arrangements of pseudolines is polynomial-time solvable, as it boils down to the feasability of a linear program.
The proof of Theorem~\ref{thm:nonstretchable} indicates how to set up such a linear program.

%%%%%%%%%%%%%%%%%%%%%%%%%%%%%%%%%%%%%%%%%%%%%%%%%%%%%%%%%%%%%%%%%%%%%%%%%%%%%%
\section{Open problems}

A problem that remains for $k=2$ is to find a good description of all outer drawings 
for a given pair $(\pi_1,\pi_2)$ of rotations. We state the same problem differently:
\begin{problem}\label{prob:2all-cons-ass}
  For a given pair $(\pi_1,\pi_2)$ of permutations characterize all
  consistent assignments of types with the property that if $a<b$ in
  both permutations, then $\type(a,b)\in \{A,B\}$, and if $a<_1b$ and
  $b<_2 a$, then $\type(a,b)=N$.
\end{problem} 
The simpler case where $\pi_1 = \pi_2$ (hence all types are in $\{A,B\}$)
is covered by Corollary~\ref{cor:countunif}, which provides a bijection
with combinatorial structures counted by Schr\"oder numbers.\\

The following natural question also remains unsolved:
\begin{problem}
For a given triple $(\pi_1,\pi_2,\pi_3)$ of rotations, can we decide
in polynomial time whether there exists a consistent assignment of types?
\end{problem}
The problem could be generalized to arbitrary $k$: how hard is it to decide
whether there exists an outer drawing of $K_{k,n}$ for a given rotation system?
We answered this question partially, by showing that it is polynomial-time
decidable when the drawing is further required to be extendable.\\

We also note that outer drawings are not really enlightening with respect to crossing numbers, as minimally crossing outer drawings of $K_{k,n}$ are easy to describe. It can be checked that by drawing the $n$ vertices of $V$ on a line so that $\lfloor k/2\rfloor$ vertices of $P$ see them in some order, and the other $\lceil k/2\rceil$ vertices see them in the reversed order, we achieve the minimum number of crossings. This minimum has value exactly ${n\choose 2} \left( {k\choose 2} - \lfloor k/2\rfloor \lceil k/2\rceil \right)$.

Further insight into crossing numbers could be gained by considering the obvious remaining problem of generalizing our analysis to general simple topological drawings, dropping the outer property.

\section*{Acknowledgments}
This work was started at the {\it Workshop on Order Types, Rotation
  Systems, and Good Drawings} in Strobl am Wolfgangsee (Austria) in 2015. We thank the organizers
from TU Graz as well as the participants for fruitful discussions.
Special thanks go to Pedro Ramos who suggested to look at complete bipartite graphs.

S.~Felsner also acknowledges support from DFG grant Fe 340/11-1.

% *************************************************************

\bibliography{pseudodraw}
\bibliographystyle{my-siam}

%
%%%%%%%%%%%%%%%%%%%%%%%%%%%%%%%%%%%%%%%%%%%%%%%%%%%%%%%%%%%%%%%%%%%%%%%%%%%%%%%%%%%%
\end{document}